\newtheorem{remark}{Remark}
\newtheorem{challenge}{Challenge}
\newcommand{\dis}{\stackrel{d}{\sim}}
\newcommand{\eqla}{\stackrel{(a)}{=}}
\newcommand{\eqlb}{\stackrel{(b)}{=}}
\newcommand{\appa}{\stackrel{(a)}{\approx}}
\newtheorem{theorem}{Theorem}
\newtheorem{lemma}{Lemma}
\newtheorem{corollary}{Corollary}
\newtheorem{proposition}{Proposition}
\newcommand{\define}{\stackrel{\Delta}{=}}
\newcommand{\papertitle}{Analysis and Optimization of Inter-tier Interference Coordination in Downlink Multi-Antenna HetNets with Offloading}
\begin{document}

\title{\papertitle}
\author{Yueping Wu$^{*}$,\hspace{5mm}Ying Cui$^{\ddag}$,\hspace{5mm}Bruno Clerckx$^{*\dagger}$\\
\vspace{2mm}
{\footnotesize $^{*}$Dept. of Electrical and Electronic Engineering, Imperial College London, $^{\dagger}$School of Electrical Engineering, Korea University\\
$^{\ddag}$Dept. of Electronic Engineering, Shanghai Jiao Tong University\\
}
\thanks{The work of Y.~Wu and B.~Clerckx was partially supported by the Seventh Framework Programme for Research of the European Commission under grant number HARP-318489. This work has been submitted in part to IEEE ICC 2015.}}
\maketitle \setcounter{page}{1} 

\begin{abstract}
Heterogeneous networks (HetNets) with offloading is considered as an effective way to meet the high data rate demand of future wireless service. However, the offloaded users suffer from strong inter-tier interference, which reduces the benefits of offloading and is one of the main limiting factors of the system performance. In this paper, we investigate an interference nulling (IN) scheme in improving the system performance by carefully managing the inter-tier interference to the offloaded users in downlink two-tier HetNets with multi-antenna base stations. Utilizing tools from stochastic geometry, we first derive a tractable expression for the rate coverage probability of the IN scheme. Then, by studying its order, we obtain the optimal design parameter, i.e., the degrees of freedom that can be used for IN, to maximize the rate coverage probability. Finally, we analyze the rate coverage probabilities of the simple offloading scheme without interference management and the multi-antenna version of the almost blank subframes (ABS) scheme in 3GPP LTE, and compare the performance of the IN scheme with these two schemes. Both analytical and numerical results show that the IN scheme can achieve good performance gains over both of these two schemes, especially in the large antenna regime. 
\end{abstract}
\vspace{1mm}
\begin{keywords}
Heterogeneous networks, offloading, multiple antennas, inter-tier interference coordination, rate coverage probability, stochastic geometry, optimization.
\end{keywords}


\section{Introduction}\label{sec:intro}
The modern wireless networks have seen a significant increase in the number of users and the scope of high data rate applications. The growth of data rate demand is expected to continue for at least a few more years \cite{cisco14}. The conventional cellular solution, which comprises of high power base stations (BSs), each covering a large cellular area, will not be able to scale with the increasing data rate demand. A promising solution is the deployment of low power small cell nodes overlaid with high power macro-BSs, so called heterogeneous networks (HetNets). HetNets are capable of aggressively reusing existing spectrum assets to support high data rate applications. Due to the large power at macro-BSs, most of the users intend to connect with macro-BSs, which causes the problem of \emph{load imbalancing} \cite{jo12}. To address load imbalancing, some users are offloaded to the lightly loaded small cells via a bias factor \cite{andrews14}. The performance of HetNets with offloading has been investigated in various literature (see e.g., \cite{jo12,singh13may}). However, in HetNets with offloading, the offloaded users (i.e., the users offloaded from the macro-cell tier to the small-cell tier via bias) have degraded signal-to-interference ratio (SIR), which is one of the limiting factors of the network performance. Interference management techniques are thus desired in HetNets with offloading. One such technique is almost blank subframes (ABS) in 3GPP LTE  \cite{damnjanovic11}. In ABS, (time or frequency) resource is partitioned, whereby the offloaded users and the other users are served using different portions of the resource. The performance of ABS in HetNets with offloading was analyzed in \cite{singh13} using tools from stochastic geometry. Another interference management technique was proposed for single-antenna HetNets in \cite{sakr14} to reduce the interference to each offloaded user by cooperation between its nearest macro-BS and nearest pico-BS. Under the scheme in \cite{sakr14}, the scheduled offloaded user and the users of its nearest macro-BS cannot be served using the same resource. Note that \cite{singh13,sakr14} considered single-antenna HetNets, and both schemes studied in \cite{singh13,sakr14} may not fully utilize the system resource. 

Deploying multiple antennas in HetNets can further improve data rates for future wireless service. With multiple antennas, more effective interference management techniques can be implemented. For example, references \cite{Hosseini13,Kountouris13,AdhikaryITA14,Adhikary14} investigated the performance of a HetNet with a single multi-antenna macro-BS and multiple small-BSs, where the multiple antennas at the macro-BS are used for serving its scheduled users as well as mitigating interference to the receivers in small cells using different interference coordination schemes. These schemes have been analyzed and shown to have performance improvement. However, since only one macro-BS is considered, the analytical results obtained in \cite{Hosseini13,Kountouris13,AdhikaryITA14,Adhikary14} cannot reflect the macro-tier interference, and thus cannot offer accurate insights for practical HetNets. In \cite{xia13}, interference coordination among a \emph{fixed} number of neighboring BSs was investigated in downlink large multi-antenna HetNets. However, this scheme may not fully exploit the spatial properties of the interference in large HetNets, and thus cannot effectively improve the system performance. Moreover, offloading was not considered in \cite{xia13}. So far, it is still not clear how the interference coordination schemes and the system parameters affect the performance of large multi-antenna HetNets with offloading.


In this paper, we consider offloading in downlink two-tier large stochastic multi-antenna HetNets where a macro-cell tier is overlaid with a pico-cell tier, and investigate an interference nulling (IN) scheme in improving the performance of the offloaded users. The IN scheme has a design parameter, which is the degree of freedom $U$ that can be used at each macro-BS for avoiding its interference to some of its offloaded users. In particular, each macro-BS utilizes the low-complexity zero-forcing beamforming (ZFBF) precoder to suppress interference to at most $U$ offloaded users as well as boost the signal to its scheduled user. Interference coordination using beamforming technique in large stochastic HetNets causes spatial dependence among macro-BSs and pico-BSs \cite{Adhikary14}, and user dependence among offloaded users. Thus, it is more challenging to analyze than interference coordination in multi-antenna stochastic \emph{single-tier} cellular networks \cite{zhang14,lee14IC,li14IC}. In this paper, by adopting appropriate approximations and utilizing tools from stochastic geometry, we first present a tractable expression for the rate coverage probability of the IN scheme. To our best knowledge, this is the first work analyzing the interference coordination technique in large stochastic multi-antenna HetNets with offloading. To further improve the rate coverage probability of the IN scheme, we consider the optimization of its design parameter. Note that optimization problems in large HetNets with single-antenna BSs were investigated in \cite{nie14,lin14}. The objective functions in \cite{nie14,lin14} are relatively simple, and bounds of the objective function and the constraint are utilized to obtain near-optimal solutions. The optimization problem in large multi-antenna HetNets we consider is an integer programming problem with a very complicated objective function. Hence, it is quite challenging to obtain the optimal solution. First, for the asymptotic scenario where the rate threshold is small, by studying the order behavior of the rate coverage probability, we prove that the optimal design parameter converges to a fixed value, which equals to either the antenna number difference between each maco-BS and each pico-BS or the antenna number difference minus one. Next, for the general scenario, we show that besides the number of antennas, the optimal design parameter also depends on other system parameters. 

Finally, we compare the IN scheme with the simple offloading scheme without interference management and the multi-antenna version of ABS in 3GPP LTE. In particular, we first analyze the rate coverage probabilities of the simple offloading scheme and ABS. Then, we compare the IN scheme with the simple offloading scheme and ABS, respectively, in terms of the rate coverage probability of each user type and the overall rate coverage probability. Both the analytical and numerical results show that the IN scheme can achieve good rate coverage probability gains over both of these two schemes, especially in the large antenna regime.

\section{System Model}\label{sec:sys_model}
\subsection{Downlink Two-Tier Heterogeneous Networks}
We consider a downlink two-tier HetNet where a macro-cell tier is overlaid with a pico-cell tier, as shown in Fig.\ \ref{fig:wvoronoi}. The locations of the macro-BSs and the pico-BSs are spatially distributed as two independent Homogeneous Poisson point processes (PPPs) $\Phi_{1}$ and $\Phi_{2}$ with densities $\lambda_{1}$ and $\lambda_{2}$, respectively. The locations of the users are also distributed as an independent homogeneous PPP $\Phi_{u}$ with density $\lambda_{u}$. Without loss of generality (w.l.o.g.), denote the macro-cell tier as the $1$st tier and the pico-cell tier as the $2$nd tier. We focus on the downlink scenario. Each macro-BS has $N_{1}$ antennas with total transmission power $P_{1}$, each pico-BS has $N_{2}$ antennas with total transmission power $P_{2}$, and each user has a single antenna. We consider both large-scale fading and small-scale fading. Specifically, due to large-scale fading, transmitted signals (from the $j$th tier) with distance $r$ are attenuated by a factor $\frac{1}{r^{\alpha_{j}}}$ ($j=1,2$), where $\alpha_{j}>2$ is the path loss exponent of the $j$th tier. For small-scale fading, we assume Rayleigh fading channels. 

\subsection{User Association}
We assume open access \cite{jo12}. As discussed in Section \ref{sec:intro}, due to the larger power at the macro-BSs, the load imbalancing problem arises if the user association is only according to the long-term average received power (RP). To remit the load imbalancing problem, the bias factor $B_{j}$ ($j=1,2$) is introduced to tier $j$, where $B_{2}>B_{1}$, to offload users from the heavily loaded macro-cell tier to the lightly loaded pico-cell tier. Specifically, user $i$ (denoted as $u_{i}$) is associated with the BS which provides the maximum \emph{long-term average} biased-received-power (BRP) (among all the macro-BSs and pico-BSs). Here, the long-term average BRP is defined as the average RP multiplied by a bias factor. This associated BS is called the \emph{serving BS} of user $i$. Note that within each tier, the nearest BS to user $i$ provides the strongest long-term average BRP in this tier. User $i$ is thus associated with the nearest BS in the $j^{*}_{i}$th tier if\footnote{In the user association procedure, the first antenna is normally used to transmit signal (using the total transmission power of each BS) for BRP determination according to LTE standards \cite{sesia09}.} 
\begin{align}\label{eq:tier_select}
j_{i}^{*}&=  {\arg\:\max}_{j\in\{1,2\}}P_{j}B_{j}Z_{i,j}^{-\alpha_{j}}
\end{align}
where $Z_{i,j}$ is the distance between user $i$ and its nearest BS in the $j$th tier. We observe that, for given $\{P_{j}\}$, $\{Z_{i,j}\}$ and $\{\alpha_{j}\}$, user association is only affected by the ratio between $B_{1}$ and $B_{2}$. Thus, w.l.o.g., we assume $B_{1}=1$ and $B_{2}=B>1$. After user association, each BS schedules its associated users according to TDMA, i.e., scheduling one user in each time slot, so that there is no intra-cell interference. 

According to the above mentioned user association policy and the offloading strategy, all the users can be partitioned into the following three disjoint user sets: 
\begin{enumerate}
\item the set of \emph{macro-users}: $\mathcal{U}_{1}=\left\{u_{i}|P_{1}Z_{i,1}^{-\alpha_{1}}\ge BP_{2}Z_{i,2}^{-\alpha_{2}}\right\}$\;,
\item the set of \emph{unoffloaded pico-users}: $\mathcal{U}_{2\bar{O}}=\left\{u_{i}|P_{2}Z_{i,2}^{-\alpha_{2}}>P_{1}Z_{i,1}^{-\alpha_{1}}\right\}$\;,
\item the set of \emph{offloaded users}: $\mathcal{U}_{2O}=\left\{u_{i}|P_{2}Z_{i,2}^{-\alpha_{2}}\le P_{1}Z_{i,1}^{-\alpha_{1}}<BP_{2} Z_{i,2}^{-\alpha_{2}}\right\}$\;,
\end{enumerate}
where the macro-users are associated with the maco-cell tier, the unoffloaded pico-users are associated with the pico-cell tier (even without bias), and the offloaded users are offloaded from the macro-cell tier to the pico-cell tier (due to bias $B>1$), as illustrated in Fig.\ \ref{fig:user_set}. Moreover, $\mathcal{U}_{2}=\mathcal{U}_{2\bar{O}}\bigcup\mathcal{U}_{2O}$ represents \emph{the set of pico-users}. 


\begin{figure}[t]
\centering
\subfigure[System Model ($U=1$)]{
\includegraphics[width=3.6in]
{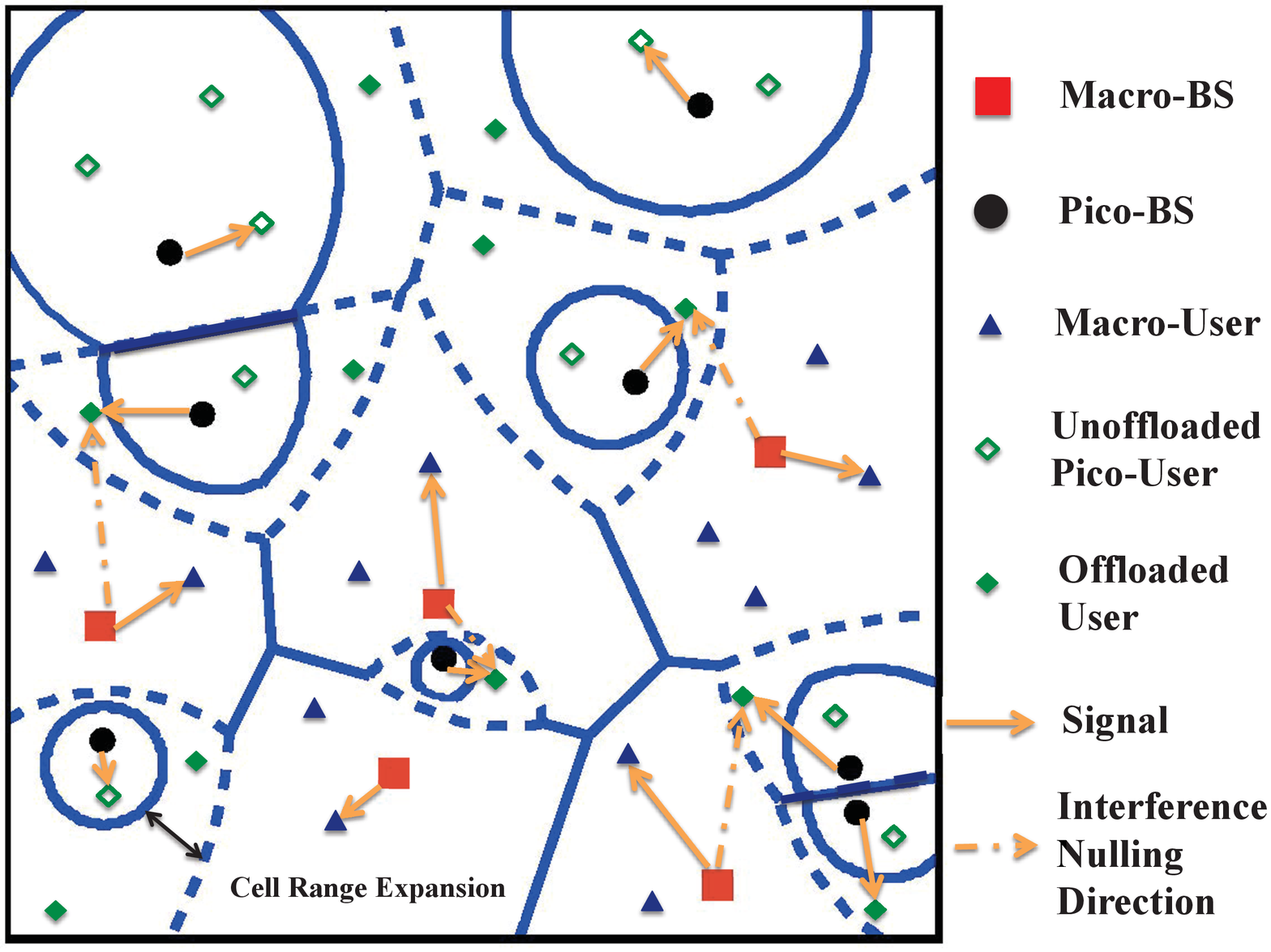}
\label{fig:wvoronoi}
}
\subfigure[User Set Illustration]{
\includegraphics[width=2.6in]{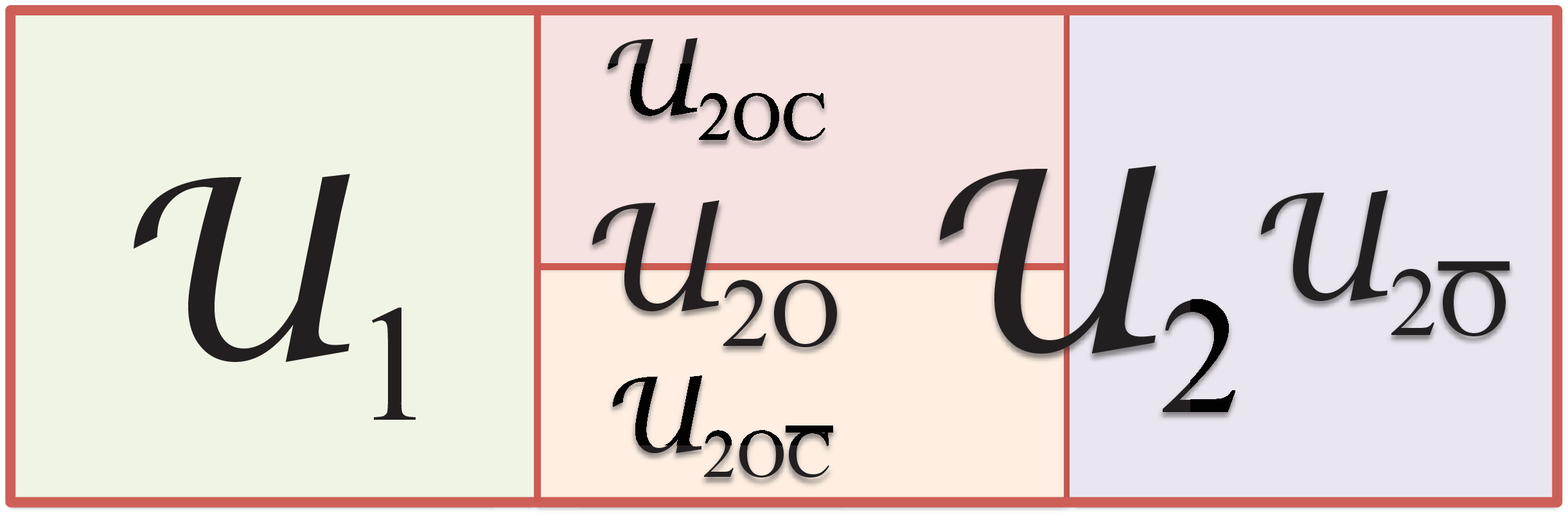}
\label{fig:user_set}
}
\caption{\small System model and user set illustration.}\label{fig:model}
\vspace{-8mm}
\end{figure}

\subsection{Performance Metric}
In this paper, we study the performance of the typical user denoted as\footnote{The index of the typical user and its serving BS is $0$.} $u_{0}$, which is located at the origin and is scheduled \cite{haenggi09}. Since HetNets are interference-limited, in this paper, we ignore the thermal noise in the analysis, as in \cite{heath13}. Note that the analytical results with thermal noise can be calculated in a similar way. We investigate the \emph{rate coverage probability} of the typical user, which is defined as the probability that the rate of the typical user is larger than a threshold \cite{singh13,singh13may}. Specifically, let $R_{0}=\frac{W}{L_{0}}\log_{2}\left(1+{\rm SIR}_{0}\right)$ denote the rate of the typical user, where $W$ is the available resource (e.g., time or frequency), $L_{0}$ is the total number of associated users (i.e., \emph{load}) of the typical user's serving BS, and ${\rm SIR}_{0}$ is the SIR of the typical user. Then, the rate coverage probability can be mathematically written as
\begin{align}\label{eq:CPrate_def}
\mathcal{R}(\tau)&\define{\rm Pr}\left(R_{0}>\tau\right)={\rm Pr}\left(\frac{W}{L_{0}}\log_{2}\left(1+{\rm SIR}_{0}\right)>\tau\right)
\end{align} where $\tau$ is the rate threshold. Note that $R_{0}$ is a random variable with randomness induced by ${\rm SIR}_{0}$ and $L_{0}$. Thus, the rate coverage probability captures the effects of the distributions of both ${\rm SIR}_{0}$ and $L_{0}$ \cite{singh13}. The rate coverage probability is suitable for applications with strict rate requirement, e.g., video services \cite{singh13may}.


\section{Inter-tier Interference Nulling}\label{sec:ITIC_model}
In HetNets with offloading, the offloaded users normally suffer from stronger interference than the macro-users and unoffloaded pico-users. \footnote{For each offloaded user, its nearest macro-BS, which provides the strongest long-term average RP, now becomes the dominant interferer of this offloaded user. However, for each macro-user or unoffloaded pico-user, the BS which provides the strongest long-term average RP is its serving BS. Therefore, the offloaded users suffer the strongest interference.} The dominant interference to each offloaded user, caused by its nearest macro-BS \cite{singh13may}, is one of the limiting factors of the system performance. In this section, we first elaborate on an inter-tier IN scheme to avoid the dominant interference to the offloaded users, so as to improve the system performance. Then, we obtain some results on the distributions of some related random variables of this scheme.

\subsection{IN Scheme Description}\label{sub:IN_intro}
We now describe an inter-tier IN scheme to avoid the dominant interference to the offloaded users by making use of at most $U$ ($U<N_{1}$) DoF at each macro-BS which has $N_{1}$ antennas. In particular, we use the low-complexity ZFBF precoder at each macro-BS to perform inter-tier IN. Note that $U$ is the design parameter of this scheme. When $U=0$, the IN scheme reduces to the simple offloading scheme without interference management. We first introduce several types of users related to this scheme. For each macro-BS, we refer to the users offloaded from it to their nearby pico-BSs as the \emph{offloaded users} of this macro-BS. All these offloaded users may not be scheduled by their nearest pico-BSs simultaneously, as each BS schedules one user in each time slot. In each time slot, we refer to the offloaded users scheduled by their nearest pico-BSs as \emph{active offloaded users} (of this slot). In the IN scheme, each macro-BS avoids its interference to some of its active offloaded users in a particular time slot, which are referred to as the \emph{IN offloaded users} of this macro-BS. We refer to the remaining offloaded users as \emph{non-IN offloaded users}. Hence, under the IN scheme, in a particular time slot, the offloaded users $\mathcal{U}_{2O}$ are further divided into two sets, i.e., $\mathcal{U}_{2O}=\mathcal{U}_{2OC}\bigcup \mathcal{U}_{2O\bar{C}}$, where $\mathcal{U}_{2OC}$ denotes the IN offloaded user set and $\mathcal{U}_{2O\bar{C}}$ denotes the non-IN offloaded user set. Note that under the IN scheme, the users can be partitioned into four disjoint user sets, namely, $\mathcal{U}_{1}$, $\mathcal{U}_{2\bar{O}}$, $\mathcal{U}_{2OC}$ and $\mathcal{U}_{2O\bar{C}}$, as illustrated in Fig.\ \ref{fig:user_set}. 

Next, we discuss how to determine the IN offloaded users of each macro-BS. Specifically, let $U_{2O_{a},\ell}$ denote the number of active offloaded users of macro-BS $\ell$, each of which is scheduled by a different pico-BS.  If $U_{2O_{a},\ell}\le U$, macro-BS $\ell$ can perform IN to all of its $U_{2O_{a},\ell}$ active offloaded users using $U_{2O_{a},\ell}$ DoF. However, if $U_{2O_{a},\ell}>U$, macro-BS $\ell$ randomly selects $U$ out of $U_{2O_{a},\ell}$ active offloaded users according to the uniform distribution to perform IN using $U$ DoF. Hence, macro-BS $\ell$ performs IN to $u_{2OC,\ell}\define\min\left(U,U_{2O_{a},\ell}\right)$ out of $U_{2O_{a},\ell}$ active offloaded users. Note that the DoF used for IN (referred to as IN DoF) at macro-BS $\ell$ is $u_{2OC,\ell}$. All the remaining $N_{1}-u_{2OC,\ell}$ DoF at macro-BS $\ell$ are used for boosting the signal to its scheduled user. 

Now, we introduce the precoding vectors at macro-BSs and pico-BSs in the IN scheme, respectively. First, each macro-BS utilizes the low-complexity ZFBF precoder to serve its scheduled user and simultaneously perform IN to its IN offloaded users. Specifically,  denote $\mathbf{H}_{1,\ell}=\left[\mathbf{h}_{1,\ell}\; \mathbf{g}_{1,\ell1}\;\ldots\;\mathbf{g}_{1,\ell u_{2OC,\ell}}\right]^{\dagger}$, where\footnote{The notation $X \dis Y$ means that $X$ \emph{is distributed as} $Y$.} $\mathbf{h}_{1,\ell}\dis\mathcal{CN}_{N_{1},1}\left(\mathbf{0}_{N_{1}\times 1},\mathbf{I}_{N_{1}}\right)$ denotes the channel vector between macro-BS $\ell$ and its scheduled user, and $\mathbf{g}_{1,\ell i}\dis \mathcal{CN}_{N_{1},1}\left(\mathbf{0}_{N_{1}\times 1},\mathbf{I}_{N_{1}}\right)$ denotes the channel vector between macro-BS $\ell$ and its IN offloaded user $i$ $(i=1,\ldots,u_{2OC,\ell})$. The ZFBF precoding matrix at macro-BS $\ell$ is designed to be $\mathbf{W}_{1,\ell}=\mathbf{H}_{1,\ell}^{\dagger}\left(\mathbf{H}_{1,\ell}\mathbf{H}_{1,\ell}^{\dagger}\right)^{-1}$ and the ZFBF vector at macro-BS $\ell$ is designed to be $\mathbf{f}_{1,\ell}=\frac{\mathbf{w}_{1,\ell}}{\|\mathbf{w}_{1,\ell}\|}$, where $\mathbf{w}_{1,\ell}$ is the first column of $\mathbf{W}_{1,\ell}$. Next, each pico-BS utilizes the maximal ratio transmission precoder to serve its scheduled user. Specifically, the beamforming vector at pico-BS $\ell$ is $\mathbf{f}_{2,\ell}=\frac{\mathbf{h}_{2,\ell}}{\left\|\mathbf{h}_{2,\ell}\right\|}$, 
where $\mathbf{h}_{2,\ell}\dis\mathcal{CN}_{N_{2},1}\left(\mathbf{0}_{N_{2}\times 1},\mathbf{I}_{N_{2}}\right)$ denotes the channel vector between pico-BS $\ell$ and its scheduled user.

We now discuss the received signal and the corresponding SIR of the typical user $u_{0}\in\mathcal{U}_{k}$ ($k\in\mathcal{K}\define\{1,2\bar{O},2OC,2O\bar{C}\}$). 
\subsubsection{Macro-User}\label{subsec:macro}
The received signal and SIR of the typical user $u_{0}\in \mathcal{U}_{1}$ are\footnote{In this paper, all macro-BSs and pico-BSs are assumed to be active. The same assumption can also be seen in the existing papers (see e.g., \cite{singh13,andrews11}).}
\begin{align}
&y_{1,0}=\frac{1}{Y_{1}^{\frac{\alpha_{1}}{2}}}\mathbf{h}_{1,00}^{\dagger}\mathbf{f}_{1,0}x_{1,0}+\sum_{\ell\in\Phi\left(\lambda_{1}\right)\backslash B_{1,0}}\frac{1}{\left|D_{1,\ell0}\right|^{\frac{\alpha_{1}}{2}}}\mathbf{h}_{1,\ell0}^{\dagger}\mathbf{f}_{1,\ell}x_{1,\ell}+\sum_{\ell\in\Phi\left(\lambda_{2}\right)}\frac{1}{\left|D_{2,\ell0}\right|^{\frac{\alpha_{2}}{2}}}\mathbf{h}^{\dagger}_{2,\ell0}\mathbf{f}_{2,\ell}x_{2,\ell}\;,\label{eq:y0_macro}\\
&{\rm SIR}_{{\rm IN},1,0} = \frac{\frac{P_{1}}{Y_{1}^{\alpha_{1}}}\left|\mathbf{h}_{1,00}^{\dagger}\mathbf{f}_{1,0}\right|^{2}}{P_{1}\sum_{\ell\in\Phi\left(\lambda_{1}\right)\backslash B_{1,0}}\frac{1}{\left|D_{1,\ell0}\right|^{\alpha_{1}}}\left|\mathbf{h}_{1,\ell0}^{\dagger}\mathbf{f}_{1,\ell}\right|^{2}+P_{2}\sum_{\ell\in\Phi\left(\lambda_{2}\right)}\frac{1}{\left|D_{2,\ell0}\right|^{\alpha_{2}}}\left|\mathbf{h}_{2,\ell0}^{\dagger}\mathbf{f}_{2,\ell}\right|^{2}}\label{eq:SINR_marco}
\end{align} 
where $B_{1,0}$ is the serving macro-BS of $u_{0}$, $Y_{1}$ is the distance between $u_{0}$ and $B_{1,0}$, $\left|D_{j,\ell0}\right|$ $(j=1,2)$ is the distance from BS $\ell$ in the $j$th tier to $u_{0}$, $x_{1,\ell}$ is the symbol sent from macro-BS $\ell$ to its scheduled user satisfying ${\rm E}\left[x_{1,\ell}x_{1,\ell}^{*}\right]=P_{1}$, and $x_{2,\ell}$ is the symbol sent from pico-BS $\ell$ to its scheduled user satisfying ${\rm E}\left[x_{2,\ell}x_{2,\ell}^{*}\right]=P_{2}$. Here, $\left|\mathbf{h}_{1,00}^{\dagger}\mathbf{f}_{1,0}\right|^{2}\dis{\rm Gamma}\left(N_{1}-u_{2OC,0},1\right)$, $\left|\mathbf{h}_{1,\ell0}^{\dagger}\mathbf{f}_{1,\ell}\right|^{2}\dis{\rm Gamma}(1,1)$, and $\left|\mathbf{h}_{2,\ell0}^{\dagger}\mathbf{f}_{2,\ell}\right|^{2}\dis{\rm Gamma}(1,1)$. 

\subsubsection{Unoffloaded Pico-User}\label{subset:pico}
The received signal and SIR of the typical user $u_{0}\in\mathcal{U}_{2\bar{O}}$ are
\begin{align}
&y_{2\bar{O},0}=\frac{1}{Y_{2}^{\frac{\alpha_{2}}{2}}}\mathbf{h}_{2,00}^{\dagger}\mathbf{f}_{2,0}x_{2,0}+\sum_{\ell\in\Phi\left(\lambda_{1}\right)}\frac{1}{\left|D_{1,\ell0}\right|^{\frac{\alpha_{1}}{2}}}\mathbf{h}_{1,\ell0}^{\dagger}\mathbf{f}_{1,\ell}x_{1,\ell}+\sum_{\ell\in\Phi\left(\lambda_{2}\right)\backslash B_{2,0}}\frac{1}{\left|D_{2,\ell0}\right|^{\frac{\alpha_{2}}{2}}}\mathbf{h}_{2,\ell0}^{\dagger}\mathbf{f}_{2,\ell}x_{2,\ell}\;,\label{eq:y0_pico}\\
&{\rm SIR}_{{\rm IN},2\bar{O},0} = \frac{\frac{P_{2}}{Y_{2}^{\alpha_{2}}}\left|\mathbf{h}_{2,00}^{\dagger}\mathbf{f}_{2,0}\right|^{2}}{P_{1}\sum_{\ell\in\Phi\left(\lambda_{1}\right)}\frac{1}{\left|D_{1,\ell0}\right|^{\alpha_{1}}}\left|\mathbf{h}_{1,\ell0}^{\dagger}\mathbf{f}_{1,\ell}\right|^{2}+P_{2}\sum_{\ell\in\Phi\left(\lambda_{2}\right)\backslash B_{2,0}}\frac{1}{\left|D_{2,\ell0}\right|^{\alpha_{2}}}\left|\mathbf{h}_{2,\ell0}^{\dagger}\mathbf{f}_{2,\ell}\right|^{2}}\label{eq:SINR_pico}
\end{align} 
where $B_{2,0}$ is the serving pico-BS of $u_{0}$, and $Y_{2}$ is the distance between $u_{0}$ and $B_{2,0}$. Here, $\left|\mathbf{h}_{2,00}^{\dagger}\mathbf{f}_{2,0}\right|^{2}\dis{\rm Gamma}\left(N_{2},1\right)$. 

\subsubsection{IN Offloaded User}\label{subset:BC}
When $u_{0}\in\mathcal{U}_{2OC}$, the typical user $u_{0}$ does not suffer interference from its nearest macro-BS. Thus, the received signal and SIR of $u_{0}\in\mathcal{U}_{2OC}$ are
\begin{align}
&y_{2OC,0}=\frac{1}{Y_{2}^{\frac{\alpha_{2}}{2}}}\mathbf{h}_{2,00}^{\dagger}\mathbf{f}_{2,0}x_{2,0}+\sum_{\ell\in\Phi\left(\lambda_{1}\right)\backslash B_{1,0}}\frac{\mathbf{h}_{1,\ell0}^{\dagger}\mathbf{f}_{1,\ell}}{\left|D_{1,\ell0}\right|^{\frac{\alpha_{1}}{2}}}x_{1,\ell}+\sum_{\ell\in\Phi\left(\lambda_{2}\right)\backslash B_{2,0}}\frac{\mathbf{h}_{2,\ell0}^{\dagger}\mathbf{f}_{2,\ell}}{\left|D_{2,\ell0}\right|^{\frac{\alpha_{2}}{2}}}x_{2,\ell}\;,\label{eq:y0_BC}\\
&{\rm SIR}_{{\rm IN},2OC,0} = \frac{\frac{P_{2}}{Y_{2}^{\alpha_{2}}}\left|\mathbf{h}_{2,00}^{\dagger}\mathbf{f}_{2,0}\right|^{2}}{P_{1}\sum_{\ell\in\Phi\left(\lambda_{1}\right)\backslash B_{1,0}}\frac{1}{\left|D_{1,\ell0}\right|^{\alpha_{1}}}\left|\mathbf{h}_{1,\ell0}^{\dagger}\mathbf{f}_{1,\ell}\right|^{2}+P_{2}\sum_{\ell\in\Phi\left(\lambda_{2}\right)\backslash B_{2,0}}\frac{1}{\left|D_{2,\ell0}\right|^{\alpha_{2}}}\left|\mathbf{h}_{2,\ell0}^{\dagger}\mathbf{f}_{2,\ell}\right|^{2}}\;.\label{eq:SINR_BC}
\end{align} 

\subsubsection{Non-IN Offloaded User}\label{subsec:BbarC}
When $u_{0}\in\mathcal{U}_{2O\bar{C}}$, the typical user $u_{0}$ is not selected for IN, and thus it still suffers interference from its nearest macro-BS. Hence, the received signal and SIR of $u_{0}\in\mathcal{U}_{2O\bar{C}}$ are
\begin{align}
&y_{2O\bar{C},0}=\frac{\mathbf{h}_{2,00}^{\dagger}\mathbf{f}_{2,0}x_{2,0}}{Y_{2}^{\frac{\alpha_{2}}{2}}}+\frac{\mathbf{h}_{1,10}^{\dagger}\mathbf{f}_{1,1}x_{1,1}}{Y_{1}^{\frac{\alpha_{1}}{2}}}+\sum_{\ell\in\Phi\left(\lambda_{1}\right)\backslash B_{1,0}}\frac{\mathbf{h}_{1,\ell0}^{\dagger}\mathbf{f}_{1,\ell}x_{1,\ell}}{\left|D_{1,\ell0}\right|^{\frac{\alpha_{1}}{2}}}+\sum_{\ell\in\Phi\left(\lambda_{2}\right)\backslash B_{2,0}}\frac{\mathbf{h}_{2,\ell0}^{\dagger}\mathbf{f}_{2,\ell}x_{2,\ell}}{\left|D_{2,\ell0}\right|^{\frac{\alpha_{2}}{2}}}\;,\label{eq:y0_ObarC}\\
&{\rm SIR}_{{\rm IN},2O\bar{C},0}= \frac{\frac{P_{2}}{Y_{2}^{\alpha_{2}}}\left|\mathbf{h}_{2,00}^{\dagger}\mathbf{f}_{2,0}\right|^{2}}{P_{2}\sum_{\ell\in\Phi\left(\lambda_{2}\right)\backslash B_{2,0}}\frac{\left|\mathbf{h}_{2,\ell0}^{\dagger}\mathbf{f}_{2,\ell}\right|^{2}}{\left|D_{2,\ell0}\right|^{\alpha_{2}}}+P_{1}\sum_{\ell\in\Phi\left(\lambda_{1}\right)\backslash B_{1,0}}\frac{\left|\mathbf{h}_{1,\ell0}^{\dagger}\mathbf{f}_{1,\ell}\right|^{2}}{\left|D_{1,\ell0}\right|^{\alpha_{1}}}+P_{1}\frac{\left|\mathbf{h}_{1,10}^{\dagger}\mathbf{f}_{1,1}\right|^{2}}{Y_{1}^{\alpha_{1}}}}\;.\label{eq:SIR_BbarC_v2}
\end{align} 
To facilitate the calculation of the rate coverage probability for $u_{0}\in\mathcal{U}_{2O\bar{C}}$ in Section \ref{sec:CPrate}, different from (\ref{eq:y0_pico}) and (\ref{eq:SINR_pico}),  we separate the dominant interferer (i.e., the nearest macro-BS) and the other interferers (i.e., the other macro-BSs) in the macro-cell tier to $u_{0}\in\mathcal{U}_{2O\bar{C}}$  in (\ref{eq:y0_ObarC}) and (\ref{eq:SIR_BbarC_v2}).

\begin{figure}[t]
\centering
\subfigure[$\:$]{
\includegraphics[width=3.1in]
{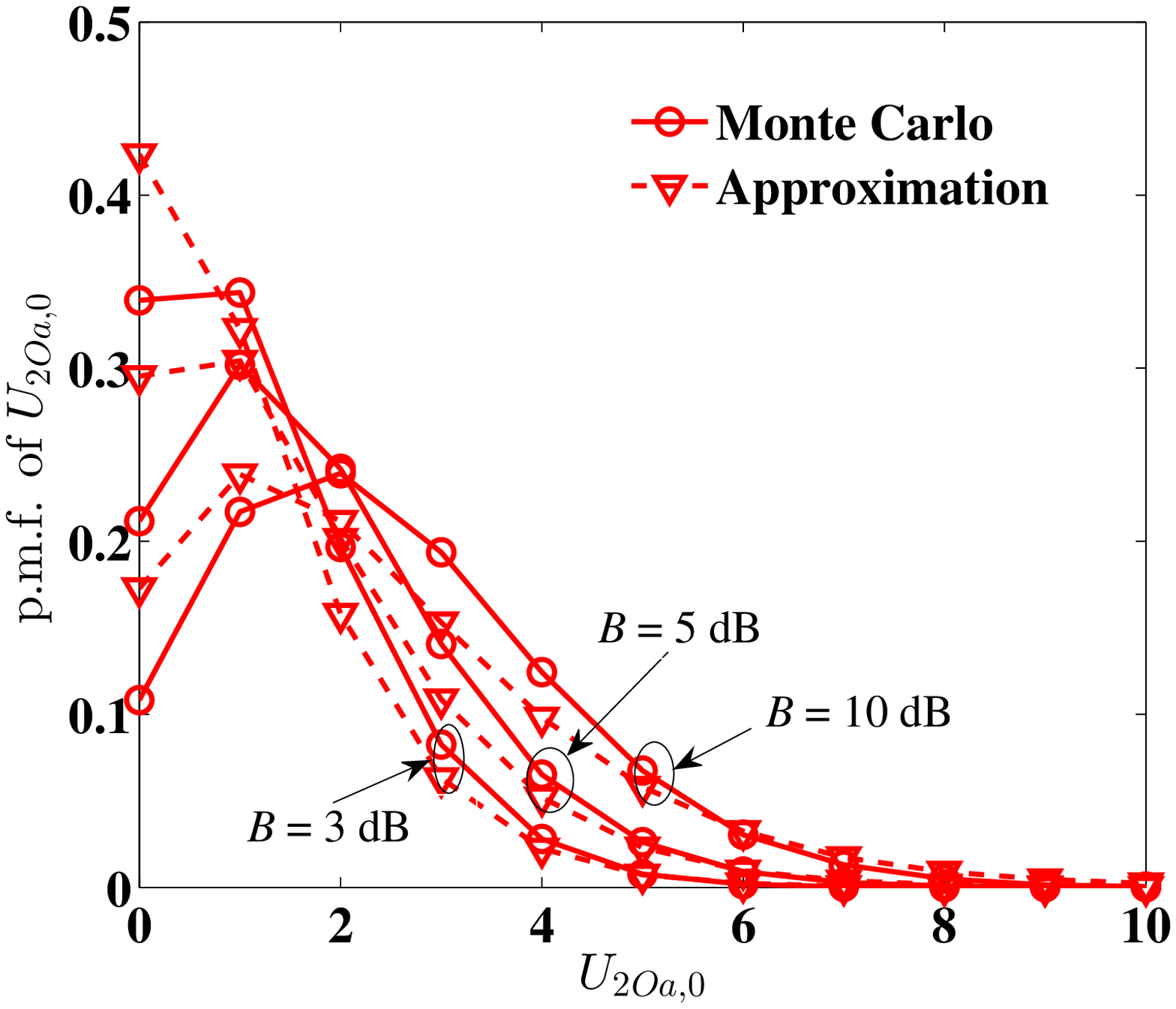}
\label{fig:p.m.f._UB0}
}
\subfigure[$\:$]{
\includegraphics[width=3.1in]{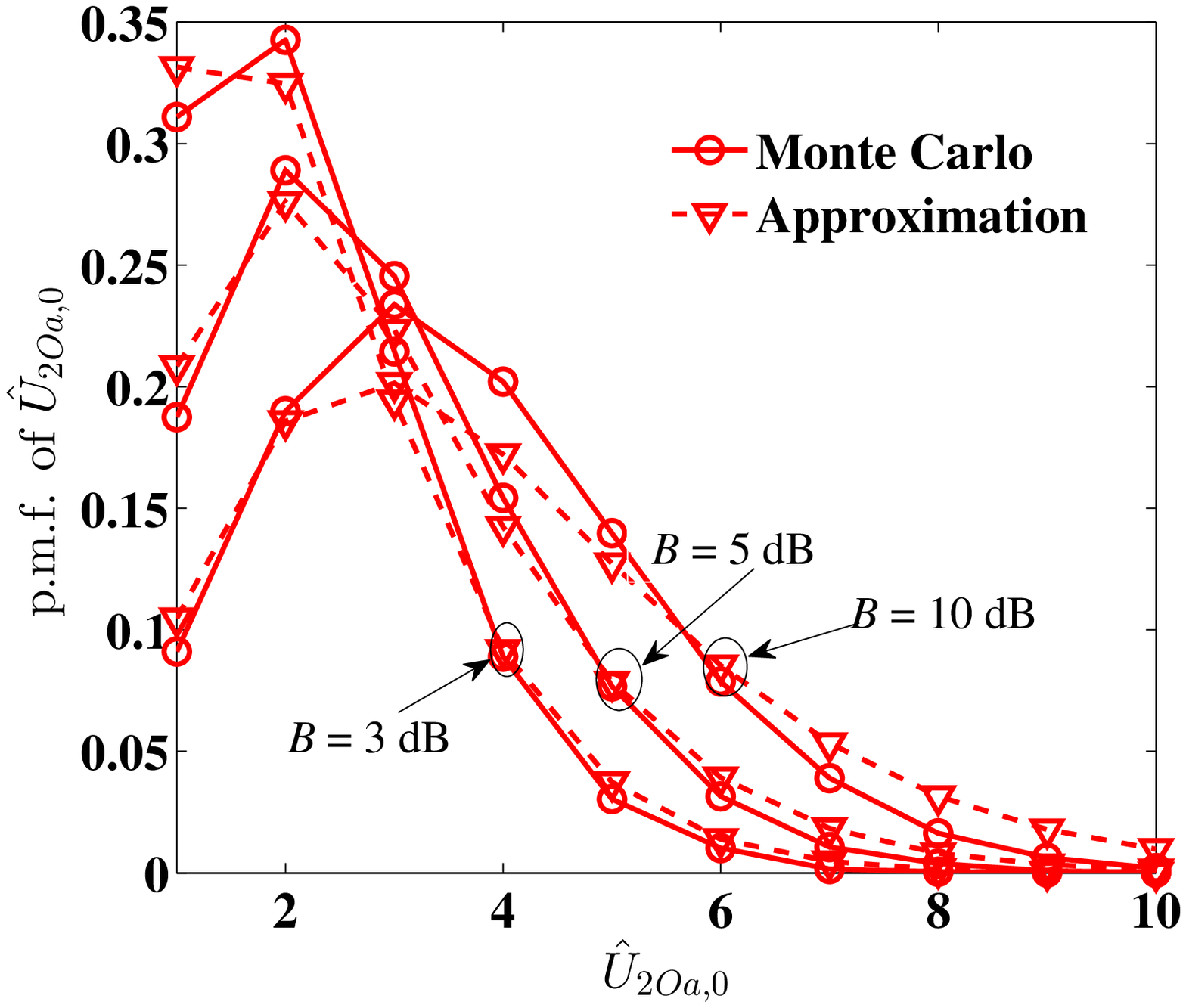}
\label{fig:p.m.f._hatUB0}
}
\caption{\scriptsize P.m.f. of $U_{2O_{a},0}$ and $\hat{U}_{2O_{a},0}$ for different bias factors $B$, at $\frac{P_{1}}{P_{2}}=20$ dB, $\alpha_{1}=\alpha_{2}=4$, $\lambda_{1}=0.0001$ nodes/m$^{2}$, and $\lambda_{2}=0.0005$ nodes/m$^{2}$.}\label{fig:CPrateVSeta_ABS_IN}
\vspace{-8mm}
\end{figure}

\subsection{Probability Mass Function of IN DoF and IN Probability}\label{subsec:preliminary}
\subsubsection{Probability Mass Function of IN DoF}
From (\ref{eq:SINR_marco}), we note that when $u_{0}\in\mathcal{U}_{1}$, the distribution of the effective channel gain ($\left|\mathbf{h}_{1,00}^{\dagger}\mathbf{f}_{1,0}\right|^{2}\dis {\rm Gamma}\left(N_{1}-u_{2OC,0},1\right)$) is related to the IN DoF at the typical user's serving macro-BS $u_{2OC,0}$. The probability mass function (p.m.f.) of $u_{2OC,0}$ is the basis of calculating the rate coverage probability in (\ref{eq:CPrate_def}). Let $U_{2O_{a},0}$ denote the number of active offloaded users of the typical user's serving macro-BS when $u_{0}\in\mathcal{U}_{1}$. In order to calculate the p.m.f. of $u_{2OC,0}$, we first calculate the p.m.f. of $U_{2O_{a},0}$. The p.m.f. of $U_{2O_{a},0}$ depends on the distributions of the number of active offloaded users in a fixed area and  the offloading area of the typical user's serving macro-BS,    
but its exact distribution is unknown. Similar to the approaches utilized in \cite{bai13,li14IC}, we approximate the distribution of the number of active offloaded users in a fixed area as a Poisson distribution. Moreover, we approximate the distribution of the offloading area using a linear-scaling-based approach proposed in \cite{singh13}. Based on these approximations, we calculate the p.m.f. of $U_{2O_{a},0}$ as follows:
\begin{lemma}\label{lem:p.m.f._offload_random}
When $u_{0}\in\mathcal{U}_{1}$, the p.m.f. of $U_{2O_{a},0}$ is approximated by
\begin{align}\label{eq:pmf_UB0}
{\rm Pr}\left(U_{2O_{a},0}=n\right) \approx \frac{3.5^{3.5}\Gamma\left(n+3.5\right)}{\Gamma(3.5)n!}\left(\frac{\lambda_{2}\mathcal{A}_{2O}}{\mathcal{A}_{2}\lambda_{1}}\right)^{n}\left(3.5+\frac{\lambda_{2}\mathcal{A}_{2O}}{\mathcal{A}_{2}\lambda_{1}}\right)^{-\left(n+3.5\right)}\;,\;n\ge 0
\end{align} where 
\begin{align}
\mathcal{A}_{2}&\define{\rm Pr}\left(u_{0}\in\mathcal{U}_{2}\right)=2\pi\lambda_{2}\int_{0}^{\infty}z\exp\left(-\pi\left(\lambda_{1}\left(\frac{P_{1}z^{\alpha_{2}}}{BP_{2}}\right)^{\frac{2}{\alpha_{1}}}+\lambda_{2}z^{2}\right)\right){\rm d}z\;\\
\mathcal{A}_{2O}&\define{\rm Pr}\left(u_{0}\in\mathcal{U}_{2O}\right)\notag\\
&=2\pi\lambda_{2}\int_{0}^{\infty}z\left(\exp\left(-\pi\lambda_{1}\left(\frac{P_{1}z^{\alpha_{2}}}{BP_{2}}\right)^{\frac{2}{\alpha_{1}}}\right)-\exp\left(-\pi\lambda_{1}\left(\frac{P_{1}z^{\alpha_{2}}}{P_{2}}\right)^{\frac{2}{\alpha_{1}}}\right)\right)\exp\left(-\pi\lambda_{2}z^{2}\right){\rm d}z\;.
\end{align}
\end{lemma}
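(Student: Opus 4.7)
The plan is to express $U_{2O_a,0}$ as a Gamma--Poisson mixture so that its p.m.f. takes the form of a negative binomial with shape parameter $3.5$; matching the shape and scale to the stated formula is then a direct calculation. Concretely, let $A$ denote the offloading area of the typical user's serving macro-BS, i.e., the portion of this macro-BS's Voronoi cell in which users would be offloaded to a pico-BS under bias $B$. Conditional on $A$, I would approximate $U_{2O_a,0}$ as Poisson with a suitable mean $\rho A$, and then marginalize by treating $A$ as Gamma-distributed.

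For the conditional Poisson step, note first that an active offloaded user must lie in the offloading region of its nearest macro-BS --- otherwise its nearest macro-BS would differ from $B_{1,0}$ --- so $U_{2O_a,0}$ counts pico-BS-scheduled users falling in a region of area $A$. Following the thinning heuristic used in \cite{bai13,li14IC}, I would model the set of active offloaded users as a homogeneous PPP obtained by thinning the scheduled pico-users: the density of active pico-users is $\lambda_2$ (one per pico-BS), and the probability that a scheduled pico-user is offloaded is $\mathcal{A}_{2O}/\mathcal{A}_2$. Restricting this PPP to the offloading region, which occupies a fraction $\mathcal{A}_{2O}$ of the plane, yields the within-region intensity $\rho=\lambda_2/\mathcal{A}_2$, so that $U_{2O_a,0}\mid A \sim {\rm Poisson}(\rho A)$. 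For the distribution of $A$, I would invoke the linear-scaling heuristic of \cite{singh13} to model the offloading region of a macro-BS as a scaled copy of a typical Poisson--Voronoi cell; using the empirical Gamma fit for the typical-cell area (shape $3.5$, mean $1/\lambda_1$) and rescaling by the per-macro-BS offloading-area fraction $\mathcal{A}_{2O}$, I obtain $A\sim{\rm Gamma}\!\bigl(3.5,\,\mathcal{A}_{2O}/(3.5\lambda_1)\bigr)$, whose mean is $\mathcal{A}_{2O}/\lambda_1$ as required.

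The unconditional p.m.f. then becomes
\[
{\rm Pr}\!\left(U_{2O_a,0}=n\right)\approx\int_{0}^{\infty}\frac{(\rho a)^{n}e^{-\rho a}}{n!}\,f_{A}(a)\,\mathrm{d}a,
\]
which is the standard Gamma--Poisson integral and evaluates in closed form to a negative binomial; substituting $\rho=\lambda_2/\mathcal{A}_2$ together with the above Gamma parameters and simplifying produces exactly \eqref{eq:pmf_UB0}. The main obstacle is not the final integration but the simultaneous justification of the two approximations: the linear-scaling Gamma fit for Poisson--Voronoi cell areas is empirical rather than exact, and the Poisson approximation suppresses the inherent correlation between the offloading geometry and the pico-BS point process that generates the active offloaded users inside it. Conditioning on $u_{0}\in\mathcal{U}_{1}$ also mildly size-biases the serving macro-cell, an effect I would absorb into the approximation by using the typical-cell (rather than the zero-cell) Gamma parameters, following the precedent of \cite{singh13,bai13,li14IC}.
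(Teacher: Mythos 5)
Your proposal is correct and follows essentially the same route as the paper's proof: approximating the active offloaded users as a homogeneous PPP of effective density $\lambda_{2}/\mathcal{A}_{2}$ within the offloading region, modelling that region's area via the linear-scaling Gamma fit of \cite{singh13} with mean $\mathcal{A}_{2O}/\lambda_{1}$ and shape $3.5$, and evaluating the resulting Gamma--Poisson mixture as a negative binomial. Your write-up is in fact more explicit than the paper's (which defers the final mixture computation to the load-p.m.f. derivations in \cite{singh13,yu13}), and your parameter matching reproduces \eqref{eq:pmf_UB0} exactly.
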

\begin{proof}
See Appendix \ref{proof:p.m.f._offload_random}.
\end{proof}


Fig.\ \ref{fig:p.m.f._UB0} illustrates the accuracy of the p.m.f. approximation of $U_{2O_{a},0}$ in (\ref{eq:pmf_UB0}). We see that the p.m.f. approximation of $U_{2O_{a},0}$ is reasonably accurate for different bias factors.

Based on \emph{Lemma \ref{lem:p.m.f._offload_random}}, we can easily compute the p.m.f. of $u_{2OC,0}=\min\left(U,U_{2O_{a},0}\right)$ as follows:
\begin{lemma}\label{lem:pmf_u2OC}
When $u_{0}\in\mathcal{U}_{1}$, the p.m.f. of the IN DoF at the typical user's serving macro-BS is 
\begin{align}
{\rm Pr}\left(u_{2OC,0}=n\right)=
\begin{cases}
&{\rm Pr}\left(U_{2O_{a},0}=n\right),\hspace{1.15cm} {\rm for}\; 0\le n<U\\
&\sum_{u=n}^{\infty} {\rm Pr}\left(U_{2O_{a},0}=u\right), \quad {\rm for}\;\; n=U
\end{cases}
\;.
\end{align} 
\end{lemma}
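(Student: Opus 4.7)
The statement follows almost immediately from the definition $u_{2OC,0} \define \min(U, U_{2O_a,0})$, combined with \emph{Lemma \ref{lem:p.m.f._offload_random}}, so the plan is just to carry out the case split cleanly.

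First I would observe that since $u_{2OC,0}$ is the minimum of the constant $U$ and the non-negative integer random variable $U_{2O_a,0}$, its support is $\{0, 1, \ldots, U\}$, and I need to compute the probability at each point in this support. The key identity is: for any $n \in \{0, 1, \ldots, U\}$,
\begin{align*}
\{\min(U, U_{2O_a,0}) = n\} =
\begin{cases}
\{U_{2O_a,0} = n\}, & 0 \le n < U, \\
\{U_{2O_a,0} \ge U\}, & n = U.
\end{cases}
\end{align*}
The $n < U$ case holds because if $U_{2O_a,0} \ge U$ then $\min(U, U_{2O_a,0}) = U \ne n$, so the min equals $n < U$ exactly when $U_{2O_a,0} = n$. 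The $n = U$ case holds because $\min(U, U_{2O_a,0}) = U$ exactly when $U_{2O_a,0} \ge U$.

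Next I would take probabilities on both sides. The $n < U$ case immediately gives ${\rm Pr}(u_{2OC,0} = n) = {\rm Pr}(U_{2O_a,0} = n)$, which matches the claimed expression (with the p.m.f.\ on the right-hand side provided by \emph{Lemma \ref{lem:p.m.f._offload_random}}). For the $n = U$ case, writing ${\rm Pr}(U_{2O_a,0} \ge U) = \sum_{u=U}^{\infty} {\rm Pr}(U_{2O_a,0} = u)$ by countable additivity yields the second branch.

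There is essentially no obstacle here: the p.m.f.\ of $U_{2O_a,0}$ is already approximated in closed form in \emph{Lemma \ref{lem:p.m.f._offload_random}}, so substituting (\ref{eq:pmf_UB0}) into both branches completes the derivation. The only small thing worth being explicit about is that the approximation error is inherited unchanged from \emph{Lemma \ref{lem:p.m.f._offload_random}}, since the map $x \mapsto \min(U, x)$ is a deterministic post-processing that preserves the distributional approximation.
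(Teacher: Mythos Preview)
Your proposal is correct and matches the paper's approach: the paper states that the result follows ``easily'' from \emph{Lemma~\ref{lem:p.m.f._offload_random}} and the definition $u_{2OC,0}=\min(U,U_{2O_a,0})$, without giving a separate proof, and your case split on $\min(U,U_{2O_a,0})$ is exactly the intended argument.
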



\subsubsection{IN Probability}
As discussed in Section \ref{sub:IN_intro}, all the active offloaded users may not be simultaneously selected for IN. Let $\mathcal{E}_{2OC,0}(U)$ denote the event that $u_{0}$ is selected for IN in the IN scheme under design parameter $U$ given that $u_{0}\in\mathcal{U}_{2O}$. Here, ${\rm Pr}\left(\mathcal{E}_{2OC,0}(U)\right)$ is referred to as the IN probability and is the basis of calculating the rate coverage probability in (\ref{eq:CPrate_def}). Let $\hat{U}_{2O_{a},0}$ denote the number of active offloaded users that are offloaded from the typical user's nearest macro-BS when $u_{0}\in\mathcal{U}_{2O}$. 
To calculate ${\rm Pr}\left(\mathcal{E}_{2OC,0}(U)\right)$, we first calculate the p.m.f. of $\hat{U}_{2O_{a},0}$. 
Based on similar approximation approaches of deriving the p.m.f. of $U_{2O_{a},0}$ in \emph{Lemma \ref{lem:p.m.f._offload_random}}, we calculate the p.m.f. of $\hat{U}_{2O_{a},0}$ as follows: 
\begin{lemma}\label{lem:p.m.f._offload_tag}
When $u_{0}\in\mathcal{U}_{2O}$, the p.m.f. of $\hat{U}_{2O_{a},0}$ is approximated by
\begin{align}\label{eq:p.m.f._hatUB0}
{\rm Pr}\left(\hat{U}_{2O_{a},0}=n\right)\approx\frac{3.5^{3.5}\Gamma\left(n+3.5\right)}{\Gamma(n)\Gamma(3.5)}\left(\frac{\lambda_{2}\mathcal{A}_{2O}}{\mathcal{A}_{2}\lambda_{1}}\right)^{n-1}\left(3.5+\frac{\lambda_{2}\mathcal{A}_{2O}}{\mathcal{A}_{2}\lambda_{1}}\right)^{-\left(n+3.5\right)}\;,\;n\ge1\;.
\end{align} 
\end{lemma}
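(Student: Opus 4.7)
The plan is to follow the same Poisson-gamma approximation as in the proof of Lemma \ref{lem:p.m.f._offload_random}, adjusted for the conditioning $u_{0}\in\mathcal{U}_{2O}$. The key observation is that when $u_{0}$ is an offloaded user, $u_{0}$ itself is one of the active offloaded users of its nearest macro-BS (since $u_{0}$ is scheduled by definition), so $\hat{U}_{2O_{a},0}\ge 1$; moreover, by the classical inspection-paradox phenomenon, the macro-BS whose offloading region contains $u_{0}$ is more likely to have a large offloading area, being selected with probability proportional to that area. Concretely, the $\mathrm{Gamma}(3.5,\bar{A}/3.5)$ model of the typical macro-cell's offloading area used in Lemma \ref{lem:p.m.f._offload_random} is replaced by its size-biased counterpart $\mathrm{Gamma}(4.5,\bar{A}/3.5)$.

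Operationally, I would model $\hat{U}_{2O_{a},0}-1$ (the number of active offloaded users other than $u_{0}$ itself) as Poisson with mean $\mu A^{*}$, where $A^{*}\sim\mathrm{Gamma}(4.5,\bar{A}/3.5)$ and $\mu\bar{A}=c:=\lambda_{2}\mathcal{A}_{2O}/(\mathcal{A}_{2}\lambda_{1})$, and then carry out the gamma-Poisson integration. Using the identity $\Gamma(4.5)=3.5\,\Gamma(3.5)$ to absorb the constant in front yields (\ref{eq:p.m.f._hatUB0}) directly. An equivalent and much shorter route is to observe that this procedure is exactly the p.m.f.-level size-biased transform applied to the distribution in Lemma \ref{lem:p.m.f._offload_random}:
\[
\Pr(\hat{U}_{2O_{a},0}=n)\;\approx\;\frac{n\,\Pr(U_{2O_{a},0}=n)}{c},\qquad n\ge 1,
\]
where $c$ is the mean of the negative-binomial in (\ref{eq:pmf_UB0}). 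Substituting (\ref{eq:pmf_UB0}) and simplifying with $n/n!=1/\Gamma(n)$ maps (\ref{eq:pmf_UB0}) to (\ref{eq:p.m.f._hatUB0}) in one line. The two routes must agree by the general fact that size-biasing a negative-binomial of shape $\alpha$ produces a shifted negative-binomial of shape $\alpha+1$, a consistency that serves as a useful sanity check.

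The main obstacle I anticipate is justifying this size-biased substitution within the author's approximation regime, i.e., verifying that conditioning on $u_{0}\in\mathcal{U}_{2O}$ is indeed captured by replacing the typical macro-cell's offloading-area gamma by its size-biased version (and not by leaving it unchanged, as in Lemma \ref{lem:p.m.f._offload_random}). A short Palm-theoretic argument should suffice: treating the active-offloaded-user point process as (approximately) a Cox process with constant intensity on the union of offloading regions, the offloading area of the macro-cell containing a uniformly chosen such user is biased exactly by a factor proportional to that area, which is the size-biasing operation we need. Once this intensity-matching step is in place, the remainder of the argument is routine algebra on the gamma and Poisson factors.
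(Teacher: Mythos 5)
Your proposal is correct and follows essentially the same route as the paper: the paper's proof simply says to repeat the argument of Lemma~\ref{lem:p.m.f._offload_random} but with the offloading-area distribution of the macro-cell \emph{containing} $u_{0}$ (which, in the linear-scaling framework of \cite{singh13,yu13}, is exactly the size-biased ${\rm Gamma}(4.5,\bar{A}/3.5)$ you use, together with the ``$+1$'' for $u_{0}$ itself), and your gamma--Poisson integration reproduces (\ref{eq:p.m.f._hatUB0}) exactly. The observation that this is equivalent to size-biasing the negative binomial of Lemma~\ref{lem:p.m.f._offload_random} at the p.m.f. level is a nice consistency check but not a different method.
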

\begin{proof}
Similar to the proof of (\ref{eq:pmf_UB0}). The difference is that, in this proof, the distribution of the offloading area (where the offloaded users including $u_{0}$ may reside) of $u_{0}$'s \emph{nearest} macro-BS is used, instead of the distribution of the offloading area (where the offloaded users excluding $u_{0}$ may reside) of $u_{0}$'s \emph{serving} macro-BS (used in the proof of (\ref{eq:pmf_UB0})).  
\end{proof}



Fig.\ \ref{fig:p.m.f._hatUB0} illustrates the accuracy of the p.m.f. approximation of $\hat{U}_{2O_{a},0}$ in (\ref{eq:p.m.f._hatUB0}). We see that the p.m.f. approximation of $\hat{U}_{2O_{a},0}$ is reasonably accurate for different bias factors.


Based on \emph{Lemma \ref{lem:p.m.f._offload_tag}}, we can calculate the IN probability ${\rm Pr}\left(\mathcal{E}_{2OC,0}(U)\right)$ as follows: 
\begin{lemma}\label{lem:prob_IN}
When $u_{0}\in\mathcal{U}_{2O}$, the IN probability is 
\begin{align}\label{eq:prob_IN}
&{\rm Pr}\left(\mathcal{E}_{2OC,0}(U)\right)\\
=&U\left(\frac{\lambda_{1}\mathcal{A}_{2}}{\lambda_{2}\mathcal{A}_{2O}}\left(1-\left(1+\frac{\lambda_{2}\mathcal{A}_{2O}}{3.5\lambda_{1}\mathcal{A}_{2}}\right)^{-3.5}\right)-\sum_{n=1}^{U}\frac{1}{n}{\rm Pr}\left(\hat{U}_{2O_{a},0}=n\right)\right)+\sum_{n=1}^{U}{\rm Pr}\left(\hat{U}_{2O_{a},0}=n\right)\notag
\end{align} where ${\rm Pr}\left(\hat{U}_{2O_{a},0}=n\right)$ is given in (\ref{eq:p.m.f._hatUB0}).
\end{lemma}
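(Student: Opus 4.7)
The plan is to condition on the (approximate) distribution of $\hat{U}_{2O_{a},0}$ from Lemma 3, use the uniform random selection rule that defines the IN scheme to get a conditional selection probability, and then do one clean algebraic manipulation to express the resulting infinite sum in closed form by relating it to the p.m.f.\ in Lemma 1.

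First, I would condition on $\hat{U}_{2O_{a},0}=n$, which is the number of active offloaded users of $u_0$'s nearest macro-BS including $u_0$ itself (valid since $u_0\in\mathcal{U}_{2O}$). Per the scheme description in Section \ref{sub:IN_intro}, if $n\le U$ the macro-BS performs IN to all $n$ of them, so $u_0$ is selected with probability $1$; if $n>U$, the macro-BS picks $U$ out of $n$ uniformly at random, so $u_0$ is selected with probability $U/n$. Total probability therefore gives
\begin{align*}
\Pr\bigl(\mathcal{E}_{2OC,0}(U)\bigr)
= \sum_{n=1}^{U}\Pr\bigl(\hat{U}_{2O_{a},0}=n\bigr)
 + \sum_{n=U+1}^{\infty}\frac{U}{n}\Pr\bigl(\hat{U}_{2O_{a},0}=n\bigr).
\end{align*}

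Next, I would write $\sum_{n=U+1}^{\infty}\frac{U}{n}\Pr(\hat{U}_{2O_{a},0}=n)=U\sum_{n=1}^{\infty}\frac{1}{n}\Pr(\hat{U}_{2O_{a},0}=n)-U\sum_{n=1}^{U}\frac{1}{n}\Pr(\hat{U}_{2O_{a},0}=n)$. The first sum is the only nontrivial piece. Using the Lemma~3 formula and the identity $\frac{1}{n\,\Gamma(n)}=\frac{1}{n!}$, I observe that
\begin{align*}
\frac{1}{n}\Pr\bigl(\hat{U}_{2O_{a},0}=n\bigr)
= \frac{\lambda_{1}\mathcal{A}_{2}}{\lambda_{2}\mathcal{A}_{2O}}\,\frac{3.5^{3.5}\Gamma(n+3.5)}{\Gamma(3.5)\,n!}\Bigl(\tfrac{\lambda_{2}\mathcal{A}_{2O}}{\lambda_{1}\mathcal{A}_{2}}\Bigr)^{n}\Bigl(3.5+\tfrac{\lambda_{2}\mathcal{A}_{2O}}{\lambda_{1}\mathcal{A}_{2}}\Bigr)^{-(n+3.5)},
\end{align*}
which is exactly $\frac{\lambda_{1}\mathcal{A}_{2}}{\lambda_{2}\mathcal{A}_{2O}}\Pr(U_{2O_{a},0}=n)$ with the p.m.f.\ from Lemma 1. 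Summing over $n\ge 1$ and using that the p.m.f.\ of $U_{2O_{a},0}$ sums to one, together with the explicit $n=0$ mass $\Pr(U_{2O_{a},0}=0)=\bigl(1+\tfrac{\lambda_{2}\mathcal{A}_{2O}}{3.5\lambda_{1}\mathcal{A}_{2}}\bigr)^{-3.5}$, yields
\begin{align*}
\sum_{n=1}^{\infty}\frac{1}{n}\Pr\bigl(\hat{U}_{2O_{a},0}=n\bigr)
= \frac{\lambda_{1}\mathcal{A}_{2}}{\lambda_{2}\mathcal{A}_{2O}}\left(1-\Bigl(1+\tfrac{\lambda_{2}\mathcal{A}_{2O}}{3.5\lambda_{1}\mathcal{A}_{2}}\Bigr)^{-3.5}\right).
\end{align*}

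Substituting this closed form back and collecting the remaining two finite sums over $n=1,\dots,U$ gives exactly the expression in \emph{Lemma \ref{lem:prob_IN}}. The only subtlety, and the one place the argument is not entirely mechanical, is noticing the index shift between the Lemma 3 p.m.f.\ of $\hat{U}_{2O_{a},0}$ (which has $\Gamma(n)$ in the denominator and $q^{n-1}$) and the Lemma 1 p.m.f.\ of $U_{2O_{a},0}$ (which has $n!$ and $q^{n}$); once the factor of $1/n$ is absorbed, the two p.m.f.s line up and the otherwise intractable series over $\Gamma(n+3.5)/n!$ is summed in closed form by normalization, avoiding any explicit use of hypergeometric identities.
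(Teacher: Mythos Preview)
Your proof is correct and follows essentially the same approach as the paper: condition on $\hat{U}_{2O_{a},0}$ via total probability to obtain the split $\sum_{n=1}^{U}\Pr(\hat{U}_{2O_{a},0}=n)+\sum_{n>U}\tfrac{U}{n}\Pr(\hat{U}_{2O_{a},0}=n)$, then evaluate $\sum_{n\ge 1}\tfrac{1}{n}\Pr(\hat{U}_{2O_{a},0}=n)$ in closed form. The paper merely cites an external reference for that closed-form step, whereas you spell it out via the size-biasing identity $\tfrac{1}{n}\Pr(\hat{U}_{2O_{a},0}=n)=\tfrac{\lambda_{1}\mathcal{A}_{2}}{\lambda_{2}\mathcal{A}_{2O}}\Pr(U_{2O_{a},0}=n)$ and normalization of the Lemma~1 p.m.f.; this is a nice self-contained touch but not a different route.
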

\begin{proof}
According to total probability theorem, we have
\begin{align}\label{eq:IN_prob_U}
{\rm Pr}\left(\mathcal{E}_{2OC,0}(U)\right)=&\sum_{n=1}^{\infty}{\rm Pr}\left(\mathcal{E}_{2OC,0}(U)\Big|\hat{U}_{2O_{a},0}=n\right){\rm Pr}\left(\hat{U}_{2O_{a},0}=n\right)\notag\\
=&\sum_{n=1}^{U}{\rm Pr}\left(\hat{U}_{2O_{a},0}=n\right)+\sum_{n=U+1}^{\infty}\frac{U}{n}{\rm Pr}\left(\hat{U}_{2O_{a},0}=n\right)
\end{align} 
where $\sum_{n=1}^{\infty}\frac{1}{n}{\rm Pr}\left(\hat{U}_{2O_{a},0}=n\right)=\frac{\lambda_{1}\mathcal{A}_{2}}{\lambda_{2}\mathcal{A}_{2O}}\left(1-\left(1+\frac{\lambda_{2}\mathcal{A}_{2O}}{3.5\lambda_{1}\mathcal{A}_{2}}\right)^{-3.5}\right)$ is calculated by using a similar method as used in \cite[\emph{Proposition 2}]{yu13}.
\end{proof}

\section{Rate Coverage Probability Analysis of Interference Nulling}\label{sec:CPrate}
In this section, we investigate the rate coverage probability of the IN scheme. First, we derive the SIR coverage probability of each user type. Next, based on the SIR coverage probabilities of all user types, we obtain the rate coverage probability and its mean load approximation (MLA). 

\subsection{SIR Coverage Probability of Each User Type}
As discussed in Section \ref{sub:IN_intro}, under the IN scheme, the typical user $u_{0}$ can be in any user set $\mathcal{U}_{k}$, where $k\in\mathcal{K}\define\{1,2\bar{O},2OC,2O\bar{C}\}$. Let\footnote{Note that $\mathcal{S}_{{\rm IN},1}(\beta)$ is dependent of the design parameter $U$, while $\mathcal{S}_{{\rm IN},k}(\beta)$ is independent of $U$ for all $k\in\{2\bar{O},2OC,2O\bar{C}\}$. For notational simplicity, we do not make explicit the dependence of $\mathcal{S}_{{\rm IN},1}(\beta)$ on $U$. } $\mathcal{S}_{{\rm IN},k}(\beta)\define{\rm Pr}\left({\rm SIR}_{{\rm IN},k,0}>\beta|u_{0}\in\mathcal{U}_{k}\right)$ denote the SIR coverage probability of $u_{0}\in\mathcal{U}_{k}$ ($k\in\mathcal{K}$) under the IN scheme, where ${\rm SIR}_{{\rm IN},k,0}$ denotes the SIR of $u_{0}\in\mathcal{U}_{k}$ under the IN scheme and $\beta$ is the SIR threshold. Similar to (\ref{eq:CPrate_def}), the rate coverage probability of $u_{0}\in\mathcal{U}_{k}$ ($k\in\mathcal{K}$) under the IN scheme is defined as\footnote{Note that $\mathcal{R}_{{\rm IN},1}(\tau)$ is dependent of the design parameter $U$, while $\mathcal{R}_{{\rm IN},k}(\tau)$ is independent of $U$ for all $k\in\{2\bar{O},2OC,2O\bar{C}\}$. For notational simplicity, we do not make explicit the dependence of $\mathcal{R}_{{\rm IN},1}(\tau)$ on $U$. }
\begin{align}\label{eq:CPrate_each_IN}
\mathcal{R}_{{\rm IN},k}(\tau)&\define{\rm Pr}\left(R_{\rm IN,k,0}>\tau|u_{0}\in\mathcal{U}_{k}\right)\notag\\
&={\rm Pr}\left(\frac{W}{L_{0,j_{k}}}\log_{2}\left(1+{\rm SIR}_{{\rm IN},k,0}\right)>\tau|u_{0}\in\mathcal{U}_{k}\right)\notag\\
&={\rm E}_{L_{0,j_{k}}}\left[\mathcal{S}_{{\rm IN},k}\left(f\left(\frac{L_{0,j_{k}}\tau}{W}\right)\right)\right]
\end{align}
where $R_{\rm IN,k,0}$ denotes the rate of $u_{0}\in\mathcal{U}_{k}$ under the IN scheme, $f(x)=2^{x}-1$, and $L_{0,j_{k}}$ is the load of the typical user's serving BS which is in the $j_{k}$th tier. Here, $j_{k}$ is given in Table \ref{tab:para_B2larger}. According to (\ref{eq:CPrate_def}) and total probability theorem, the rate coverage probability of the IN scheme under design parameter $U$ can be written as 
\begin{align}\label{eq:CPrate_CP}
\mathcal{R}_{\rm IN}(U,\tau)&\define {\rm Pr}\left(R_{{\rm IN},0}>\tau\right)\notag\\
&=\sum_{k\in\mathcal{K}}\mathcal{A}_{k}{\rm E}_{L_{0,j_{k}}}\left[\mathcal{S}_{{\rm IN},k}\left(f\left(\frac{L_{0,j_{k}}\tau}{W}\right)\right)\right]
\end{align} 
where $R_{{\rm IN},0}$ is the rate of $u_{0}$ (which can be in any user set) under the IN scheme and $\mathcal{A}_{k}\define{\rm Pr}\left(u_{0}\in\mathcal{U}_{k}\right)$ ($k\in\mathcal{K}$). Specifically, $\mathcal{A}_{2OC}=\mathcal{A}_{2O}{\rm Pr}\left(\mathcal{E}_{2OC,0}(U)\right)$ and $\mathcal{A}_{2O\bar{C}}=\mathcal{A}_{2O}\left(1-{\rm Pr}\left(\mathcal{E}_{2OC,0}(U)\right)\right)$, where $\mathcal{A}_{2O}\define{\rm Pr}\left(u_{0}\in\mathcal{U}_{2O}\right)$ is given in \emph{Lemma \ref{lem:p.m.f._offload_random}}. Note that $\mathcal{A}_{k}$ ($k\in\{1,2\bar{O},2O\}$) is independent of $U$. 
In this part, we calculate $\mathcal{S}_{{\rm IN},k}(\beta)$. Based on $\mathcal{S}_{{\rm IN},k}(\beta)$, we shall calculate $\mathcal{R}_{\rm IN}(U,\tau)$ in the next part. Let $R_{jk}$ denote the minimum possible distance between $u_0\in\mathcal{U}_{k}$ ($k\in\mathcal{K}$) and its nearest interferer in the $j$th tier ($j=1,2$). Note that $\mathcal{S}_{{\rm IN},k}(\beta)={\rm E}_{R_{1k},R_{2k}}\left[\mathcal{S}_{{\rm IN},k,R_{1k},R_{2k}}(r_{1k},r_{2k},\beta)\right]$, where $\mathcal{S}_{{\rm IN},k,R_{1k},R_{2k}}(r_{1k},r_{2k},\beta)\define {\rm Pr}\Big({\rm SIR}_{{\rm IN},k,0}>\beta|u_{0}\in\mathcal{U}_{k},R_{1k}=r_{1k},R_{2k}=r_{2k}\Big)$ denotes the conditional SIR coverage probability\footnote{When $u_{0}\in\mathcal{U}_{1}$, we also condition on $u_{2OC,0}$. For notational simplicity, we do not make this dependence explicit.}. To calculate $\mathcal{S}_{{\rm IN},k}(\beta)$, we first need to calculate $\mathcal{S}_{{\rm IN},k,R_{1k},R_{2k}}(r_{1k},r_{2k},\beta)$, which is provided as follows:

\begin{lemma}\label{theo:CP_condi}
The conditional SIR coverage probability of $u_{0}\in\mathcal{U}_{k}$ under the IN scheme is 
\begin{align}\label{eq:condiCP}
\mathcal{S}_{{\rm IN},k,R_{1k},R_{2k}}(r_{1k},r_{2k},\beta)=\sum_{n=0}^{M_{k}-1}\mathcal{T}_{k,R_{1k},R_{2k}}\left(n,r_{1k},r_{2k},\beta\right) 
\end{align} 
where $k\in\mathcal{K}$ and 
\begin{align}\label{eq:Tn}
&\mathcal{T}_{k,R_{1k},R_{2k}}\left(n,r_{1k},r_{2k},\beta\right)\notag\\
=&
\begin{cases}
\frac{1}{n!}\sum_{n_{1}=0}^{n}\binom{n}{n_{1}}\tilde{\mathcal{L}}_{I_{1}}^{(n_{1})}\left(s,r_{1k}\right)\Big|_{s=\beta Y_{j_{k}}^{\alpha_{j_{k}}}\frac{P_{1}}{P_{j_{k}}}}\tilde{\mathcal{L}}_{I_{2}}^{(n-n_{1})}\left(s,r_{2k}\right)\Big|_{s=\beta Y_{j_{k}}^{\alpha_{j_{k}}}\frac{P_{2}}{P_{j_{k}}}}\;,\hspace{0mm}{\rm if}\hspace{2mm} k\in\{1,2\bar{O},2OC\}\\
\frac{1}{n!}\sum_{(q_{a})_{a=1}^{3}\in\mathcal{Q}_{3}}\binom{n}{q_{1},q_{2},q_{3}}\tilde{\mathcal{L}}_{I_{1}}^{(q_{1})}\left(s,r_{1k}\right)\Big|_{s=\beta Y_{j_{k}}^{\alpha_{j_{k}}}\frac{P_{1}}{P_{j_{k}}}}\tilde{\mathcal{L}}_{I_{2}}^{(q_{2})}\left(s,r_{2k}\right)\Big|_{s=\beta Y_{j_{k}}^{\alpha_{j_{k}}}\frac{P_{2}}{P_{j_{k}}}}\Gamma\left(q_{3}+1\right)\\
\hspace{44mm}\times\left(\beta\frac{P_{1}Y_{j_{k}}^{\alpha_{j_{k}}}}{P_{j_{k}}r_{1k}^{\alpha_{1}}}\right)^{q_{3}}\left(1+\beta\frac{P_{1}Y_{j_{k}}^{\alpha_{j_{k}}}}{P_{j_{k}}r_{1k}^{\alpha_{1}}}\right)^{-\left(q_{3}+1\right)}\;,\hspace{0mm}{\rm if}\hspace{2mm} k=2O\bar{C}
\end{cases}
.
\end{align} 
Here, $j_{k}$, $r_{1k}$, $r_{2k}$, and $M_{k}$ are given in Table \ref{tab:para_B2larger}, 
\begin{align}
\mathcal{Q}_{3}&\define\{(q_{a})_{a=1}^{3}|q_{a}\in\mathbb{N}^{0},\sum_{a=1}^{3}q_{a}=n\}\;,\\
\tilde{\mathcal{L}}_{I_j}^{(m)}\left(s,r_{jk}\right)&=\mathcal{L}_{I_j}\left(s,r_{jk}\right)\notag\\&\hspace{-16mm}\times\sum_{(p_{a})_{a=1}^{m}\in\mathcal{M}_{m}}\frac{m!}{\prod_{a=1}^{m}p_{a}!}\prod_{a=1}^{m}\left(\frac{2\pi}{\alpha_{j}}\lambda_{j}s^{\frac{2}{\alpha_{j}}}B^{'}\left(1+\frac{2}{\alpha_{j}},a-\frac{2}{\alpha_{j}},\frac{1}{1+sr_{jk}^{-\alpha_{j}}}\right)\right)^{p_{a}},\;{\rm for}\;\;j\in\{1,2\}
\end{align} 
where\footnote{$\mathcal{L}_{I_j}\left(s,r_{jk}\right)$ is the Laplace transform of the aggregated interference $I_{j}=\sum_{\ell\in\Phi\left(\lambda_{j}\right)\backslash B(0,r_{jk})}\frac{\left|\mathbf{h}_{j,\ell0}^{\dagger}\mathbf{f}_{j,\ell}\right|^{2}}{\left|D_{j,\ell0}\right|^{\alpha_{j}}}$ from the $j$th tier.} 
\begin{align}
\mathcal{L}_{I_j}\left(s,r_{jk}\right)&=\exp\left(-\frac{2\pi }{\alpha_{j}}\lambda_{j}s^{\frac{2}{\alpha_{j}}}B^{'}\left(\frac{2}{\alpha_{j}},1-\frac{2}{\alpha_{j}},\frac{1}{1+sr_{jk}^{-\alpha_{j}}}\right)\right)\;, \\
\mathcal{M}_{m}&\define \Big\{(p_{a})_{a=1}^{m}|p_{a}\in\mathbb{N}^{0},\sum_{a=1}^{m}a\cdot p_{a}=m\Big\}\;,
\end{align} 
and $B^{'}(a,b,z) \define \int_{z}^{1}u^{a-1}(1-u)^{b-1}{\rm d}u$ $(0<z<1)$ is the complementary incomplete Beta function \cite{gupta13}.
\end{lemma}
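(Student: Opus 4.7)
The plan is to condition on the minimum interferer distances $R_{1k},R_{2k}$ (and, in the case $k=1$, additionally on the IN DoF $u_{2OC,0}$, which sets the Gamma shape parameter of the desired-signal channel gain) and exploit the fact that, for every user type $k\in\mathcal{K}$, the effective channel gain $|\mathbf{h}_{j_{k},00}^{\dagger}\mathbf{f}_{j_{k},0}|^{2}$ is ${\rm Gamma}(M_{k},1)$, independent of all interferers. Normalizing by $\frac{P_{j_{k}}}{Y_{j_{k}}^{\alpha_{j_{k}}}}$, the conditional SIR coverage event becomes $G>c_{1}I_{1}+c_{2}I_{2}$ (or, for $k=2O\bar C$, $G>c_{1}I_{1}+c_{2}I_{2}+c_{3}H$, where $H\dis{\rm Exp}(1)$ is the nearest-macro fading gain). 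The classical Gamma CDF identity
\begin{equation*}
{\rm Pr}\bigl(G>z\bigr)=e^{-z}\sum_{n=0}^{M_{k}-1}\frac{z^{n}}{n!},\qquad G\dis{\rm Gamma}(M_{k},1),
\end{equation*}
and pulling the expectation over the interferers inside the finite sum then rewrites $\mathcal{S}_{{\rm IN},k,R_{1k},R_{2k}}$ as a sum over $n=0,\dots,M_{k}-1$ of moments of the form ${\rm E}\bigl[(c_{1}I_{1}+c_{2}I_{2}+\cdots)^{n}e^{-(c_{1}I_{1}+\cdots)}\bigr]/n!$, which is the outer summation in \eqref{eq:condiCP}.

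For the three ``two-interferer'' cases $k\in\{1,2\bar O,2OC\}$, the next step is the binomial expansion of $(c_{1}I_{1}+c_{2}I_{2})^{n}$. Because the two tier PPPs and their Rayleigh fading marks are mutually independent, each term factors as ${\rm E}[(c_{1}I_{1})^{n_{1}}e^{-c_{1}I_{1}}]\cdot{\rm E}[(c_{2}I_{2})^{n-n_{1}}e^{-c_{2}I_{2}}]$. Each single-tier moment is the scaled Laplace derivative ${\rm E}[I_{j}^{m}e^{-sI_{j}}]=(-1)^{m}\mathcal{L}_{I_{j}}^{(m)}(s,r_{jk})$, and $\tilde{\mathcal{L}}_{I_{j}}^{(m)}(s,r_{jk})$ is precisely the convenient packaging $(-s)^{m}\mathcal{L}_{I_{j}}^{(m)}(s,r_{jk})$. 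Substituting $c_{1},c_{2}$ in terms of $Y_{j_{k}}$, $P_{1},P_{2},P_{j_{k}}$ as listed in Table~\ref{tab:para_B2larger} yields the first branch of \eqref{eq:Tn}.

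For $k=2O\bar C$ the SIR denominator has an additional independent term contributed by the singled-out nearest macro-BS. Replacing the binomial by a multinomial expansion of $(c_{1}I_{1}+c_{2}I_{2}+c_{3}H)^{n}$ produces the sum over compositions $(q_{1},q_{2},q_{3})\in\mathcal{Q}_{3}$, the $I_{1}$- and $I_{2}$-factors are handled exactly as before, and the extra expectation over $H\dis{\rm Exp}(1)$ evaluates in closed form to
\begin{equation*}
{\rm E}\bigl[(c_{3}H)^{q_{3}}e^{-c_{3}H}\bigr]=\Gamma(q_{3}+1)\,c_{3}^{q_{3}}(1+c_{3})^{-(q_{3}+1)},
\end{equation*}
which, with $c_{3}=\beta P_{1}Y_{j_{k}}^{\alpha_{j_{k}}}/(P_{j_{k}}r_{1k}^{\alpha_{1}})$, is the extra factor in the second branch of \eqref{eq:Tn}.

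The main technical obstacle is deriving the explicit Bell-polynomial/incomplete-Beta form of $\tilde{\mathcal{L}}_{I_{j}}^{(m)}$. The starting point is the PGFL of the homogeneous PPP $\Phi(\lambda_{j})$ restricted to $\{r>r_{jk}\}$ together with the Rayleigh fading MGF $1/(1+sr^{-\alpha_{j}})$; the substitution $u=1/(1+sr^{-\alpha_{j}})$ converts the integral into the complementary incomplete Beta $B'\bigl(2/\alpha_{j},1-2/\alpha_{j},\cdot\bigr)$, giving $\mathcal{L}_{I_{j}}(s,r_{jk})=\exp(-g_{j}(s))$. Applying Fa\`a di Bruno's formula (equivalently, the complete exponential Bell polynomial $B_{m}$) to $e^{-g_{j}}$ produces the sum over $(p_{a})_{a=1}^{m}\in\mathcal{M}_{m}$. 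Differentiating under the integral sign $a$ times and simplifying powers of $(1+sr^{-\alpha_{j}})$ reduces every $g_{j}^{(a)}(s)$ to a single closed-form term proportional to $\lambda_{j}s^{2/\alpha_{j}-a}B'\bigl(1+2/\alpha_{j},a-2/\alpha_{j},\cdot\bigr)$; absorbing the $(-s)^{m}$ prefactor using $\sum ap_{a}=m$ then produces exactly the expression in the statement. Substituting this $\tilde{\mathcal{L}}_{I_{j}}^{(m)}$ back into the factorized moments and summing over $n$ yields \eqref{eq:condiCP}.
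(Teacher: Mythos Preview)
Your proposal is correct and follows essentially the same approach as the paper: Gamma CCDF expansion into a finite sum, binomial (resp.\ multinomial for $k=2O\bar C$) splitting of the interference terms, identification of ${\rm E}[I_j^{m}e^{-sI_j}]$ with $(-1)^{m}\mathcal{L}_{I_j}^{(m)}$, PGFL computation of $\mathcal{L}_{I_j}$ with the substitution $u=1/(1+sr^{-\alpha_j})$ yielding the incomplete Beta, and Fa\`a di Bruno for the higher derivatives. Your write-up in fact makes explicit a couple of points the paper leaves implicit, namely the identification $\tilde{\mathcal{L}}_{I_j}^{(m)}=(-s)^{m}\mathcal{L}_{I_j}^{(m)}$ and the closed-form evaluation of the nearest-macro ${\rm Exp}(1)$ moment in the $2O\bar C$ case.
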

\begin{proof}
See Appendix \ref{proof:theo_CPcondi}.
\end{proof}

Note that $\mathcal{T}_{k,R_{1k},R_{2k}}\left(n,r_{1k},r_{2k},\beta\right)$ in (\ref{eq:condiCP}) can be interpreted as the gain of the SIR coverage probability when the DoF for boosting the desired signal to $u_{0}\in\mathcal{U}_{k}$ at its serving BS is changed from $n$ to $n+1$. 

\begin{table}[t]
\caption{Parameter values under the IN scheme when $u_{0}\in\mathcal{U}_{k}$ with $k\in\mathcal{K}$}\label{tab:para_B2larger}
\begin{center}
\vspace{-6mm}
\begin{tabular}{|c|c|c|c|c|c|}
\hline
$k$&$j_{k}$&$r_{1k}$&$r_{2k}$&$M_{k}$\\
\hline
$1$ &$1$&$Y_{1}$ &$\left(\frac{P_{2}B}{P_{1}}\right)^{\frac{1}{\alpha_{2}}}Y_{1}^{\frac{\alpha_{1}}{\alpha_{2}}}$&$N_{1}-u_{2O,0}$\\
\hline
$2\bar{O}$&$2$ & $\left(\frac{P_{1}}{P_{2}}\right)^{\frac{1}{\alpha_{1}}}Y_{2}^{\frac{\alpha_{2}}{\alpha_{1}}}$&$Y_{2}$&$N_{2}$\\
\hline
$2OC$ & $2$ & $Y_{1}$ & $Y_{2}$ & $N_{2}$\\
\hline
$2O\bar{C}$ & $2$ & $Y_{1}$ & $Y_{2}$ & $N_{2}$\\
\hline
\end{tabular}
\end{center}
\vspace{-6mm}
\end{table}

Based on \emph{Lemma \ref{lem:pmf_u2OC}} and \emph{Lemma \ref{theo:CP_condi}}, we have the SIR coverage probability $\mathcal{S}_{{\rm IN},k}(\beta)$ of $u_{0}\in\mathcal{U}_{k}$ ($k\in\mathcal{K}$) as follows:
\begin{theorem}\label{cor:CP_uncondi}
The SIR coverage probability of $u_{0}\in\mathcal{U}_{k}$ under the IN scheme is 
\begin{align}
\mathcal{S}_{{\rm IN},1}(\beta) &= \sum_{n=0}^{U}\left(\int_{0}^{\infty}\mathcal{S}_{{\rm IN},1,Y_{1}}(y,\beta)f_{Y_{1}}(y){\rm d}y\right){\rm Pr}\left(u_{2OC,0}=n\right)\;,\label{eq:CP1}\\
\mathcal{S}_{{\rm IN},2\bar{O}}(\beta)&=\int_{0}^{\infty}\mathcal{S}_{{\rm IN},2\bar{O},Y_{2}}(y,\beta)f_{Y_{2}}(y){\rm d}y\;,\label{eq:CP2barO}\\
\mathcal{S}_{{\rm IN},2OC}(\beta)&=\int_{0}^{\infty}\int_{\left(\frac{P_{2}}{P_{1}}\right)^{\frac{1}{\alpha_{2}}}x^{\frac{\alpha_{1}}{\alpha_{2}}}}^{\left(\frac{BP_{2}}{P_{1}}\right)^{\frac{1}{\alpha_{2}}}x^{\frac{\alpha_{1}}{\alpha_{2}}}}\mathcal{S}_{{\rm IN},2OC,Y_{1},Y_{2}}(x,y,\beta)f_{Y_{1},Y_{2}}(x,y){\rm d}y{\rm d}x\;,\label{eq:CP2OC}\\
\mathcal{S}_{{\rm IN},2O\bar{C}}(\beta)&=\int_{0}^{\infty}\int_{\left(\frac{P_{2}}{P_{1}}\right)^{\frac{1}{\alpha_{2}}}x^{\frac{\alpha_{1}}{\alpha_{2}}}}^{\left(\frac{BP_{2}}{P_{1}}\right)^{\frac{1}{\alpha_{2}}}x^{\frac{\alpha_{1}}{\alpha_{2}}}}\mathcal{S}_{{\rm IN},2O\bar{C},Y_{1},Y_{2}}(x,y,\beta)f_{Y_{1},Y_{2}}(x,y){\rm d}y{\rm d}x\;,\label{eq:CP2ObarC}
\end{align} 
where 
\begin{align}
f_{Y_{1}}(y)&= \frac{2\pi\lambda_{1}}{\mathcal{A}_{1}}y\exp\left(-\pi\left(\lambda_{1}y^{2}+\lambda_{2}\left(\frac{P_{2}B}{P_{1}}\right)^{\frac{2}{\alpha_{2}}}y^{\frac{2\alpha_{1}}{\alpha_{2}}}\right)\right)\;,\\
f_{Y_{2}}(y)&=\frac{2\pi\lambda_{2}}{\mathcal{A}_{2\bar{O}}}y\exp(-\pi\lambda_{2}y^{2})\exp\left(-\pi\lambda_{1}\left(\frac{P_{1}}{P_{2}}\right)^{\frac{2}{\alpha_{1}}}y^{\frac{2\alpha_{2}}{\alpha_{1}}}\right)\;,\\
f_{Y_{1},Y_{2}}(x,y)&=\frac{4\pi^{2}\lambda_{1}\lambda_{2}}{\mathcal{A}_{2O}}xy\exp\left(-\pi\left(\lambda_{1}x^{2}+\lambda_{2}y^{2}\right)\right)\;.
\end{align} 
Here, $\mathcal{A}_{2O}$ is given in \emph{Lemma \ref{lem:p.m.f._offload_random}}, and 
\begin{align}
\mathcal{A}_{1}&=2\pi\lambda_{1}\int_{0}^{\infty}z\exp\left(-\pi\lambda_{1}z^{2}\right)\exp\left(-\pi\left(\lambda_{2}\left(\frac{BP_{2}}{P_{1}}\right)^{\frac{2}{\alpha_{2}}}z^{\frac{2\alpha_{1}}{\alpha_{2}}}\right)\right){\rm d}z\;,\\ 
\mathcal{A}_{2\bar{O}}&=2\pi\lambda_{2}\int_{0}^{\infty}z\exp\left(-\pi\lambda_{1}\left(\frac{P_{1}}{P_{2}}\right)^{\frac{2}{\alpha_{1}}}z^{\frac{2\alpha_{2}}{\alpha_{1}}}-\pi\lambda_{2}z^{2}\right){\rm d}z\;.
\end{align}
\end{theorem}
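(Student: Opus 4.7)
The plan is to start from Lemma \ref{theo:CP_condi}, which gives the conditional SIR coverage probability $\mathcal{S}_{{\rm IN},k,R_{1k},R_{2k}}(r_{1k},r_{2k},\beta)$ given the minimum interferer distances $R_{1k}$ and $R_{2k}$ in each tier, and uncondition on those distances (and, for $k=1$, also on the random IN DoF $u_{2OC,0}$ via Lemma \ref{lem:pmf_u2OC}). The structural identity I would use is
\begin{align*}
\mathcal{S}_{{\rm IN},k}(\beta)=\mathrm{E}_{R_{1k},R_{2k}}\!\left[\mathcal{S}_{{\rm IN},k,R_{1k},R_{2k}}(R_{1k},R_{2k},\beta)\right],
\end{align*}
together with $M_{1}=N_{1}-u_{2OC,0}$, which introduces the outer sum $\sum_{n=0}^{U}(\cdot)\,{\rm Pr}(u_{2OC,0}=n)$ in \eqref{eq:CP1}.

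First, I would identify $(R_{1k},R_{2k})$ in terms of the tagged nearest-BS distances $Y_{1}$ and $Y_{2}$ in each tier, as recorded in Table \ref{tab:para_B2larger}. For a macro-user ($k=1$), $R_{11}=Y_{1}$ (the serving macro-BS), and the event $u_{0}\in\mathcal{U}_{1}$ forces the nearest pico-BS to lie outside the ball of radius $(P_{2}B/P_{1})^{1/\alpha_{2}}Y_{1}^{\alpha_{1}/\alpha_{2}}$, giving $R_{21}=(P_{2}B/P_{1})^{1/\alpha_{2}}Y_{1}^{\alpha_{1}/\alpha_{2}}$. The unoffloaded pico-user case is symmetric. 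For the offloaded-user cases ($k\in\{2OC,2O\bar{C}\}$), $R_{1k}=Y_{1}$ and $R_{2k}=Y_{2}$, but the association rule restricts the pair $(Y_{1},Y_{2})$ to the offloading wedge $(P_{2}/P_{1})^{1/\alpha_{2}}Y_{1}^{\alpha_{1}/\alpha_{2}}\le Y_{2}<(BP_{2}/P_{1})^{1/\alpha_{2}}Y_{1}^{\alpha_{1}/\alpha_{2}}$, which will appear as the inner integration limits in \eqref{eq:CP2OC}--\eqref{eq:CP2ObarC}.

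Next, I would derive the conditional densities of the tagged distances. Using that the two PPPs are independent with contact distances having density $2\pi\lambda_{j}y\exp(-\pi\lambda_{j}y^{2})$, I would apply Bayes' rule: for example,
\begin{align*}
f_{Y_{1}|u_{0}\in\mathcal{U}_{1}}(y)=\frac{1}{\mathcal{A}_{1}}\,2\pi\lambda_{1}y\,e^{-\pi\lambda_{1}y^{2}}\cdot{\rm Pr}\!\left(Y_{2}>\left(\tfrac{P_{2}B}{P_{1}}\right)^{\!1/\alpha_{2}}\!y^{\alpha_{1}/\alpha_{2}}\right),
\end{align*}
where the last factor is an exponential void probability of $\Phi_{2}$. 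Multiplying these two terms immediately yields the claimed $f_{Y_{1}}(y)$. Analogous calculations, together with the independence of $Y_{1}$ and $Y_{2}$ before conditioning, yield $f_{Y_{2}}(y)$ and $f_{Y_{1},Y_{2}}(x,y)$ in the theorem statement, where the conditioning on $u_{0}\in\mathcal{U}_{2O}$ absorbs into the prefactor $1/\mathcal{A}_{2O}$ and the density on the unrestricted product space reduces to the product form because the offloading event is encoded entirely in the integration region.

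Finally, I would assemble the four identities by substituting the conditional distances into $\mathcal{S}_{{\rm IN},k,R_{1k},R_{2k}}$ from Lemma \ref{theo:CP_condi} and integrating against the appropriate density; for $k=1$, an extra outer weighted sum over $n=0,\dots,U$ using ${\rm Pr}(u_{2OC,0}=n)$ from Lemma \ref{lem:pmf_u2OC} accounts for $M_{1}$ being random. The expressions for $\mathcal{A}_{1}$ and $\mathcal{A}_{2\bar{O}}$ follow by integrating the joint PPP density over the macro-cell region and unoffloaded-pico region respectively. The main obstacle, and the step I would double-check most carefully, is the correct bookkeeping of the conditioning events: ensuring that the ``nearest interferer in each tier'' really coincides with the tagged distances as in Table \ref{tab:para_B2larger} under the user-association rule, so that the integration domains match the supports of the conditional densities exactly. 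Once this is secured, the rest is a direct application of the tower property of expectation.
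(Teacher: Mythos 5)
Your proposal is correct and follows essentially the same route as the paper: the paper's proof simply unconditions $\mathcal{S}_{{\rm IN},k,R_{1k},R_{2k}}(r_{1k},r_{2k},\beta)$ from Lemma \ref{theo:CP_condi} over the serving distances (and, for $k=1$, over $u_{2OC,0}$ via Lemma \ref{lem:pmf_u2OC}), citing \cite{singh13} and \cite{sakr14} for the densities $f_{Y_1}$, $f_{Y_2}$, $f_{Y_1,Y_2}$ and the association probabilities. Your Bayes'-rule/void-probability derivations of those densities and your identification of $(R_{1k},R_{2k})$ with the Table \ref{tab:para_B2larger} entries are exactly the content of those cited results, so nothing is missing.
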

\begin{proof}
Follows by removing the conditions of $\mathcal{S}_{{\rm IN},k,R_{1k},R_{2k}}(r_{1k},r_{2k},\beta)$ on $R_{jk}$ ($j=1,2$) in (\ref{eq:condiCP}). Here, $f_{Y_{1}}(y)$, $f_{Y_{2}}(y)$, $\mathcal{A}_{1}$ and $\mathcal{A}_{2\bar{O}}$ are given in \cite{singh13}, and $f_{Y_{1},Y_{2}}(x,y)$ is given in \cite{sakr14}. 
\end{proof}


\subsection{Rate Coverage Probability}\label{subsec:CPrate_anal}
Based on the SIR coverage probability of $u_{0}\in\mathcal{U}_{k}$ $(k\in\mathcal{K})$ in \emph{Theorem \ref{cor:CP_uncondi}} and the connection between $\mathcal{S}_{{\rm IN},k}(\beta)$ and $\mathcal{R}_{\rm IN}(U,\tau)$ in (\ref{eq:CPrate_CP}), we have the rate coverage probability as follows:
\begin{theorem}\label{cor:CPrate_overall}
The rate coverage probability of the IN scheme under $U$ is
\begin{align}\label{eq:CPrate_IN_overall}
\mathcal{R}_{\rm IN}(U,\tau)=&\mathcal{A}_{1}\mathcal{R}_{{\rm IN},1}(\tau)+\mathcal{A}_{2\bar{O}}\mathcal{R}_{{\rm IN},2\bar{O}}(\tau)\notag\\
&\hspace{3mm}+\mathcal{A}_{2O}{\rm Pr}\left(\mathcal{E}_{2OC,0}(U)\right)\mathcal{R}_{{\rm IN},2OC}(\tau)+\mathcal{A}_{2O}\left(1-{\rm Pr}\left(\mathcal{E}_{2OC,0}(U)\right)\right)\mathcal{R}_{{\rm IN},2O\bar{C}}(\tau)
\end{align} 
where $\mathcal{A}_{1}$, $\mathcal{A}_{2\bar{O}}$ and $\mathcal{A}_{2O}$ are given in \emph{Theorem \ref{cor:CP_uncondi}}, and   
\begin{align}
&\mathcal{R}_{{\rm IN},1}(\tau) = \sum_{n\ge 1}{\rm Pr}\left(L_{0,1}=n\right)\mathcal{S}_{{\rm IN}, 1}\left(f\left(\frac{n\tau}{W}\right)\right)\;,\label{eq:CPrate1}\\
&\mathcal{R}_{{\rm IN},2\bar{O}}(\tau)=\sum_{n\ge1}{\rm Pr}\left(L_{0,2}=n\right)\mathcal{S}_{{\rm IN},2\bar{O}}\left(f\left(\frac{n\tau}{W}\right)\right)\;,\label{eq:CPrate2barO}\\ 
&\mathcal{R}_{{\rm IN},2OC}(\tau)=\sum_{n\ge1}{\rm Pr}(L_{0,2}=n)\mathcal{S}_{{\rm IN},2OC}\left(f\left(\frac{n\tau}{W}\right)\right)\;,\label{eq:CPrate2OC}\\
&\mathcal{R}_{{\rm IN},2O\bar{C}}(\tau)=\sum_{n\ge1}{\rm Pr}(L_{0,2}=n)\mathcal{S}_{{\rm IN},2O\bar{C}}\left(f\left(\frac{n\tau}{W}\right)\right)\;.\label{eq:CPrate2ObarC}
\end{align} 
Here, $\mathcal{S}_{{\rm IN},k}(\cdot)$ is given by (\ref{eq:CP1})--(\ref{eq:CP2ObarC}), ${\rm Pr}\left(L_{0,1}= n\right) = \frac{3.5^{3.5}\Gamma\left(n+3.5\right)\left(\frac{\lambda_{u}\mathcal{A}_{1}}{\lambda_{1}}\right)^{n-1}}{\Gamma(3.5)(n-1)!\left(\frac{\lambda_{u}\mathcal{A}_{1}}{\lambda_{1}}+3.5\right)^{n+3.5}}$, and ${\rm Pr}\left(L_{0,2}= n\right)$\\$ = \frac{3.5^{3.5}\Gamma\left(n+3.5\right)\left(\frac{\lambda_{u}\mathcal{A}_{2}}{\lambda_{2}}\right)^{n-1}}{\Gamma(3.5)(n-1)!\left(\frac{\lambda_{u}\mathcal{A}_{2}}{\lambda_{2}}+3.5\right)^{n+3.5}}$.
\end{theorem}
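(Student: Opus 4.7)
The plan is to assemble the claimed expression as a direct consequence of (\ref{eq:CPrate_CP}), Theorem \ref{cor:CP_uncondi}, Lemma \ref{lem:prob_IN}, and a standard size-biased approximation for the load distribution. I would start from the decomposition
\begin{align*}
\mathcal{R}_{\rm IN}(U,\tau) = \sum_{k\in\mathcal{K}} \mathcal{A}_k\, {\rm E}_{L_{0,j_k}}\!\left[\mathcal{S}_{{\rm IN},k}\!\left(f\!\left(\tfrac{L_{0,j_k}\tau}{W}\right)\right)\right]
\end{align*}
and split the four terms according to $k\in\{1,2\bar{O},2OC,2O\bar{C}\}$. The three association probabilities $\mathcal{A}_1$, $\mathcal{A}_{2\bar{O}}$, $\mathcal{A}_{2O}$ are already supplied by Theorem \ref{cor:CP_uncondi}. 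For the remaining two coefficients, I would use the conditional factorization $\mathcal{A}_{2OC}=\mathcal{A}_{2O}\,{\rm Pr}(\mathcal{E}_{2OC,0}(U))$ and $\mathcal{A}_{2O\bar{C}}=\mathcal{A}_{2O}(1-{\rm Pr}(\mathcal{E}_{2OC,0}(U)))$, which follows because, conditional on $u_0\in\mathcal{U}_{2O}$, the event $\mathcal{E}_{2OC,0}(U)$ partitions the offloaded-user set; the IN probability is then given explicitly by Lemma \ref{lem:prob_IN}.

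Next, I would expand the expectation over load as the sum
\begin{align*}
{\rm E}_{L_{0,j_k}}\!\left[\mathcal{S}_{{\rm IN},k}\!\left(f\!\left(\tfrac{L_{0,j_k}\tau}{W}\right)\right)\right] = \sum_{n\ge 1} {\rm Pr}(L_{0,j_k}=n)\,\mathcal{S}_{{\rm IN},k}\!\left(f\!\left(\tfrac{n\tau}{W}\right)\right),
\end{align*}
with $\mathcal{S}_{{\rm IN},k}(\cdot)$ taken from Theorem \ref{cor:CP_uncondi}. This step implicitly invokes the standard approximation that, conditional on $u_0\in\mathcal{U}_k$, the load $L_{0,j_k}$ of the typical user's serving BS is (approximately) independent of its ${\rm SIR}_{{\rm IN},k,0}$ — a decoupling assumption that has become canonical in this line of stochastic-geometry load analyses and which allows the product inside the expectation to factor before summing.

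The main obstacle, and the only genuinely non-trivial ingredient, is justifying the explicit PMFs ${\rm Pr}(L_{0,1}=n)$ and ${\rm Pr}(L_{0,2}=n)$. Because the typical user sits inside the Voronoi cell that \emph{contains} the origin, the relevant cell-area distribution is the size-biased version of the PPP-Voronoi area, not the typical cell's area. The argument is to (i) approximate the normalized typical PPP-Voronoi cell area by a Gamma distribution of shape $3.5$ and scale $1/(3.5\lambda_j)$, (ii) size-bias this density by multiplying by area and renormalizing (shape $3.5\mapsto 4.5$), (iii) conditional on the area, model the number of \emph{other} associated users as Poisson with intensity $\lambda_u \mathcal{A}_j$ per unit area, and (iv) add one for $u_0$ itself. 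Integrating the Poisson PMF against the size-biased Gamma area density produces the negative-binomial-type form $\frac{3.5^{3.5}\Gamma(n+3.5)(\lambda_u\mathcal{A}_j/\lambda_j)^{n-1}}{\Gamma(3.5)(n-1)!(\lambda_u\mathcal{A}_j/\lambda_j+3.5)^{n+3.5}}$ stated in the theorem. Once this PMF is in hand, substituting it into the expanded expectations, grouping the four user categories, and applying the coefficient identities above yields (\ref{eq:CPrate_IN_overall}) immediately.
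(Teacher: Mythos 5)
Your proposal is correct and follows essentially the same route as the paper: condition on the load, apply the SIR coverage probabilities of Theorem \ref{cor:CP_uncondi} together with the splitting $\mathcal{A}_{2OC}=\mathcal{A}_{2O}{\rm Pr}(\mathcal{E}_{2OC,0}(U))$ and $\mathcal{A}_{2O\bar{C}}=\mathcal{A}_{2O}(1-{\rm Pr}(\mathcal{E}_{2OC,0}(U)))$, then remove the conditioning by summing against the load p.m.f. The only difference is that you re-derive the negative-binomial load p.m.f.\ from the size-biased Gamma approximation of the Voronoi cell area and explicitly name the load--SIR independence approximation, whereas the paper simply cites \cite{singh13} for the p.m.f.\ and leaves that decoupling implicit; both points are consistent with the paper's argument.
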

\begin{proof}
Follows by conditioning on the load (i.e., $L_{1,0}$ or $L_{2,0}$), calculating the conditional rate coverage probability according to \emph{Lemma \ref{theo:CP_condi}}, and  removing the conditions on the load (i.e., $L_{0,1}$ or $L_{0,2}$). Note that the p.m.f. of $L_{0,1}$ is given in \cite[\emph{Lemma 3}]{singh13}, and the p.m.f. of $L_{0,2}$ can be calculated using a similar approach.
\end{proof}

Note that the expression of $\mathcal{R}_{\rm IN}(U,\tau)$ in (\ref{eq:CPrate_IN_overall}) of \emph{Theorem \ref{cor:CPrate_overall}} is difficult to compute and analyze due to the infinite summations over $n$ in (\ref{eq:CPrate1})--(\ref{eq:CPrate2ObarC}). To simplify the expression of $\mathcal{R}_{\rm IN}(U,\tau)$ in (\ref{eq:CPrate_IN_overall}), we use the mean of the random load (i.e., ${\rm E}\left[L_{0,j}\right]$) to approximate the random load (i.e., $L_{0,j}$), where $j=1,2$ \cite{singh13,singh13may}. The simplification is achieved due to the elimination of the infinite summation over $n$. In other words, by replacing $L_{0,j}$ with ${\rm E}\left[L_{0,j}\right]$ in (\ref{eq:CPrate_CP}), we can obtain the rate coverage probability with MLA of the IN scheme under $U$, denoted as $\mathcal{\bar{R}}_{\rm IN}(U,\tau)$, as follows:
\begin{corollary}\label{cor:CPrate_MLA}
The rate coverage probability with MLA of the IN scheme under $U$ is  
\begin{align}\label{eq:CPrateMLA_IN}
\mathcal{\bar{R}}_{\rm IN}(U,\tau)=&\mathcal{A}_{1}\mathcal{\bar{R}}_{{\rm IN},1}(\tau)+\mathcal{A}_{2\bar{O}}\mathcal{\bar{R}}_{{\rm IN},2\bar{O}}(\tau)\notag\\
&\hspace{3mm}+\mathcal{A}_{2O}{\rm Pr}\left(\mathcal{E}_{2OC,0}(U)\right)\mathcal{\bar{R}}_{{\rm IN},2OC}(\tau)+\mathcal{A}_{2O}\left(1-{\rm Pr}\left(\mathcal{E}_{2OC,0}(U)\right)\right)\mathcal{\bar{R}}_{{\rm IN},2O\bar{C}}(\tau)
\end{align} 
where 
\begin{align}
\mathcal{\bar{R}}_{{\rm IN},1}(\tau)&=\mathcal{S}_{{\rm IN},1}\left(f\left(\frac{{\rm E}\left[L_{0,1}\right]\tau}{W}\right)\right)\;,\hspace{2mm}\mathcal{\bar{R}}_{{\rm IN},2\bar{O}}(\tau)=\mathcal{S}_{{\rm IN},2\bar{O}}\left(f\left(\frac{{\rm E}\left[L_{0,2}\right]\tau}{W}\right)\right)\;, \\
\mathcal{\bar{R}}_{{\rm IN},2OC}(\tau)&=\mathcal{S}_{{\rm IN},2OC}\left(f\left(\frac{{\rm E}\left[L_{0,2}\right]\tau}{W}\right)\right)\;,\hspace{2mm} \mathcal{\bar{R}}_{{\rm IN},2O\bar{C}}(\tau)=\mathcal{S}_{{\rm IN},2O\bar{C}}\left(f\left(\frac{{\rm E}\left[L_{0,2}\right]\tau}{W}\right)\right)
\end{align}
with $\mathcal{S}_{{\rm IN},k}(\cdot)$ given by (\ref{eq:CP1})--(\ref{eq:CP2ObarC}), ${\rm E}\left[L_{0,1}\right]=1+1.28\frac{\lambda_{u}\mathcal{A}_{1}}{\lambda_{1}}$, and ${\rm E}\left[L_{0,2}\right]=1+1.28\frac{\lambda_{u}\mathcal{A}_{2}}{\lambda_{2}}$. Here, $\mathcal{A}_{1}$, $\mathcal{A}_{2\bar{O}}$ and $\mathcal{A}_{2O}$ are given in \emph{Theorem \ref{cor:CP_uncondi}}, and $\mathcal{A}_{2}$ is given in \emph{Lemma \ref{lem:p.m.f._offload_random}}.
\end{corollary}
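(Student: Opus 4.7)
The plan is to derive (\ref{eq:CPrateMLA_IN}) directly from the exact decomposition in (\ref{eq:CPrate_CP}) by replacing the random per-tier load $L_{0,j_{k}}$ with its mean ${\rm E}\left[L_{0,j_{k}}\right]$ inside the SIR coverage probability, and then evaluating those two mean loads in closed form from the p.m.f.s supplied in \emph{Theorem \ref{cor:CPrate_overall}}. The approximation is the standard mean-load approximation used in \cite{singh13,singh13may} and its justification (rather than derivation) is what makes the statement a corollary rather than an equality.

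First I would invoke (\ref{eq:CPrate_CP}), which already writes $\mathcal{R}_{\rm IN}(U,\tau)=\sum_{k\in\mathcal{K}}\mathcal{A}_{k}{\rm E}_{L_{0,j_{k}}}\left[\mathcal{S}_{{\rm IN},k}\left(f\left(L_{0,j_{k}}\tau/W\right)\right)\right]$, and replace the outer expectation over $L_{0,j_{k}}$ by a deterministic evaluation at ${\rm E}\left[L_{0,j_{k}}\right]$, i.e.\ $\mathcal{\bar{R}}_{{\rm IN},k}(\tau)\define \mathcal{S}_{{\rm IN},k}\left(f\left({\rm E}\left[L_{0,j_{k}}\right]\tau/W\right)\right)$. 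Note that $j_{1}=1$ and $j_{k}=2$ for $k\in\{2\bar{O},2OC,2O\bar{C}\}$ (see Table \ref{tab:para_B2larger}), which explains why only two distinct mean loads ${\rm E}\left[L_{0,1}\right]$ and ${\rm E}\left[L_{0,2}\right]$ appear. Splitting $\mathcal{A}_{2OC}=\mathcal{A}_{2O}{\rm Pr}\left(\mathcal{E}_{2OC,0}(U)\right)$ and $\mathcal{A}_{2O\bar{C}}=\mathcal{A}_{2O}\left(1-{\rm Pr}\left(\mathcal{E}_{2OC,0}(U)\right)\right)$ via \emph{Lemma \ref{lem:prob_IN}}, and using $\mathcal{A}_{1},\mathcal{A}_{2\bar{O}}$ from \emph{Theorem \ref{cor:CP_uncondi}} and $\mathcal{A}_{2O}$ from \emph{Lemma \ref{lem:p.m.f._offload_random}}, reproduces (\ref{eq:CPrateMLA_IN}) term by term.

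It remains to verify the closed-form mean loads ${\rm E}\left[L_{0,j}\right]=1+1.28\lambda_{u}\mathcal{A}_{j}/\lambda_{j}$ for $j=1,2$. Starting from the p.m.f. of $L_{0,j}$ in \emph{Theorem \ref{cor:CPrate_overall}}, which has the negative-binomial-like form ${\rm Pr}\left(L_{0,j}=n\right)\propto \Gamma(n+3.5)\left(\lambda_{u}\mathcal{A}_{j}/\lambda_{j}\right)^{n-1}/(n-1)!$, the mean is obtained by shifting the index $n\mapsto n-1$ and recognizing the moment of the resulting distribution; the ``$+1$'' captures the typical user itself (size-biased sampling of the serving cell) and the factor $1.28\approx 3.5/(3.5-1)\cdot\ldots$ is the well-known Poisson--Voronoi size-biased constant inherited from the mean cell-area formula used in \cite{singh13}. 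This is the one place that needs a short calculation; everything else is a bookkeeping substitution.

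The main obstacle is not technical but conceptual: the MLA is an ansatz, not an identity, and Jensen-type concavity of $\mathcal{S}_{{\rm IN},k}\circ f$ in the load argument is not available in general. Consequently the proof should be framed as a direct computation under the MLA substitution $L_{0,j_{k}}\to {\rm E}\left[L_{0,j_{k}}\right]$, with accuracy to be justified numerically later, rather than as a bound. With that convention in place, the derivation collapses to the two substitutions and the mean-load evaluation described above.
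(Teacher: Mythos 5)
Your proposal is correct and follows essentially the same route as the paper: the paper's proof is exactly the substitution $L_{0,j}\to{\rm E}\left[L_{0,j}\right]$ in (\ref{eq:CPrate_IN_overall}) together with a citation to \cite{singh13} for the mean-load formulas, which is what you do (your added remarks on the origin of the constant $1.28$ and on MLA being an approximation rather than an identity are consistent with, but not required by, the paper's one-line argument).
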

\begin{proof}
Follows by replacing $L_{0,j}$ in (\ref{eq:CPrate_IN_overall}) with ${\rm E}\left[L_{0,j}\right]$, where $j=1,2$. Note that ${\rm E}\left[L_{0,1}\right]$ is given in \cite{singh13}, and ${\rm E}\left[L_{0,2}\right]$ can be calculated using a similar approach.  
\end{proof}

\begin{figure}[t] \centering
\includegraphics[width=4.2in]{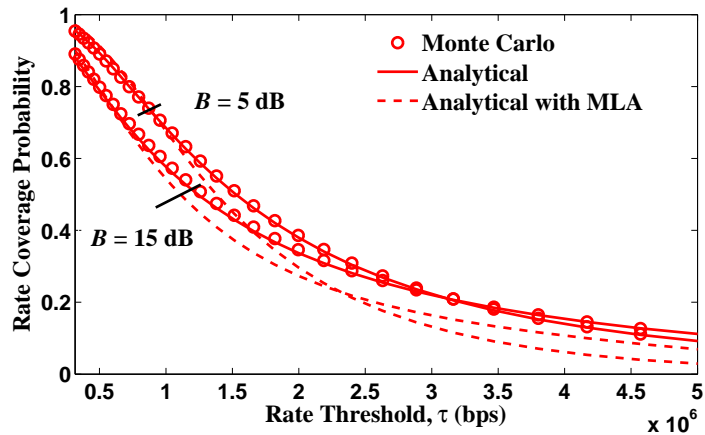}
\caption{\scriptsize Rate coverage probability vs. rate threshold $\tau$ for different bias factors $B$, at $\alpha_{1}=\alpha_{2}=4$, $\frac{P_{1}}{P_{2}}=10$ dB, $N_{1}=8$, $N_{2}=4$, $U=4$, $W=10$ MHz, $\lambda_{1}=0.0001$ nodes/m$^{2}$, and $\lambda_{2}=0.0005$ nodes/m$^{2}$.}\label{fig:CPrate_diffB}
\vspace{-8mm}
\end{figure}

Fig.\ \ref{fig:CPrate_diffB} plots the rate coverage probability of the IN scheme  vs. rate threshold $\tau$ for different bias factors $B$. We see from Fig.\ \ref{fig:CPrate_diffB} that the `Analytical' curves (i.e., $\mathcal{R}_{\rm IN}(U,\tau)$ in \emph{Theorem \ref{cor:CPrate_overall}}) closely match with the `Monte Carlo' curves, although $\mathcal{R}_{\rm IN}(U,\tau)$ is derived based on some approximations, as illustrated in Section \ref{subsec:preliminary}. Moreover, we observe that the `Analytical with MLA' curves (i.e., $\mathcal{\bar{R}}_{\rm IN}(U,\tau)$ in \emph{Corollary \ref{cor:CPrate_MLA}}) are close to the `Analytical' curves (i.e., $\mathcal{R}_{\rm IN}(U,\tau)$ in \emph{Theorem \ref{cor:CPrate_overall}}), especially when $\tau$ is not very large. Hence, for analytical tractability, we will investigate the rate coverage probability with MLA $\mathcal{\bar{R}}_{\rm IN}(U,\tau)$ in the remaining part of this paper.

\section{Rate Coverage Probability Optimization of Interference Nulling} 
In this section, we consider the rate coverage probability optimization of the IN scheme. For a fixed bias factor $B$, the optimal design parameter $U^{*}(\tau)$, which maximizes the (overall) rate coverage probability $\mathcal{\bar{R}}_{\rm IN}(U,\tau)$, is defined as follows:
\begin{align}\label{eq:optU_def}
U^{*}(\tau)\stackrel{\Delta}{=} {\rm arg}\; \max_{U\in\{0,1,\ldots,N_1-1\}} \mathcal{\bar{R}}_{\rm IN}(U,\tau)\;.
\end{align} 
Note that (\ref{eq:optU_def}) is an integer programming problem with a very complicated objective function $\mathcal{\bar{R}}_{\rm IN}(U,\tau)$. It is thus difficult to obtain the closed-form optimal solution $U^{*}(\tau)$ to the problem in (\ref{eq:optU_def}). To address this challenge, in the following, we first characterize the rate coverage probability change when the design parameter is changed from $U-1$ to $U$. Then, based on it, we study some properties of $U^{*}(\tau)$ for small and general rate threshold regimes, respectively.

\subsection{Rate Coverage Probability Change}
First, we define $\Delta\mathcal{\bar{R}}_{\rm IN}(U,\tau)\define \mathcal{\bar{R}}_{\rm IN}(U,\tau)-\mathcal{\bar{R}}_{\rm IN}(U-1,\tau)$ as the change of $\mathcal{\bar{R}}_{\rm IN}(U,\tau)$ when the design parameter is changed from $U-1$ to $U$, where $U\in\{1,\ldots,N_{1}-1\}$. By (\ref{eq:CPrateMLA_IN}), $\Delta\mathcal{\bar{R}}_{\rm IN}(U,\tau)$ can be decomposed into three parts as follows:
\begin{align}\label{eq:CP_change}
\Delta\mathcal{\bar{R}}_{\rm IN}(U,\tau)=\mathcal{A}_{1}\Delta\mathcal{\bar{R}}_{{\rm IN},1}(U,\tau)+\mathcal{A}_{2\bar{O}}\Delta\mathcal{\bar{R}}_{{\rm IN},2\bar{O}}(\tau)+\mathcal{A}_{2O}\Delta\mathcal{\bar{R}}_{{\rm IN},2O}(U,\tau)
\end{align} where\footnote{From now on, we make explicit the dependence of $\mathcal{\bar{R}}_{{\rm IN},1}(\tau)$ on $U$.} $\Delta\mathcal{\bar{R}}_{{\rm IN},1}(U,\tau)\define\mathcal{\bar{R}}_{{\rm IN},1}(U,\tau)-\mathcal{\bar{R}}_{{\rm IN},1}(U-1,\tau)$ denotes the rate coverage probability change of a macro-user, $\Delta\mathcal{\bar{R}}_{{\rm IN},2\bar{O}}(\tau)\define\mathcal{\bar{R}}_{{\rm IN},2\bar{O}}\left(\tau\right)-\mathcal{\bar{R}}_{{\rm IN},2\bar{O}}\left(\tau\right)$ denotes the rate coverage probability change of an unoffloaded pico-user, and 
\begin{align}\label{eq:CPloss_macro}
\Delta\mathcal{\bar{R}}_{{\rm IN},2O}(U,\tau)\define&\mathcal{\bar{R}}_{{\rm IN},2O}(U,\tau)-\mathcal{\bar{R}}_{{\rm IN},2O}(U-1,\tau)\notag\\
=&\left({\rm Pr}\left(\mathcal{E}_{2OC,0}(U)\right)-{\rm Pr}\left(\mathcal{E}_{2OC,0}(U-1)\right)\right)\left(\mathcal{\bar{R}}_{{\rm IN},2OC}(\tau)-\mathcal{\bar{R}}_{{\rm IN},2O\bar{C}}(\tau)\right)
\end{align} 
denotes the rate coverage probability change of an offloaded user. Here, $\mathcal{\bar{R}}_{{\rm IN},2O}(U,\tau)\define{\rm Pr}\left(\mathcal{E}_{2OC,0}(U)\right)$\\$\times\mathcal{\bar{R}}_{{\rm IN},2OC}(\tau)+\left(1-{\rm Pr}\left(\mathcal{E}_{2OC,0}(U)\right)\right)\mathcal{\bar{R}}_{{\rm IN},2O\bar{C}}(\tau)$ denotes the rate coverage probability of an offloaded user. 

Next, we analyze $\Delta\mathcal{\bar{R}}_{{\rm IN},1}(U,\tau)$, $\Delta\mathcal{\bar{R}}_{{\rm IN},2\bar{O}}(\tau)$ and $\Delta\mathcal{\bar{R}}_{{\rm IN},2O}(U,\tau)$ in the following lemma:
\begin{lemma}\label{lem:DeltaS_sign}
i) $\Delta\mathcal{\bar{R}}_{{\rm IN},1}(U,\tau)<0$, ii) $\Delta\mathcal{\bar{R}}_{{\rm IN},2\bar{O}}(\tau)=0$, and iii) $\Delta\mathcal{\bar{R}}_{{\rm IN},2O}(U,\tau)>0$.
\end{lemma}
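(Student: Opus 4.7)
The plan is to dispose of part (ii) first (which is essentially a tautology), then handle (i) and (iii) using the structural decompositions already established in Lemmas~\ref{lem:pmf_u2OC}, \ref{lem:prob_IN}, and \ref{theo:CP_condi}. Throughout, write $\beta_k = f({\rm E}[L_{0,j_k}]\tau/W)$ for the effective SIR threshold at user type $k$ under MLA; the claims are then statements about the SIR coverage probabilities $\mathcal{S}_{{\rm IN},k}(\beta_k)$.

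Part (ii) is immediate: in the received signal of an unoffloaded pico-user, $u_0\in\mathcal{U}_{2\bar O}$, see (\ref{eq:y0_pico})--(\ref{eq:SINR_pico}), every macro-BS $\ell$ transmits using its own ZFBF vector $\mathbf{f}_{1,\ell}$ for which $|\mathbf{h}_{1,\ell 0}^{\dagger}\mathbf{f}_{1,\ell}|^{2}\dis\mathrm{Gamma}(1,1)$ regardless of $U$, and the pico-BS precoder does not depend on $U$ either. Hence $\mathcal{S}_{{\rm IN},2\bar O}(\beta)$ and thus $\mathcal{\bar R}_{{\rm IN},2\bar O}(\tau)$ are independent of $U$, so $\Delta\mathcal{\bar R}_{{\rm IN},2\bar O}(\tau)=0$.

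For part (i), let $\mathcal{S}_{{\rm IN},1}^{(n)}(\beta)$ denote the conditional SIR coverage probability of $u_0\in\mathcal{U}_1$ given $u_{2OC,0}=n$, which by Lemma~\ref{theo:CP_condi} (with $M_1=N_1-n$) equals a sum of $N_1-n$ strictly positive terms $\mathcal{T}_{1,R_{11},R_{21}}(n',\ldots,\beta)$ averaged against $f_{Y_1}$. Since the summand is strictly positive, $\mathcal{S}_{{\rm IN},1}^{(n)}(\beta)$ is strictly decreasing in $n$. By Lemma~\ref{lem:pmf_u2OC}, going from $U-1$ to $U$ only changes the truncation tail of the p.m.f. of $u_{2OC,0}$: the mass $\pi_U\define\sum_{u\geq U}{\rm Pr}(U_{2O_a,0}=u)>0$ is shifted from $n=U-1$ to $n=U$, while masses at $n\leq U-2$ are unchanged. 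Therefore
\begin{align*}
\Delta\mathcal{\bar R}_{{\rm IN},1}(U,\tau)=\pi_U\Bigl[\mathcal{S}_{{\rm IN},1}^{(U)}(\beta_1)-\mathcal{S}_{{\rm IN},1}^{(U-1)}(\beta_1)\Bigr]<0,
\end{align*}
since the bracket equals minus the $(N_1-U)$-th (positive) $\mathcal{T}$ term integrated against $f_{Y_1}$.

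For part (iii), use the decomposition (\ref{eq:CPloss_macro}) and show both factors are strictly positive. For the first factor, starting from (\ref{eq:IN_prob_U}),
\begin{align*}
{\rm Pr}(\mathcal{E}_{2OC,0}(U))-{\rm Pr}(\mathcal{E}_{2OC,0}(U-1))=\sum_{n=U}^{\infty}\frac{1}{n}\,{\rm Pr}(\hat U_{2O_a,0}=n)>0,
\end{align*}
which follows by term-by-term cancellation (for each $n\geq U$, the per-user IN-selection probability $\min(U,n)/n$ increases by $1/n$ when $U$ grows by one; for $n<U$ both equal $1$). For the second factor, argue by coupling on a common probability space: with the BS locations, channel realizations, and precoders all fixed, the SIR expressions (\ref{eq:SINR_BC}) and (\ref{eq:SIR_BbarC_v2}) have identical numerators and identical sets of interference terms, except that the $2O\bar C$ denominator contains the additional non-negative term $P_1|\mathbf{h}_{1,10}^{\dagger}\mathbf{f}_{1,1}|^2/Y_1^{\alpha_1}$. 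Hence $\mathrm{SIR}_{\rm IN,2O\bar C,0}\leq\mathrm{SIR}_{\rm IN,2OC,0}$ pointwise, which yields $\mathcal{S}_{\rm IN,2OC}(\beta)>\mathcal{S}_{\rm IN,2O\bar C}(\beta)$ and therefore $\mathcal{\bar R}_{\rm IN,2OC}(\tau)>\mathcal{\bar R}_{\rm IN,2O\bar C}(\tau)$. Multiplying two strictly positive factors gives $\Delta\mathcal{\bar R}_{{\rm IN},2O}(U,\tau)>0$.

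The main obstacle is part (i): one has to identify cleanly how the p.m.f. of $u_{2OC,0}=\min(U,U_{2O_a,0})$ changes when $U$ is incremented, and realize that only a single ``chunk'' of probability mass $\pi_U$ moves from $n=U-1$ to $n=U$. Once this observation is made, monotonicity of $\mathcal{S}_{{\rm IN},1}^{(n)}$ in $n$ (which follows transparently from the positivity of the $\mathcal{T}$ terms in Lemma~\ref{theo:CP_condi}) closes the argument. The other two parts are essentially bookkeeping plus a coupling.
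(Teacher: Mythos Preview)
Your proof is correct and follows essentially the same approach as the paper: for (i) you identify the mass shift $\pi_U=\sum_{u\ge U}{\rm Pr}(U_{2O_a,0}=u)$ from $n=U-1$ to $n=U$ and use positivity of the dropped $\mathcal{T}$ term, exactly matching the paper's explicit difference formula $-\pi_U\int \mathcal{T}_{1,y}(N_1-U,y,\hat\beta)f_{Y_1}(y)\,{\rm d}y$; for (iii) both factors are handled identically (monotonicity of $\min(U,n)/n$ in $U$, and pointwise SIR domination via the extra denominator term). Part (ii) is the same trivial observation in both.
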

\begin{proof}
See Appendix \ref{proof:DeltaS_sign}. 
\end{proof}


Based on \emph{Lemma \ref{lem:DeltaS_sign}}, $\Delta\mathcal{\bar{R}}_{\rm IN}(U,\tau)$ can be simplified as follows:   
\begin{align}\label{eq:CPdelta_diffU}
\Delta\mathcal{\bar{R}}_{\rm IN}(U,\tau)=\mathcal{A}_{2O}\Delta\mathcal{\bar{R}}_{{\rm IN},2O}(U,\tau)-\mathcal{A}_{1}\left|\Delta\mathcal{\bar{R}}_{{\rm IN},1}(U,\tau)\right|\;.
\end{align} 
where $\mathcal{A}_{2O}\Delta\mathcal{\bar{R}}_{{\rm IN},2O}(U,\tau)$ and $\mathcal{A}_{1}\left|\Delta\mathcal{\bar{R}}_{{\rm IN},1}(U,\tau)\right|$ are referred to as the ``gain" and the ``penalty" of the IN scheme, respectively. Whether $\Delta\mathcal{\bar{R}}_{\rm IN}(U,\tau)$ is positive or not depends on whether the ``gain" dominates the ``penalty" or not. Therefore, to maximize $\mathcal{\bar{R}}_{\rm IN}(U,\tau)$, we can study the properties of $\Delta\mathcal{\bar{R}}_{\rm IN}(U,\tau)$ in (\ref{eq:CPdelta_diffU}) w.r.t.\ $U$ by comparing $\mathcal{A}_{2O}\Delta\mathcal{\bar{R}}_{{\rm IN},2O}(U,\tau)$ and $\mathcal{A}_{1}\left|\Delta\mathcal{\bar{R}}_{{\rm IN},1}(U,\tau)\right|$.


\subsection{Rate Coverage Probability Optimization When $\tau\to0$}
In this part, we obtain $U^{*}(\tau)$ when $\tau\to0$ by comparing $\mathcal{A}_{2O}\Delta\mathcal{\bar{R}}_{{\rm IN},2O}(U,\tau)$ and $\mathcal{A}_{1}\left|\Delta\mathcal{\bar{R}}_{{\rm IN},1}(U,\tau)\right|$. First, we characterize $\left|\mathcal{\bar{R}}_{{\rm IN},1}(U,\tau)\right|$ and $\mathcal{\bar{R}}_{{\rm IN},2O}(U,\tau)$. To characterize $\left|\Delta\mathcal{\bar{R}}_{{\rm IN},1}(U,\tau)\right|$, by \emph{Corollary \ref{cor:CPrate_MLA}}, \emph{Theorem \ref{cor:CP_uncondi}}, and \emph{Lemma \ref{theo:CP_condi}}, we first characterize $\mathcal{T}_{k,R_{1k},R_{2k}}\left(n,r_{1k},r_{2k},2^{{\rm E}\left[L_{0,j_{k}}\right]\tau/W}-1\right)$, which indicates the SIR coverage probability gain of $u_{0}\in\mathcal{U}_{k}$ achieved when the DoF for boosting the desired signal to $u_{0}\in\mathcal{U}_{k}$ is changed from $n$ to $n+1$. For single-tier cellular networks, the expression (which is complicated) for the SIR coverage probability gain of increasing one more DoF for boosting the desired signal to $u_{0}$ has been derived in \cite{li14}, and it has been shown that this gain diminishes as the number of DoF increases. However, the speed that this gain changes has not been characterized in \cite{li14}. In the following lemma, we investigate this gain in HetNets, and characterize the order of this gain when $\tau\to0$.  
\begin{lemma}\label{lem:condiCP_lowbeta}
When $\tau\to0$, we have\footnote{$f(x)=\Theta\left(g(x)\right)$ means that $\lim_{x\to0}\frac{f(x)}{g(x)}=c$ where $0<c<\infty$.} $\mathcal{T}_{k,R_{1k},R_{2k}}\left(n,r_{1k},r_{2k},2^{{\rm E}\left[L_{0,j_{k}}\right]\tau/W}-1\right)=\Theta\left(\tau^{n}\right)$.
\end{lemma}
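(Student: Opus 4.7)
I plan to reduce the claim to a small-$s$ order analysis of the building block $\tilde{\mathcal{L}}_{I_{j}}^{(m)}(s,r_{jk})$ and then track how these orders combine through the (binomial or trinomial) structure of $\mathcal{T}_{k,R_{1k},R_{2k}}$. First, since $\beta(\tau)\define 2^{\mathrm{E}[L_{0,j_{k}}]\tau/W}-1\sim\frac{\ln 2\cdot\mathrm{E}[L_{0,j_{k}}]}{W}\,\tau$ as $\tau\to 0$, we have $\beta(\tau)=\Theta(\tau)$, so it suffices to prove $\mathcal{T}_{k,R_{1k},R_{2k}}(n,r_{1k},r_{2k},\beta)=\Theta(\beta^{n})$ as $\beta\to 0^{+}$.

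The central intermediate claim is that $\tilde{\mathcal{L}}_{I_{j}}^{(m)}(s,r_{jk})=\Theta(s^{m})$ as $s\to 0^{+}$. To see this, I would first expand the complementary incomplete Beta function near $z=1$: using $B'(a,b,z)=\int_{z}^{1}u^{a-1}(1-u)^{b-1}\mathrm{d}u$, the substitution $u=1-v$ gives $B'(a,b,z)\sim\frac{(1-z)^{b}}{b}$ as $z\to 1^{-}$. With $z=\frac{1}{1+sr_{jk}^{-\alpha_{j}}}$ one has $1-z\sim sr_{jk}^{-\alpha_{j}}$, so
\begin{equation*}
\frac{2\pi\lambda_{j}}{\alpha_{j}}\,s^{\frac{2}{\alpha_{j}}}\,B'\bigl(1+\tfrac{2}{\alpha_{j}},\,a-\tfrac{2}{\alpha_{j}},\,z\bigr)=\Theta\bigl(s^{a}\bigr),
\end{equation*}
where $a-2/\alpha_{j}>0$ is guaranteed by $\alpha_{j}>2$ and $a\ge 1$. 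Raising this factor to $p_{a}$ and taking the product over $a$, every term in the sum indexed by $(p_{a})\in\mathcal{M}_{m}$ is $\Theta(s^{\sum_{a}ap_{a}})=\Theta(s^{m})$ by the defining constraint of $\mathcal{M}_{m}$. All leading coefficients are strictly positive, so no cancellation can occur across partitions; since $\mathcal{L}_{I_{j}}(s,r_{jk})\to 1$ as $s\to 0$, this establishes the claim.

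Substituting back into $\mathcal{T}_{k,R_{1k},R_{2k}}(n,r_{1k},r_{2k},\beta)$, both $s$-arguments are proportional to $\beta$. For $k\in\{1,2\bar{O},2OC\}$, the Leibniz-type sum becomes $\sum_{n_{1}=0}^{n}\binom{n}{n_{1}}\Theta(\beta^{n_{1}})\Theta(\beta^{n-n_{1}})=\Theta(\beta^{n})$. For $k=2O\bar{C}$ the additional factor $\Gamma(q_{3}+1)(\beta c)^{q_{3}}(1+\beta c)^{-(q_{3}+1)}$ is $\Theta(\beta^{q_{3}})$ because $(1+\beta c)^{-(q_{3}+1)}\to 1$, so each trinomial term contributes $\Theta(\beta^{q_{1}+q_{2}+q_{3}})=\Theta(\beta^{n})$; positivity of all leading coefficients again rules out cancellation. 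Combining with $\beta=\Theta(\tau)$ yields $\mathcal{T}_{k,R_{1k},R_{2k}}(n,r_{1k},r_{2k},2^{\mathrm{E}[L_{0,j_{k}}]\tau/W}-1)=\Theta(\tau^{n})$.

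The hardest step, in my view, is making the $B'$ asymptotic rigorous in the precise form needed here: one must keep track of both the leading exponent and a strictly positive leading constant, verify that the pre-factor $s^{2/\alpha_{j}}$ and the Beta power $(1-z)^{a-2/\alpha_{j}}$ combine exactly to $s^{a}$, and argue non-cancellation across the Bell-polynomial-style partition sum (and, for $k=2O\bar{C}$, across the trinomial expansion). Once that positivity book-keeping is in place, the remainder is a mechanical accounting of orders.
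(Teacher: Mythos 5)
Your proposal is correct and follows essentially the same route as the paper's proof: expand $B'(a,b,z)\sim(1-z)^{b}/b$ as $z\to1^{-}$ with $1-z=\Theta(s)$, conclude $\mathcal{L}_{I_j}(s,\cdot)\to1$ and that each partition term contributes $\Theta(s^{\sum_a ap_a})=\Theta(s^m)$, handle the extra $(1+\beta c)^{-(q_3+1)}\to1$ factor for $k=2O\bar{C}$, and combine through the binomial/trinomial sums with $\beta=\Theta(\tau)$. The only difference is presentational: the paper writes out the explicit leading coefficients of the series expansions, whereas you track orders via $\Theta$ and make the (correct) non-cancellation-by-positivity argument explicit, which the paper leaves implicit in ``after some algebraic manipulations.''
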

\begin{proof}
See Appendix \ref{proof:Tn_lowbeta}.
\end{proof}

From \emph{Lemma \ref{lem:condiCP_lowbeta}}, we see that when $\tau\to0$, the gain $\mathcal{T}_{k,R_{1k},R_{2k}}\left(n,r_{1k},r_{2k},2^{{\rm E}\left[L_{0,j_{k}}\right]\tau/W}-1\right)$ decreases as $n$ increases, and the order of $\mathcal{T}_{k,R_{1k},R_{2k}}\left(n,r_{1k},r_{2k},2^{{\rm E}\left[L_{0,j_{k}}\right]\tau/W}-1\right)$ is $\tau^{n}$. Based on \emph{Lemma \ref{lem:condiCP_lowbeta}}, we obtain the order of the rate coverage probability loss of a macro-user, i.e., $\left|\Delta\mathcal{\bar{R}}_{{\rm IN},1}(U,\tau)\right|$, which is shown in the following proposition:
\begin{proposition}\label{prop:CPloss_macro}
When $\tau\to0$, we have $\left|\Delta\mathcal{\bar{R}}_{{\rm IN},1}(U,\tau)\right|=\Theta\left(\tau^{N_1-U}\right)$.
\end{proposition}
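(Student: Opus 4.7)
The plan is to track exactly how the $U$-dependence enters $\mathcal{\bar{R}}_{{\rm IN},1}(U,\tau) = \mathcal{S}_{{\rm IN},1}\!\bigl(f({\rm E}[L_{0,1}]\tau/W)\bigr)$ and to isolate the single surviving term that governs the order as $\tau\to 0$. By \emph{Theorem \ref{cor:CP_uncondi}} and \emph{Lemma \ref{lem:pmf_u2OC}}, writing $p_n \define {\rm Pr}(U_{2O_a,0}=n)$ and $\mathcal{I}_n(\beta)\define\int_0^{\infty}\mathcal{S}_{{\rm IN},1,Y_1}(y,\beta\,|\,u_{2OC,0}=n)f_{Y_1}(y)\,{\rm d}y$, I would expand
\begin{align*}
\mathcal{S}_{{\rm IN},1}(\beta;U) &= \sum_{n=0}^{U-1} p_n\,\mathcal{I}_n(\beta) + \Bigl(\sum_{u\ge U}p_u\Bigr)\,\mathcal{I}_U(\beta),\\
\mathcal{S}_{{\rm IN},1}(\beta;U-1) &= \sum_{n=0}^{U-2} p_n\,\mathcal{I}_n(\beta) + \Bigl(\sum_{u\ge U-1}p_u\Bigr)\,\mathcal{I}_{U-1}(\beta).
\end{align*}
After straightforward cancellation the difference collapses to
\begin{equation*}
\Delta\mathcal{S}_{{\rm IN},1}(\beta)={\rm Pr}(U_{2O_a,0}\ge U)\bigl[\mathcal{I}_U(\beta)-\mathcal{I}_{U-1}(\beta)\bigr],
\end{equation*}
so only the tail probability ${\rm Pr}(U_{2O_a,0}\ge U)>0$ (computable from \emph{Lemma \ref{lem:p.m.f._offload_random}}) multiplies the interesting factor.

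Next I would use \emph{Lemma \ref{theo:CP_condi}} with $M_1=N_1-u_{2OC,0}$ from Table \ref{tab:para_B2larger}: the telescoping of the inner sums over $m$ gives
\begin{equation*}
\mathcal{I}_U(\beta)-\mathcal{I}_{U-1}(\beta)=-\int_0^{\infty}\mathcal{T}_{1,Y_1,Y_2}\!\bigl(N_1-U,y,(P_2B/P_1)^{1/\alpha_2}y^{\alpha_1/\alpha_2},\beta\bigr)\,f_{Y_1}(y)\,{\rm d}y,
\end{equation*}
that is, among all the gain terms $\mathcal{T}_{1,Y_1,Y_2}(m,\cdot,\cdot,\beta)$ appearing in the two conditional SIR coverage probabilities, exactly the one with $m=N_1-U$ survives (the extra term present when $u_{2OC,0}=U-1$ is absent when $u_{2OC,0}=U$). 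Substituting $\beta=f({\rm E}[L_{0,1}]\tau/W)=2^{{\rm E}[L_{0,1}]\tau/W}-1$ and invoking \emph{Lemma \ref{lem:condiCP_lowbeta}} with $n=N_1-U$ gives $\mathcal{T}_{1,Y_1,Y_2}(N_1-U,y,(P_2B/P_1)^{1/\alpha_2}y^{\alpha_1/\alpha_2},\beta)=\Theta(\tau^{N_1-U})$ pointwise in $y$. Combining the three ingredients yields
\begin{equation*}
|\Delta\mathcal{\bar{R}}_{{\rm IN},1}(U,\tau)|={\rm Pr}(U_{2O_a,0}\ge U)\int_0^{\infty}\mathcal{T}_{1,Y_1,Y_2}(N_1-U,y,\cdot,\beta)f_{Y_1}(y)\,{\rm d}y=\Theta(\tau^{N_1-U}),
\end{equation*}
which is the claim.

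The main technical hurdle is the last line: passing the $\Theta(\tau^{N_1-U})$ estimate of \emph{Lemma \ref{lem:condiCP_lowbeta}} through the integral against $f_{Y_1}(y)$. One needs to show that $\tau^{-(N_1-U)}\mathcal{T}_{1,Y_1,Y_2}(N_1-U,y,\cdot,\beta(\tau))$ converges, as $\tau\to 0$, to an $f_{Y_1}$-integrable, strictly positive limit, so that the expectation inherits the same order rather than degenerating to a different power. The explicit formula for $\mathcal{T}_{k,R_{1k},R_{2k}}$ and of $\tilde{\mathcal{L}}_{I_j}^{(m)}$ in \emph{Lemma \ref{theo:CP_condi}}, together with the Gaussian-like decay of $f_{Y_1}(y)$ in \emph{Theorem \ref{cor:CP_uncondi}}, supplies a $y$-uniform envelope that justifies dominated convergence; I would spell this bound out as a single lemma and then finish the order statement in one line.
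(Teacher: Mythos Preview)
Your proposal is correct and essentially identical to the paper's proof: the paper also reduces $\Delta\mathcal{\bar{R}}_{{\rm IN},1}(U,\tau)$ to the single surviving term $-{\rm Pr}(U_{2O_a,0}\ge U)\int_0^\infty\mathcal{T}_{1,y}(N_1-U,y,\hat{\beta})f_{Y_1}(y)\,{\rm d}y$ (this formula is in fact already derived in the proof of \emph{Lemma \ref{lem:DeltaS_sign}}), and then combines \emph{Lemma \ref{lem:condiCP_lowbeta}} with a dominated-convergence argument using exactly the envelope you anticipate, namely bounding $\mathcal{L}_{I_j}$ and the incomplete Beta functions by constants so that the integrand is controlled by a polynomial in $y$ times $\exp(-cy^2)$.
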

\begin{proof}
Follows by showing the integrand in (\ref{eq:deltaS1_general}) is upper bounded by an integrable function. In particular, for the integrand in (\ref{eq:deltaS1_general}), we have 
\begin{align}
\mathcal{T}_{1,y}(N_{1}-U,y,\hat{\beta})f_{Y_{1}}(y)<&\sum_{n_{1}=0}^{N_{1}-U}\sum_{(p_{a})_{a=1}^{n_{1}}\in\mathcal{M}_{n_{1}}}\sum_{(p_{b})_{b=1}^{N_{1}-U-n_{1}}\in\mathcal{M}_{N_{1}-U-n_{1}}}g\left(n_{1},\{p_{a}\},\{p_{b}\},\hat{\beta}\right) \notag\\
&\times \exp\left(-cy^{2}\right)y^{2\sum_{a=1}^{n_{1}}p_{a}+\frac{2\alpha_{1}}{\alpha_{2}}\sum_{b=1}^{N_{1}-U-n_{1}}p_{b}+1}
\end{align}
where $c$ is a real positive constant and $g\left(n_{1},\{p_{a}\},\{p_{b}\},\hat{\beta}\right)$ is the coefficient (independent of $y$). Here, the inequality is obtained by noting that $\mathcal{L}_{I_{j}}\left(s,r_{jk}\right)<1$, $B^{'}(a,b,z)<B(a,b)$ which is the beta function, and $\exp\left(-\pi \lambda_{2}\left(\frac{P_{2}B}{P_{1}}\right)^{\frac{2}{\alpha_{2}}}y^{\frac{2\alpha_{1}}{\alpha_{2}}}y^{\frac{2\alpha_{1}}{\alpha_{2}}}\right)<1$. It can be easily shown that $y^{2\sum_{a=1}^{n_{1}}p_{a}+\frac{2\alpha_{1}}{\alpha_{2}}\sum_{b=1}^{N_{1}-U-n_{1}}p_{b}+1}\exp\left(-cy^{2}\right)$ is integrable. From \emph{Lemma \ref{lem:condiCP_lowbeta}}, we know 
\begin{align}
\mathcal{T}_{k,R_{1k},R_{2k}}\left(N_{1}-U,r_{1k},r_{2k},2^{{\rm E}\left[L_{0,j_{k}}\right]\tau/W}-1\right)=\Theta\left(\tau^{N_{1}-U}\right)\;,
\end{align} 
then using \emph{dominated convergence theorem}, the proof completes.
\end{proof}

\emph{Proposition \ref{prop:CPloss_macro}} shows that when $\tau\to0$, the rate coverage probability loss of a macro-user, i.e., $\left|\Delta\mathcal{\bar{R}}_{{\rm IN},1}(U,\tau)\right|$ in (\ref{eq:CPdelta_diffU}), decreases with $N_{1}-U$, and the decrease is in the order of $\tau^{N_{1}-U}$. Furthermore, for a fixed $N_{1}$, $\left|\Delta\mathcal{\bar{R}}_{{\rm IN},1}(U,\tau)\right|$ increases as $U$ increases. 

\begin{figure}[t]
\centering
\subfigure[$B=2.5$ dB]{
\includegraphics[width=2in]{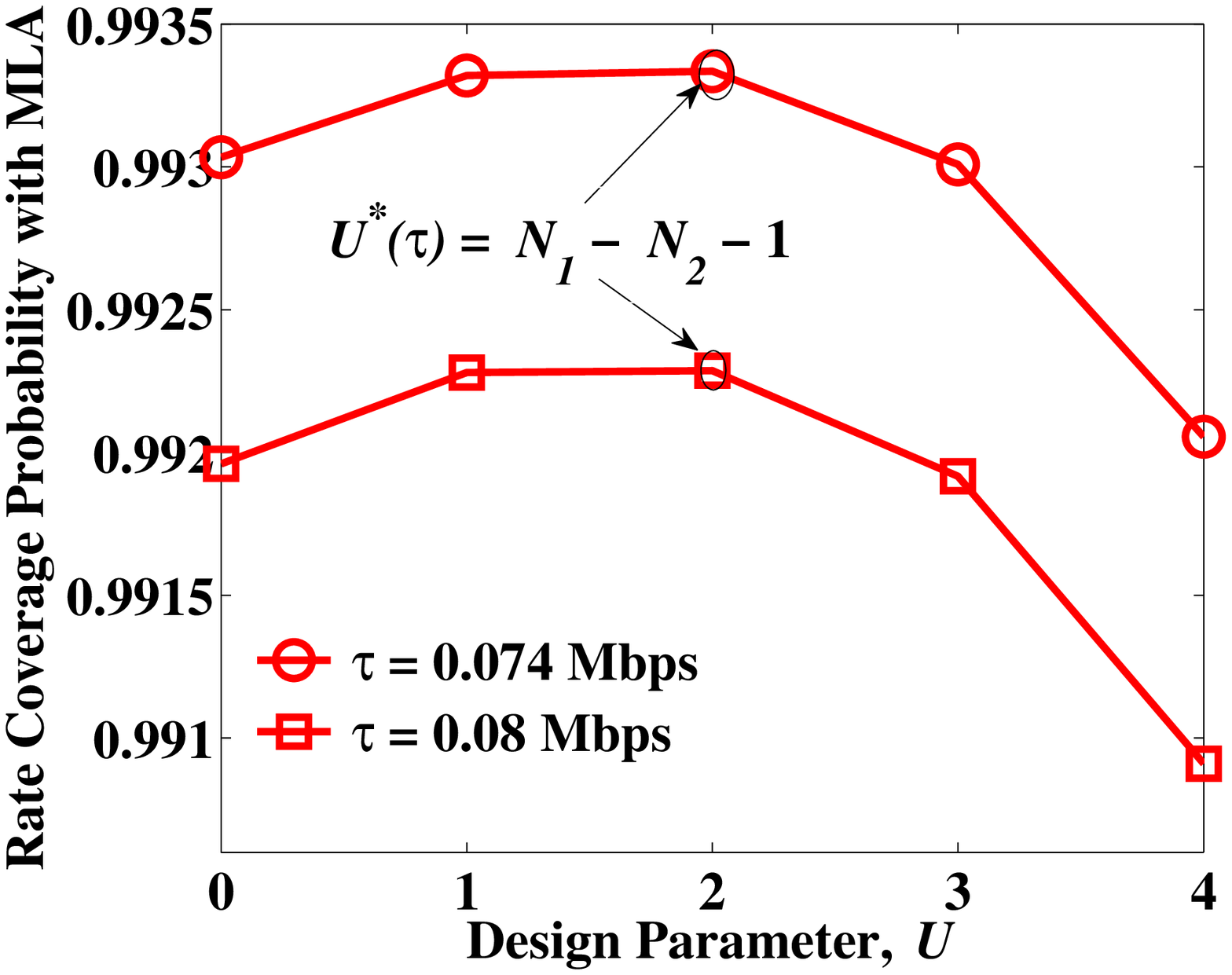}
\label{fig:CPratevsU_smallbeta_B2p5rdB}
}
\subfigure[$B=4.6$ dB]{
\includegraphics[width=2in]{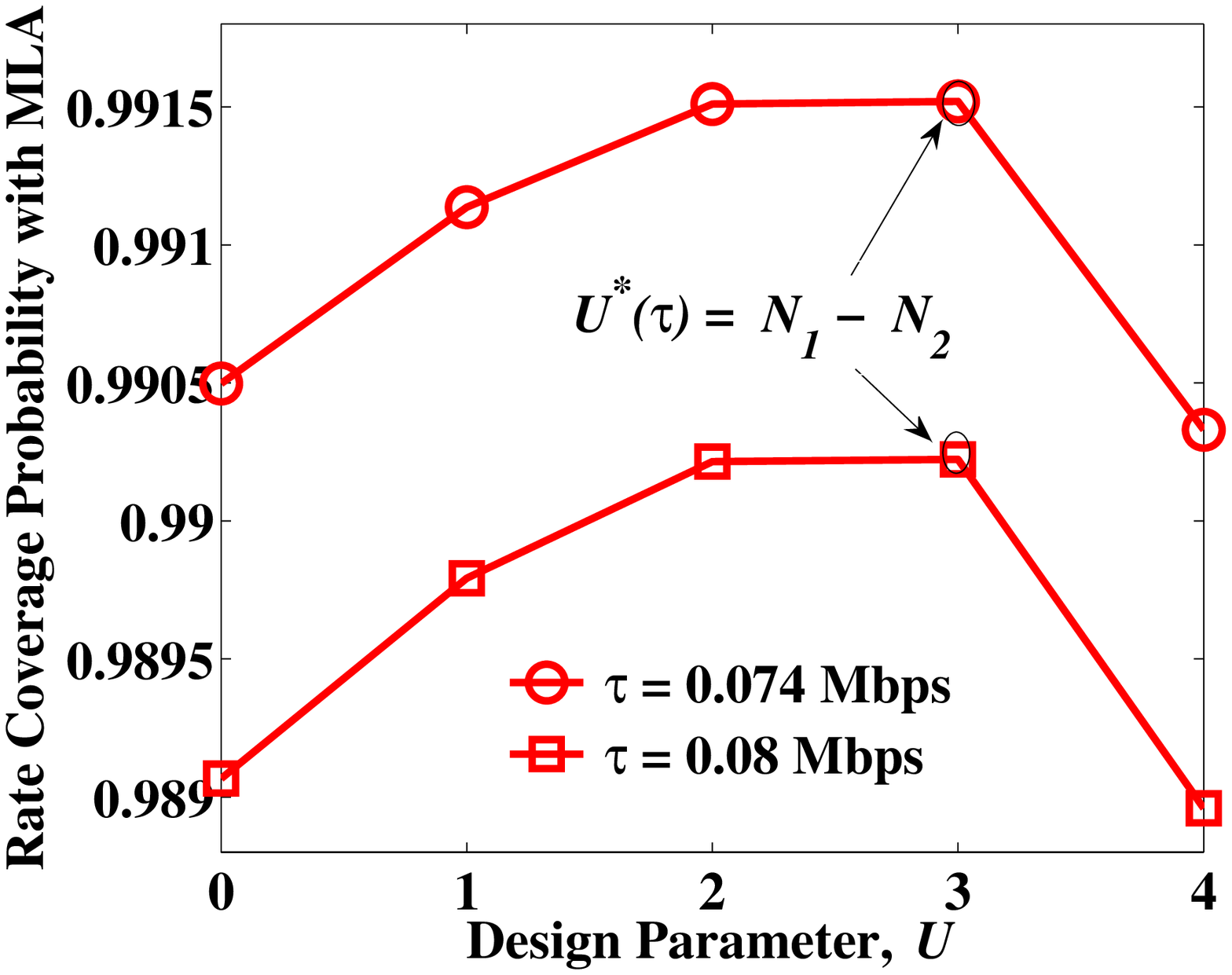}
\label{fig:CPratevsU_smallbeta_B4p6dB}
}
\subfigure[$U^{*}(\tau)$ vs. $\tau$]{
\includegraphics[width=2in]
{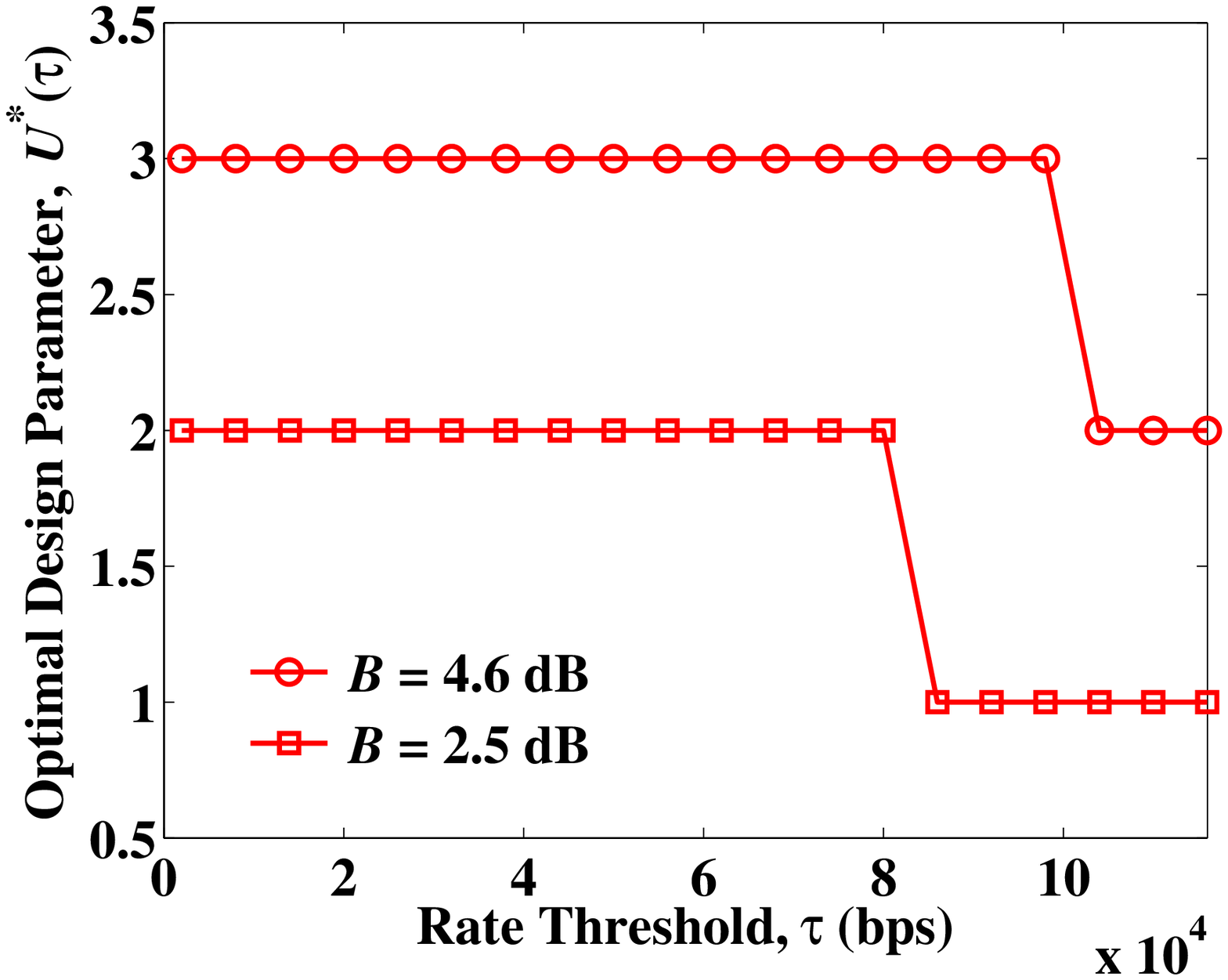}
\label{fig:optUvstau}
}
\caption{\scriptsize Optimal design parameter $U^{*}(\tau)$, at $\alpha_{1}=\alpha_{2}=3$, $\frac{P_{1}}{P_{2}}=10$ dB, $W=10\times 10^{6}$ Hz, $N_{1}=5$, $N_{2}=2$, $\lambda_{u}=0.01$ nodes/m$^{2}$, $\lambda_{1}=0.0001$ nodes/m$^{2}$, and $\lambda_{2}=0.0015$ nodes/m$^{2}$.}\label{fig:CPrateMLAvsU_diffB_smalltau}
\vspace{-8mm}
\end{figure}

Next, we characterize the rate coverage probability gain achieved by an offloaded user, i.e., $\Delta\mathcal{\bar{R}}_{{\rm IN},2O}(U,\tau)$. Using a mean interference-to-signal ratio based approach proposed in \cite{haenggi14}, we obtain the order of $\Delta\mathcal{\bar{R}}_{{\rm IN},2O}(U,\tau)$, which is shown as follows:
\begin{proposition}\label{prop:CPgain_offloadIN}
When $\tau\to0$, we have $\Delta\mathcal{\bar{R}}_{{\rm IN},2O}(U,\tau)=\Theta\left(\tau^{N_{2}}\right)$.
\end{proposition}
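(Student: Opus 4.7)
The plan is to exploit the factorization in (\ref{eq:CPloss_macro}), which writes $\Delta\bar{\mathcal{R}}_{{\rm IN},2O}(U,\tau)$ as a product of two pieces: the IN probability increment $\Pr(\mathcal{E}_{2OC,0}(U))-\Pr(\mathcal{E}_{2OC,0}(U-1))$, and the rate coverage probability gap $\bar{\mathcal{R}}_{{\rm IN},2OC}(\tau)-\bar{\mathcal{R}}_{{\rm IN},2O\bar{C}}(\tau)$. First I would note that the IN probability increment is a positive constant in $\tau$: from \emph{Lemma \ref{lem:prob_IN}} it depends only on $U$, $N_1$, densities, and path loss parameters, so it is $\Theta(1)$ as $\tau\to0$. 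Consequently, the entire order analysis reduces to establishing that $\bar{\mathcal{R}}_{{\rm IN},2OC}(\tau)-\bar{\mathcal{R}}_{{\rm IN},2O\bar{C}}(\tau)=\Theta(\tau^{N_2})$.

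Next, I would rewrite this gap as $\bigl(1-\bar{\mathcal{R}}_{{\rm IN},2O\bar{C}}(\tau)\bigr)-\bigl(1-\bar{\mathcal{R}}_{{\rm IN},2OC}(\tau)\bigr)$, since both quantities tend to $1$ as $\tau\to0$, and apply the mean interference-to-signal-ratio expansion from \cite{haenggi14}. The key observation is that both $u_0\in\mathcal{U}_{2OC}$ and $u_0\in\mathcal{U}_{2O\bar{C}}$ are served by a pico-BS with effective channel gain $\mathrm{Gamma}(N_2,1)$; by the CDF expansion $\Pr(X\le x)\sim x^{N_2}/N_2!$ for $X\sim\mathrm{Gamma}(N_2,1)$ and $x\to 0$, conditioning on the interference $I_k$ and geometry gives
\begin{align}
1-\mathcal{S}_{{\rm IN},k}(\beta)\;=\;\frac{\beta^{N_2}}{N_2!}\,\mathbb{E}\!\left[Y_2^{\alpha_2 N_2}\,I_k^{N_2}\right]+o(\beta^{N_2}),\qquad k\in\{2OC,2O\bar C\},
\end{align}
where $I_{2O\bar{C}}=I_{2OC}+\tfrac{P_1}{P_2}Y_1^{-\alpha_1}\lvert\mathbf{h}_{1,10}^{\dagger}\mathbf{f}_{1,1}\rvert^{2}$ carries the extra dominant-interferer term present in (\ref{eq:SIR_BbarC_v2}) but absent from (\ref{eq:SINR_BC}). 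Since $\beta=f(\mathbb{E}[L_{0,2}]\tau/W)=2^{\mathbb{E}[L_{0,2}]\tau/W}-1=\Theta(\tau)$ as $\tau\to 0$, this gives each coverage-probability complement the order $\Theta(\tau^{N_2})$.

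Subtracting the two expansions, the leading coefficient becomes
\begin{align}
\frac{1}{N_2!}\,\mathbb{E}\!\left[Y_2^{\alpha_2 N_2}\bigl(I_{2O\bar{C}}^{N_2}-I_{2OC}^{N_2}\bigr)\right],
\end{align}
which is strictly positive because the extra non-negative summand in $I_{2O\bar{C}}$ is positive with positive probability and independent of $Y_2$ in the required moment calculation. Multiplying by the constant IN-probability increment then yields $\Delta\bar{\mathcal{R}}_{{\rm IN},2O}(U,\tau)=\Theta(\tau^{N_2})$.

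The main obstacle will be justifying the exchange of limit and expectation needed to upgrade the pointwise CDF expansion of the Gamma variable to the expected order statement, in the presence of the shot-noise interference $I_k$ and the random link distances $Y_1,Y_2$. I would handle this by the dominated convergence argument used in the proof of \emph{Proposition \ref{prop:CPloss_macro}}: bound $\mathcal{T}_{k,Y_1,Y_2}(0,\cdot,\cdot)-\mathcal{T}_{k,Y_1,Y_2}(n,\cdot,\cdot)$ after scaling by $\tau^{-N_2}$ by a finite-moment envelope of $Y_2^{\alpha_2 N_2}(1+I_{2O\bar{C}})^{N_2}$ integrable against the joint density $f_{Y_1,Y_2}$ given in \emph{Theorem \ref{cor:CP_uncondi}}, using $\mathcal{L}_{I_j}(s,r_{jk})\le 1$, $B'(\cdot,\cdot,\cdot)\le B(\cdot,\cdot)$, and the Gaussian-type tail of $f_{Y_1,Y_2}$ to kill the polynomial growth, precisely as in \emph{Proposition \ref{prop:CPloss_macro}}. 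Once dominated convergence applies, the leading-order identity above is immediate and the $\Theta(\tau^{N_2})$ claim follows.
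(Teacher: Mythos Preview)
Your proposal is correct and follows essentially the same approach as the paper: factor via (\ref{eq:CPloss_macro}), note the IN probability increment is a positive $\tau$-independent constant, expand each complement $1-\bar{\mathcal{R}}_{{\rm IN},k}(\tau)$ for $k\in\{2OC,2O\bar C\}$ via the ${\rm Gamma}(N_2,1)$ CDF and the mean interference-to-signal-ratio approach of \cite{haenggi14}, and observe that the extra dominant-interferer term makes the leading coefficient of the difference strictly positive. The only minor difference is in the technical justification: the paper verifies finiteness of the leading-order moments directly by invoking \cite{haenggi14} together with the geometric constraint $Y_2^{\alpha_2}<\tfrac{BP_2}{P_1}Y_1^{\alpha_1}$ valid on $\mathcal{U}_{2O}$, whereas you propose a dominated convergence argument in the style of the proof of \emph{Proposition \ref{prop:CPloss_macro}}; both routes work.
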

\begin{proof}
See Appendix \ref{proof:gain_offload_lowbeta}.
\end{proof}

From \emph{Proposition \ref{prop:CPgain_offloadIN}}, we see that when $\tau\to0$, as the number of antennas at each pico-BS $N_{2}$ increases, the rate coverage probability gain of an offloaded user, i.e., $\Delta\mathcal{\bar{R}}_{{\rm IN},2O}(U,\tau)$ in (\ref{eq:CPdelta_diffU}), decreases, and the decrease is in the order of $\tau^{N_{2}}$.

According to (\ref{eq:CPdelta_diffU}), \emph{Proposition \ref{prop:CPloss_macro}} and \emph{Proposition \ref{prop:CPgain_offloadIN}}, and noting that $\mathcal{A}_{2O}$ and $\mathcal{A}_{1}$ are independent of $\tau$, we have 
\begin{align}\label{eq:CPchange_order}
\Delta\mathcal{\bar{R}}_{\rm IN}(U,\tau)=\Theta\left(\tau^{N_{2}}\right)-\Theta\left(\tau^{N_{1}-U}\right)=
\begin{cases}
\Theta\left(\tau^{N_{2}}\right)>0\;,\hspace{13.5mm}{\rm when}\hspace{2mm}U<N_{1}-N_{2}\\
\Theta\left(\tau^{N_{2}}\right)-\Theta\left(\tau^{N_{2}}\right)\;,\hspace{1mm}{\rm when}\hspace{2mm}U=N_{1}-N_{2}\\
\Theta\left(\tau^{N_{2}-U}\right)<0\;,\hspace{9mm}{\rm when}\hspace{2mm}U>N_{1}-N_{2}\\
\end{cases}
.
\end{align} Since $U^{*}(\tau)$ satisfies $\Delta\mathcal{\bar{R}}_{\rm IN}(U^{*}(\tau),\tau)>0$ and $\Delta\mathcal{\bar{R}}_{\rm IN}(U^{*}(\tau)+1,\tau)\le0$, we see from (\ref{eq:CPchange_order}) that $U^{*}(\tau)$ should be in the set $\{N_{1}-N_{2}-1,N_{1}-N_{2}\}$, and the exact value of $U^{*}(\tau)$ depends on whether $\Delta\mathcal{\bar{R}}_{\rm IN}(U,\tau)$ is positive or not when $U=N_{1}-N_{2}$ (i.e., the second case in (\ref{eq:CPchange_order})), i.e., whether the coefficient in $\Theta\left(\tau^{N_{2}}\right)$ corresponding to $\mathcal{A}_{2O}\Delta\mathcal{\bar{R}}_{{\rm IN},2O}(U,\tau)$ (i.e., the first one) is larger than that in $\Theta\left(\tau^{N_{2}}\right)$  corresponding to $\mathcal{A}_{1}\left|\Delta\mathcal{\bar{R}}_{{\rm IN},1}(U,\tau)\right|$ (i.e., the second one) or not. 
According to the above discussions, we can obtain the following theorem: 
\begin{theorem}\label{prop:optU_lowbeta}
When $\tau\to0$, the optimal design parameter $U^{*}(\tau)\to U_{0}^{*}$, where $U_{0}^{*}\in\{N_{1}-N_{2}-1,N_{1}-N_{2}\}$.
\end{theorem}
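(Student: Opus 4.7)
The plan is to combine the two discrete optimality conditions characterizing $U^{*}(\tau)$ with the order analysis already summarized in~(\ref{eq:CPchange_order}). Since $\bar{\mathcal{R}}_{\rm IN}(U,\tau)$ is maximized over the finite set $\{0,1,\ldots,N_{1}-1\}$, an (interior) maximizer satisfies $\Delta\bar{\mathcal{R}}_{\rm IN}(U^{*}(\tau),\tau)\ge 0$ and $\Delta\bar{\mathcal{R}}_{\rm IN}(U^{*}(\tau)+1,\tau)\le 0$. So I would first pin down, for $\tau$ small, the index at which the sign of $\Delta\bar{\mathcal{R}}_{\rm IN}(U,\tau)$ flips, and then invoke these inequalities.

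Using~(\ref{eq:CPchange_order}) and noting that $\mathcal{A}_{2O},\mathcal{A}_{1}$ are strictly positive constants independent of $\tau$, I would argue as follows. For $U\le N_{1}-N_{2}-1$ (i.e.\ $U<N_{1}-N_{2}$) one has $N_{1}-U>N_{2}$, so the penalty $\mathcal{A}_{1}|\Delta\bar{\mathcal{R}}_{{\rm IN},1}(U,\tau)|=\Theta(\tau^{N_{1}-U})$ decays strictly faster in $\tau$ than the gain $\mathcal{A}_{2O}\Delta\bar{\mathcal{R}}_{{\rm IN},2O}(U,\tau)=\Theta(\tau^{N_{2}})$, so $\Delta\bar{\mathcal{R}}_{\rm IN}(U,\tau)>0$ for all sufficiently small $\tau$. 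Symmetrically, for $U\ge N_{1}-N_{2}+1$ (i.e.\ $U>N_{1}-N_{2}$) the penalty now decays strictly slower and dominates the gain, so $\Delta\bar{\mathcal{R}}_{\rm IN}(U,\tau)<0$ for all sufficiently small $\tau$. Taking a common $\tau_{0}>0$ valid across the finite range of $U$, one concludes that for every $\tau\le\tau_{0}$, $\bar{\mathcal{R}}_{\rm IN}(\cdot,\tau)$ is strictly increasing on $\{0,\ldots,N_{1}-N_{2}-1\}$ and strictly decreasing on $\{N_{1}-N_{2},\ldots,N_{1}-1\}$ after possibly one more step. The discrete optimality conditions then force $U^{*}(\tau)\in\{N_{1}-N_{2}-1,N_{1}-N_{2}\}$, which gives the claim with $U_{0}^{*}$ identified as whichever of these two integers actually achieves the maximum for $\tau$ small.

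The main obstacle is the boundary case $U=N_{1}-N_{2}$: here the gain and penalty share the same asymptotic order $\Theta(\tau^{N_{2}})$, so the sign of $\Delta\bar{\mathcal{R}}_{\rm IN}(N_{1}-N_{2},\tau)$ is governed by the comparison of leading coefficients, which depend on $\mathcal{A}_{2O},\mathcal{A}_{1}$, the IN-probability increment ${\rm Pr}(\mathcal{E}_{2OC,0}(N_{1}-N_{2}))-{\rm Pr}(\mathcal{E}_{2OC,0}(N_{1}-N_{2}-1))$, and the constants appearing in the proofs of \emph{Proposition~\ref{prop:CPloss_macro}} and \emph{Proposition~\ref{prop:CPgain_offloadIN}}. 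This comparison does not simplify to a clean closed-form inequality, which is precisely why the statement asserts only the two-element containment rather than a single value. A minor subtlety, worth stating up front for integer-valued $U^{*}(\tau)$, is the reading of ``$U^{*}(\tau)\to U_{0}^{*}$'' as: there exists $\tau_{0}>0$ and a fixed $U_{0}^{*}\in\{N_{1}-N_{2}-1,N_{1}-N_{2}\}$ such that $U^{*}(\tau)=U_{0}^{*}$ for all $\tau\le\tau_{0}$, a statement that falls out directly from the uniform sign comparisons above once the boundary coefficients are fixed.
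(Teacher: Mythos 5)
Your proposal is correct and follows essentially the same route as the paper, which proves this theorem via the discussion culminating in~(\ref{eq:CPchange_order}): the order comparison $\Theta(\tau^{N_{2}})$ vs.\ $\Theta(\tau^{N_{1}-U})$ from \emph{Propositions~\ref{prop:CPloss_macro}} and~\ref{prop:CPgain_offloadIN}, the discrete optimality conditions $\Delta\mathcal{\bar{R}}_{\rm IN}(U^{*}(\tau),\tau)>0$ and $\Delta\mathcal{\bar{R}}_{\rm IN}(U^{*}(\tau)+1,\tau)\le0$, and the observation that the boundary case $U=N_{1}-N_{2}$ is decided by a coefficient comparison that is not resolved in closed form. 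Your explicit treatment of the uniform $\tau_{0}$ over the finite range of $U$ and of what ``convergence'' means for an integer-valued maximizer is slightly more careful than the paper's exposition, but the substance is identical.
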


\emph{Theorem \ref{prop:optU_lowbeta}} shows that when $\tau\to0$, the optimal design parameter $U^{*}(\tau)$ converges to a fixed value in the set $\{N_{1}-N_{2}-1,N_{1}-N_{2}\}$, which is only related to the number of antennas at each macro-BS and each pico-BS. This is because when $\tau\to0$, the ``gain" $\mathcal{A}_{2O}\Delta\mathcal{\bar{R}}_{{\rm IN},2O}(U,\tau)=\Theta\left(\tau^{N_{2}}\right)$ and the ``penalty" $\mathcal{A}_{1}\left|\Delta\mathcal{\bar{R}}_{{\rm IN},1}(U,\tau)\right|=\Theta\left(\tau^{N_{1}-U}\right)$.  

Figs.\ \ref{fig:CPratevsU_smallbeta_B2p5rdB} and \ref{fig:CPratevsU_smallbeta_B4p6dB} plot $\mathcal{\bar{R}}_{\rm IN}(U,\tau)$ vs. the design parameter $U$ for different bias factors $B$. We see that when $B=2.5$ dB, $U^{*}(\tau)=N_1-N_{2}-1=2$; when $B=4.6$ dB, $U^{*}(\tau)=N_1-N_{2}=3$ (note that $U^{*}(\tau)$ increases with $B$). Moreover, Fig.\ \ref{fig:optUvstau} plots the optimal design parameter $U^{*}(\tau)$ vs. rate threshold $\tau$ for different bias factors $B$, from which we see that $U^{*}(\tau)$ converges to a fixed value $U_{0}^{*}\in\{N_{1}-N_{2}-1,N_{1}-N_{2}\}$ when $\tau$ is sufficiently small (e.g., $\tau< 0.1$ Mbps for $B=4.6$ dB). These observations verify \emph{Theorem \ref{prop:optU_lowbeta}}.

\subsection{Rate Coverage Probability Optimization for General $\tau$} 
In this part, we discuss the optimality property of $U^{*}(\tau)$ for general $\tau$. Note that, for general $\tau$, the ``gain" $\mathcal{A}_{2O}\Delta\mathcal{\bar{R}}_{{\rm IN},2O}(U,\tau)\neq\Theta\left(\tau^{N_{2}}\right)$ and the ``penalty" $\mathcal{A}_{1}\left|\Delta\mathcal{\bar{R}}_{{\rm IN},1}(U,\tau)\right|\neq\Theta\left(\tau^{N_{1}-U}\right)$. Hence, different from the case for small $\tau$, for general $\tau$, $U^{*}(\tau)$ also depends on other system parameters besides $N_1$ and $N_2$. Fig.\ \ref{fig:CPrateMLAvsU_diffB} plots the rate coverage probability with MLA vs. $U$ for different bias factors $B$. We can see that besides $N_{1}-N_{2}-1$ and $N_{1}-N_{2}$, $U^{*}(\tau)$ can also take other values in set $\{0,1,\ldots,N_{1}-1\}$. In particular, we see that $U^{*}(\tau)$ can be $0$ (at $B=2$ dB), $2$ (at $B=5$ dB), and $4$ (at $B=10$ dB). Interestingly, similar to the case for small $\tau$ in Fig.\ \ref{fig:CPrateMLAvsU_diffB_smalltau}, from Fig.\ \ref{fig:CPrateMLAvsU_diffB}, for general $\tau$, we can also see that $U^{*}(\tau)$ increases with the bias factor $B$. 

\begin{figure}[t]
\centering
\subfigure[$B=2$ dB]{
\includegraphics[width=2in]
{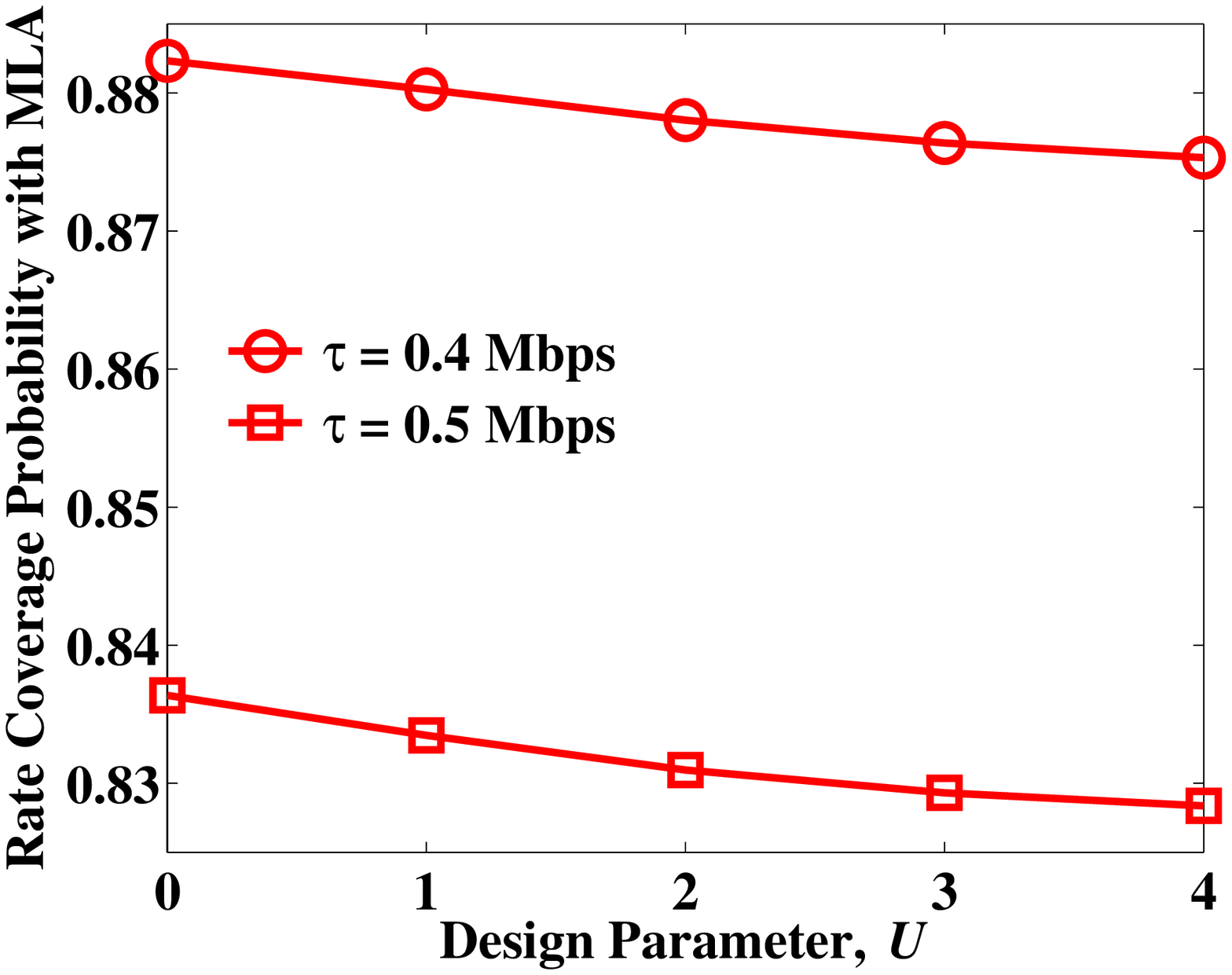}
\label{fig:CPratevsU_B2dB}
}
\subfigure[$B=5$ dB]{
\includegraphics[width=2in]{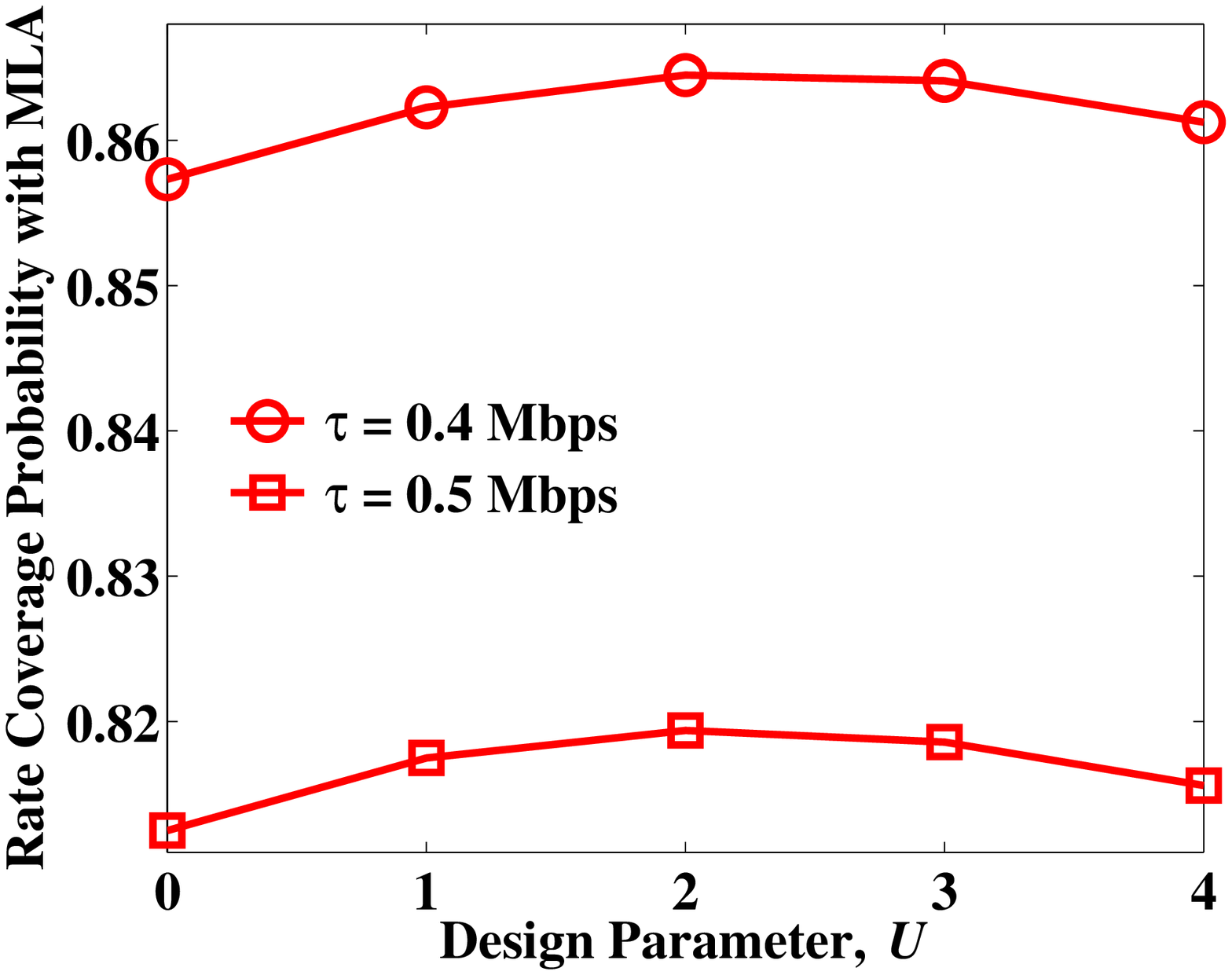}
\label{fig:CPratevsU_B5dB}
}
\subfigure[$B=10$ dB]{
\includegraphics[width=2in]{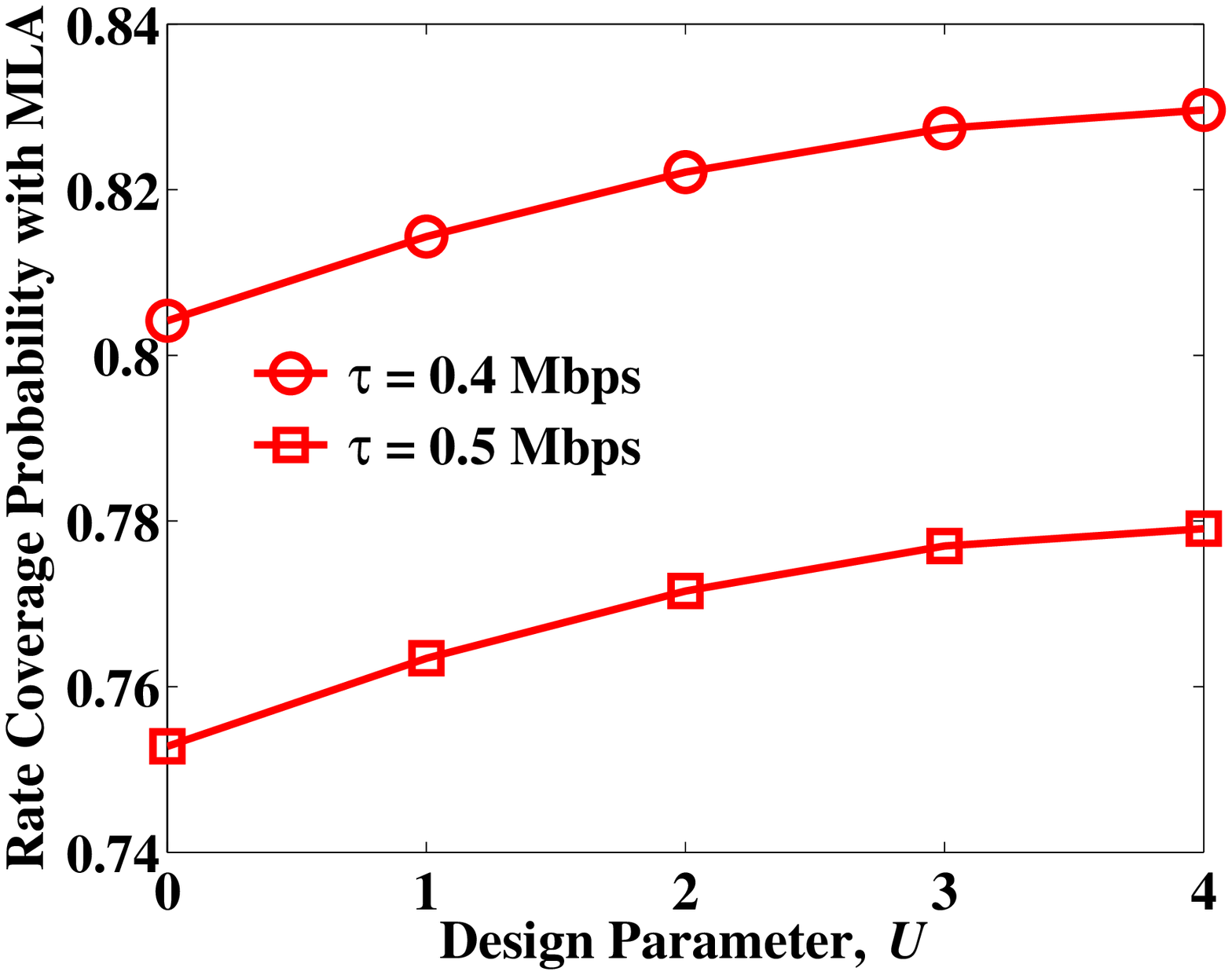}
\label{fig:CPratevsU_B10dB}
}
\caption{\scriptsize Rate coverage probability with MLA vs. $U$ for different bias factors $B$ and general $\tau$, at $\alpha_{1}=\alpha_{2}=3$, $\frac{P_{1}}{P_{2}}=10$ dB, $N_{1}=5$, $N_{2}=2$, $W=10\times 10^{6}$ Hz, $\lambda_{1}=0.0001$ nodes/m$^{2}$, $\lambda_{2}=0.0015$ nodes/m$^{2}$, and $\lambda_{u}=0.01$ nodes/m$^{2}$.}\label{fig:CPrateMLAvsU_diffB}
\vspace{-8mm}
\end{figure}

\section{Rate Coverage Probability Comparison}

In this section, we first analyze the rate coverage probabilities of the simple offloading scheme without interference management (i.e., $U=0$) and the multi-antenna version of ABS in 3GPP-LTE \cite{singh13}. Then, we compare the rate coverage probability of each user type and the overall rate coverage probability of the IN scheme with those of the simple offloading scheme and ABS. 


\subsection{Rate Coverage Probability Analysis for Simple Offloading Scheme and ABS}
\subsubsection{Analysis for Simple Offloading Scheme (i.e., $U=0$)}
Note that $U=0$ is a special case of the IN scheme (under a given $U\in\{0,1,\ldots,N_{1}-1\}$). As such, by letting $U=0$ in \emph{Theorem \ref{cor:CPrate_overall}} and \emph{Corollary \ref{cor:CPrate_MLA}}, we can obtain the rate coverage probability and its MLA of the simple offloading scheme, respectively. In addition, from the resultant expressions, we can know the rate coverage probabilities of the macro-users $\mathcal{\bar{R}}_{U=0,1}\left(\tau\right)$, the unoffloaded pico-users $\mathcal{\bar{R}}_{U=0,2\bar{O}}(\tau)$ and the offloaded users $\mathcal{\bar{R}}_{U=0,2O}\left(\tau\right)$, where $\mathcal{\bar{R}}_{U=0,k}(\tau)=\mathcal{\bar{R}}_{{\rm IN},k}(0,\tau)$ $\left(k\in\{1,2O\}\right)$ and $\mathcal{\bar{R}}_{U=0,2\bar{O}}(\tau)=\mathcal{R}_{{\rm IN},2\bar{O}}(\tau)$. Here, we omit the expressions of the rate coverage probability and its MLA of the simple offloading scheme. Note that \cite{gupta13} also derived the rate coverage probability and its MLA of the macro-users and the pico-users under the simple offloading scheme in large multi-antenna HetNets. However, they did not further obtain the results for the unoffloaded pico-users and the offloaded (pico-) users.  



\subsubsection{Analysis for ABS}
We consider ABS with a given design parameter $\eta\in(0,1)$. Specifically, in ABS, $\eta$ fraction of the resource $W$ is utilized by the pico-BSs to serve offloaded users only, while the remaining $1-\eta$ fraction of the resource $W$ is utilized simultaneously by the macro-BSs and pico-BSs to serve the macro-users and unoffloaded pico-users, respectively \cite{singh13}. In other words, to avoid interference to the offloaded users from all the macro-BSs, the resource used at each BS to serve its associated users in ABS is reduced due to the resource partition (parameterized by $\eta$). Note that different from ABS, in the IN scheme and the simple offloading scheme, each BS utilizes all the resource $W$ to serve its associated users. Similar to (\ref{eq:CPrate_each_IN}), the rate coverage probability of $u_{0}\in\mathcal{U}_{k}$ $\left(k\in\{1,2\bar{O},2O\}\right)$ under ABS is defined as 
\begin{align}\label{eq:rate_ABS_def}
\mathcal{R}_{{\rm ABS},k}(\eta,\tau)&\define{\rm Pr}\left(R_{{\rm ABS},k,0}>\tau|u_{0}\in\mathcal{U}_{k}\right)\notag\\
&={\rm Pr}\left(\frac{\eta_{k}W}{L_{0,k}}\log_{2}\left(1+{\rm SIR}_{{\rm ABS},k,0}\right)>\tau|u_{0}\in\mathcal{U}_{k}\right)
\end{align} 
where $R_{{\rm ABS},k,0}$ and ${\rm SIR}_{{\rm ABS},k,0}$ denote the rate and SIR of $u_{0}\in\mathcal{U}_{k}$ in ABS, respectively, $\eta_{1}=\eta_{2\bar{O}}=1-\eta$, and $\eta_{2O}=\eta$. Similar to (\ref{eq:CPrate_CP}), the rate coverage probability of ABS under the design parameter $\eta$ can be written as: 
\begin{align}
\mathcal{R}_{{\rm ABS}}(\eta,\tau)&\define{\rm Pr}\left(R_{{\rm ABS},0}>\tau\right)\notag\\
&=\sum_{k\in\{1,2\bar{O},2O\}}\mathcal{A}_{k}\mathcal{R}_{{\rm ABS},k}(\eta,\tau)\;, 
\end{align}
where $R_{{\rm ABS},0}$ is the rate of $u_{0}$ (which can be in any user set) in ABS. Applying similar methods in calculating the rate coverage probability and its MLA of the IN scheme in \emph{Theorem \ref{cor:CPrate_overall}} and \emph{Corollary \ref{cor:CPrate_MLA}}, we can obtain the rate coverage probability and its MLA of ABS, respectively. In particular, the rate coverage probability of ABS $\mathcal{R}_{\rm ABS}(\eta,\tau)$  under $\eta\in(0,1)$ is given as follows:
\begin{proposition}\label{prop:CPrate_ABS}
The rate coverage probability of ABS under $\eta\in(0,1)$ is
\begin{align}\label{eq:CPrate_ABS}
\mathcal{R}_{\rm ABS}(\eta,\tau)=\mathcal{A}_{1}\mathcal{R}_{{\rm ABS},1}(\eta,\tau)+\mathcal{A}_{2\bar{O}}\mathcal{R}_{{\rm ABS},2\bar{O}}(\eta,\tau)+\mathcal{A}_{2O}\mathcal{R}_{{\rm ABS},2O}(\eta,\tau)
\end{align} 
where $\mathcal{A}_{1}$, $\mathcal{A}_{2\bar{O}}$ and $\mathcal{A}_{2O}$ are given in \emph{Theorem \ref{cor:CP_uncondi}}, and 
\begin{align}
&\mathcal{R}_{{\rm ABS},1}(\eta,\tau)=\sum_{u\ge1}{\rm Pr}\left(L_{0,1}=u\right)\int_{0}^{\infty}\sum_{n=0}^{N_{1}-1}\frac{1}{n!}\sum_{n_{1}=0}^{n}\binom{n}{n_{1}}\tilde{\mathcal{L}}_{I_{1}}^{(n_{1})}\left(s,y\right)\Big|_{s=\hat{\beta}_{1}(\eta,u) y^{\alpha_{1}}}\notag\\
&\hspace{42mm}\times\tilde{\mathcal{L}}_{I_{2}}^{(n-n_{1})}\left(s,\left(P_{2}B/P_{1}\right)^{\frac{1}{\alpha_{2}}}y^{\frac{\alpha_{1}}{\alpha_{2}}}\right)\Big|_{s=\hat{\beta}_{1}(\eta,u) y^{\alpha_{1}}\frac{P_{2}}{P_{1}}}f_{Y_{1}}(y){\rm d}y\;,\\
&\mathcal{R}_{{\rm ABS},2\bar{O}}(\eta,\tau)=\sum_{u\ge1}{\rm Pr}\left(L_{0,2\bar{O}}=u\right)\int_{0}^{\infty}\mathcal{S}_{{\rm IN},2\bar{O},Y_{2}}\left(y,\hat{\beta}_{2\bar{O}}(\eta,u)\right)f_{Y_{2}}(y){\rm d}y\;,\\
&\mathcal{R}_{{\rm ABS},2O}(\eta,\tau)=\sum_{u\ge1}{\rm Pr}\left(L_{0,2O}=u\right)\int_{0}^{\infty}\sum_{n=0}^{N_{2}-1}\frac{1}{n!}\tilde{\mathcal{L}}^{(n)}_{I_{2}}\left(s,y\right)\Big|_{s=\hat{\beta}_{2O}(\eta,u)y^{\alpha_{2}}}f_{Y_{2O}}(y){\rm d}y\;.
\end{align} 
Here, $\hat{\beta}_{1}(\eta,u)=2^{\frac{\tau u}{W(1-\eta)}}-1$, $\hat{\beta}_{2\bar{O}}(\eta,u)=2^{\frac{\tau u}{W(1-\eta)}}-1$, $\hat{\beta}_{2O}(\eta,u)=2^{\frac{\tau u}{W\eta}}-1$, ${\rm Pr}\left(L_{0,1}=u\right)$ is given in \emph{Theorem \ref{cor:CPrate_overall}}, $f_{Y_{1}}(y)$ and $f_{Y_{2}}(y)$ are given in \emph{Theorem \ref{cor:CP_uncondi}}, and 
\begin{align}
{\rm Pr}\left(L_{0,2\bar{O}}=u\right)=&\frac{3.5^{3.5}\Gamma\left(u+3.5\right)\left(\frac{\lambda_{u}\mathcal{A}_{2\bar{O}}}{\lambda_{2}}\right)^{u-1}}{\Gamma(3.5)(u-1)!}\left(3.5+\frac{\lambda_{u}\mathcal{A}_{2\bar{O}}}{\lambda_{2}}\right)^{-(u+3.5)}\;,\quad u\ge1\\
{\rm Pr}\left(L_{0,2O}=u\right)=&\frac{3.5^{3.5}\Gamma\left(u+3.5\right)\left(\frac{\lambda_{u}\mathcal{A}_{2O}}{\lambda_{2}}\right)^{u-1}}{\Gamma(3.5)(u-1)!}\left(3.5+\frac{\lambda_{u}\mathcal{A}_{2O}}{\lambda_{2}}\right)^{-(u+3.5)}\;,\quad u\ge1\\
f_{Y_{2O}}(y)=&\frac{2\pi\lambda_{2}}{\mathcal{A}_{2O}}\left(\exp\left(-\pi\lambda_{1}\left(P_{1}/(P_{2}B)\right)^{\frac{2}{\alpha_{1}}}y^{\frac{2\alpha_{2}}{\alpha_{1}}}\right)-\exp\left(-\pi\lambda_{1}\left(P_{1}/P_{2}\right)^{\frac{2}{\alpha_{1}}}y^{\frac{2\alpha_{2}}{\alpha_{1}}}\right)\right)\notag\\
&\hspace{6mm}\times y\exp\left(-\pi\lambda_{2}y^{2}\right)
\end{align} 
are given by \cite{singh13}.
\end{proposition}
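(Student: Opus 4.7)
\textbf{Proof plan for Proposition \ref{prop:CPrate_ABS}.}

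The plan is to mirror the derivation of the rate coverage probability of the IN scheme in \emph{Theorem \ref{cor:CPrate_overall}}, but with three crucial modifications that reflect the structural differences of ABS: (i) no interference nulling is performed, so each macro-BS uses all $N_1$ DoF to boost the desired signal to its scheduled macro-user (in contrast to $N_1 - u_{2OC,0}$ in the IN scheme); (ii) the available bandwidth for a user in $\mathcal{U}_k$ is scaled by $\eta_k$, so the effective SIR threshold becomes $\hat{\beta}_k(\eta,u) = 2^{\tau u / (W \eta_k)} - 1$ instead of $2^{\tau u / W} - 1$; (iii) during the $\eta$ fraction dedicated to offloaded users, the macro-BSs are silent, so the macro-tier interference term is absent from (\ref{eq:SIR_BbarC_v2}) for $u_0 \in \mathcal{U}_{2O}$. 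With these in hand, the proof reduces to bookkeeping.

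First I would apply total probability over the three user sets $\{1, 2\bar{O}, 2O\}$ to obtain the decomposition (\ref{eq:CPrate_ABS}). Then, for each user type, I would condition on the load $L_{0,k}$ of its serving BS, as done in the proof of \emph{Theorem \ref{cor:CPrate_overall}}, reducing $\mathcal{R}_{{\rm ABS},k}(\eta,\tau)$ to an expectation over the load of the conditional SIR coverage probability $\mathcal{S}_{{\rm ABS},k}(\hat{\beta}_k(\eta,u))$.

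Next I would derive $\mathcal{S}_{{\rm ABS},k}(\cdot)$ by invoking \emph{Lemma \ref{theo:CP_condi}}. For $u_0 \in \mathcal{U}_1$, I would set $M_1 = N_1$ (since $u_{2OC,0} \equiv 0$) and use the same two-tier structure as the IN scheme, which directly yields the expression for $\mathcal{R}_{{\rm ABS},1}(\eta,\tau)$. For $u_0 \in \mathcal{U}_{2\bar{O}}$, the interference geometry is unchanged from the IN case, so I would reuse $\mathcal{S}_{{\rm IN},2\bar{O},Y_2}(y,\cdot)$ with the modified SIR threshold $\hat{\beta}_{2\bar{O}}(\eta,u)$. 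For $u_0 \in \mathcal{U}_{2O}$, the derivation diverges most: I would repeat the steps leading to (\ref{eq:condiCP}) but with only the pico-tier interference term present. This reduces the multinomial expansion in \emph{Lemma \ref{theo:CP_condi}} to a single-tier sum, giving only the $\tilde{\mathcal{L}}_{I_2}^{(n)}\bigl(s,y\bigr)$ factor evaluated at $s = \hat{\beta}_{2O}(\eta,u) y^{\alpha_2}$, and the distance distribution of an offloaded user to its serving pico-BS is $f_{Y_{2O}}(y)$ from \cite{sakr14,singh13}.

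Finally I would compute the load p.m.f.s ${\rm Pr}(L_{0,2\bar{O}} = u)$ and ${\rm Pr}(L_{0,2O} = u)$. In ABS, each pico-BS effectively serves its unoffloaded pico-users and its offloaded users on disjoint resources, so the relevant load for a typical user in $\mathcal{U}_{2\bar{O}}$ (respectively, $\mathcal{U}_{2O}$) is the number of users of the same type sharing its serving pico-BS. Applying the standard size-biased negative-binomial approximation of \cite[Lemma 3]{singh13} with effective user densities $\lambda_u \mathcal{A}_{2\bar{O}}$ and $\lambda_u \mathcal{A}_{2O}$, respectively, yields the claimed p.m.f. expressions. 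The main obstacle is conceptual rather than technical: correctly identifying what interference is active for each user type during each resource portion, and correspondingly what the effective load at the serving BS is. Once these are pinned down, the algebra is a direct specialization of \emph{Lemma \ref{theo:CP_condi}} and the load-conditioning machinery already used for the IN scheme.
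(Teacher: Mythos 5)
Your plan is correct and matches the paper's own (very terse) proof, which simply says the result follows as in \emph{Theorem \ref{cor:CPrate_overall}}: condition on the load, apply the conditional SIR coverage machinery of \emph{Lemma \ref{theo:CP_condi}} with the ABS-specific modifications (full $N_{1}$ DoF at macro-BSs, resource-scaled thresholds $\hat{\beta}_{k}(\eta,u)$, no macro-tier interference for offloaded users, and per-type loads $L_{0,2\bar{O}}$, $L_{0,2O}$), and then uncondition. You have in fact supplied more of the bookkeeping than the paper does, and all of it is consistent with the stated expressions.
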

\begin{proof}
Similar to the proof of \emph{Theorem \ref{cor:CPrate_overall}}.
\end{proof}

As shown in Section \ref{subsec:CPrate_anal}, the rate coverage probability with MLA, which has a simpler expression, is sufficiently accurate. The rate coverage probability with MLA for ABS $\mathcal{\bar{R}}_{\rm ABS}(\eta,\tau)$ under $\eta\in(0,1)$ is given as follows:
\begin{proposition}\label{prop:CPrateMLA_ABS}
The rate coverage probability with MLA of ABS under $\eta\in(0,1)$ is
\begin{align}\label{eq:CPrateMLA_ABS}
\mathcal{\bar{R}}_{\rm ABS}(\eta,\tau)=\mathcal{A}_{1}\mathcal{\bar{R}}_{{\rm ABS},1}(\eta,\tau)+\mathcal{A}_{2\bar{O}}\mathcal{\bar{R}}_{{\rm ABS},2\bar{O}}(\eta,\tau)+\mathcal{A}_{2O}\mathcal{\bar{R}}_{{\rm ABS},2O}(\eta,\tau)
\end{align} 
where $\mathcal{A}_{1}$, $\mathcal{A}_{2\bar{O}}$ and $\mathcal{A}_{2O}$ are given in \emph{Theorem \ref{cor:CP_uncondi}}, and 
\begin{align}
&\mathcal{\bar{R}}_{{\rm ABS},1}(\eta,\tau)=\int_{0}^{\infty}\sum_{n=0}^{N_{1}-1}\frac{1}{n!}\sum_{n_{1}=0}^{n}\binom{n}{n_{1}}\tilde{\mathcal{L}}_{I_{1}}^{(n_{1})}\left(s,y\right)\Big|_{s=\tilde{\beta}_{1}(\eta) y^{\alpha_{1}}}\notag\\
&\hspace{36mm}\times\tilde{\mathcal{L}}_{I_{2}}^{(n-n_{1})}\left(s,\left(P_{2}B/P_{1}\right)^{\frac{1}{\alpha_{2}}}y^{\frac{\alpha_{1}}{\alpha_{2}}}\right)\Big|_{s=\tilde{\beta}_{1}(\eta) y^{\alpha_{1}}\frac{P_{2}}{P_{1}}}f_{Y_{1}}(y){\rm d}y\;,\\
&\mathcal{\bar{R}}_{{\rm ABS},2\bar{O}}(\eta,\tau)=\int_{0}^{\infty}\mathcal{S}_{{\rm IN},2\bar{O},Y_{2}}\left(y,\tilde{\beta}_{2\bar{O}}(\eta)\right)f_{Y_{2}}(y){\rm d}y\;,\\
&\mathcal{\bar{R}}_{{\rm ABS},2O}(\eta,\tau)=\int_{0}^{\infty}\sum_{n=0}^{N_{2}-1}\frac{1}{n!}\tilde{\mathcal{L}}^{(n)}_{I_{2}}\left(s,y\right)\Big|_{s=\tilde{\beta}_{2O}(\eta)y^{\alpha_{2}}}f_{Y_{2O}}(y){\rm d}y\;.
\end{align} 
Here, $\tilde{\beta}_{1}(\eta)=2^{\frac{\tau{\rm E}\left[L_{0,1}\right]}{W(1-\eta)}}-1$, $\tilde{\beta}_{2\bar{O}}(\eta)=2^{\frac{\tau{\rm E}\left[L_{0,2\bar{O}}\right]}{W(1-\eta)}}-1$, $\tilde{\beta}_{2O}(\eta)=2^{\frac{\tau{\rm E}\left[L_{0,2O}\right]}{W\eta}}-1$, ${\rm E}\left[L_{0,1}\right]$ is given in \emph{Corollary \ref{cor:CPrate_MLA}}, ${\rm E}\left[L_{0,2\bar{O}}\right]=1+1.28\frac{\lambda_{u}\mathcal{A}_{2\bar{O}}}{\lambda_{2}}$,  ${\rm E}\left[L_{0,2O}\right]=1+1.28\frac{\lambda_{u}\mathcal{A}_{2O}}{\lambda_{2}}$, $f_{Y_{1}}(y)$ and $f_{Y_{2}}(y)$ are given in \emph{Theorem \ref{cor:CP_uncondi}}, and $f_{Y_{2O}}(y)$ is given in \emph{Proposition \ref{prop:CPrate_ABS}}.
\end{proposition}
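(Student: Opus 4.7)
The plan is to mirror the derivation of \emph{Corollary \ref{cor:CPrate_MLA}} (the MLA version of the IN rate coverage probability) but starting from \emph{Proposition \ref{prop:CPrate_ABS}} instead of \emph{Theorem \ref{cor:CPrate_overall}}. The key observation is that the only source of the infinite summations over $u$ in \emph{Proposition \ref{prop:CPrate_ABS}} is the averaging over the random load $L_{0,k}$, which enters the conditional SIR coverage probabilities through the rate-to-SIR threshold mapping $\hat{\beta}_{k}(\eta,u)=2^{\tau u/(W\eta_{k})}-1$. Replacing the random load $L_{0,k}$ by its mean ${\rm E}[L_{0,k}]$ inside the $\log_{2}(\cdot)$ of the rate expression in (\ref{eq:rate_ABS_def}) collapses the sum over $u$ to a single evaluation of the SIR coverage probability at the deterministic threshold $\tilde{\beta}_{k}(\eta)=2^{\tau{\rm E}[L_{0,k}]/(W\eta_{k})}-1$.

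Concretely, I would proceed tier-by-tier. For $u_{0}\in\mathcal{U}_{1}$, the conditional SIR coverage probability at threshold $\hat{\beta}_{1}(\eta,u)$ is exactly the $U=0$ instance of \emph{Lemma \ref{theo:CP_condi}} with $k=1$ (since in ABS the macro-BSs use all $N_{1}$ DoF for boosting their own user and do not null any interference), but with the available resource scaled by $1-\eta$. Substituting ${\rm E}[L_{0,1}]$ for $L_{0,1}$ yields the expression for $\mathcal{\bar{R}}_{{\rm ABS},1}(\eta,\tau)$ involving $\tilde{\beta}_{1}(\eta)$. For $u_{0}\in\mathcal{U}_{2\bar{O}}$, the conditional SIR coverage probability reuses $\mathcal{S}_{{\rm IN},2\bar{O},Y_{2}}(y,\cdot)$ because the interference geometry is identical to that of the IN scheme, and only the threshold changes to $\tilde{\beta}_{2\bar{O}}(\eta)$. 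For $u_{0}\in\mathcal{U}_{2O}$, the fact that ABS completely removes all macro-tier interference during the $\eta$ fraction of resource reduces the SIR to a single-tier pico expression, which gives the single-sum form in $n$ with Laplace transform derivatives $\tilde{\mathcal{L}}_{I_{2}}^{(n)}$, again with $\tilde{\beta}_{2O}(\eta)$ in place of $\hat{\beta}_{2O}(\eta,u)$.

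The final step is to verify the mean-load formulas. Since each tier's load has the same truncated gamma shape as in \emph{Theorem \ref{cor:CPrate_overall}}, the mean follows from the standard result (used already in \emph{Corollary \ref{cor:CPrate_MLA}} and cited from \cite{singh13}) that ${\rm E}[L_{0,j}]\approx 1+1.28\lambda_{u}\mathcal{A}_{j}/\lambda_{j}$; applied to the finer user sets $\mathcal{U}_{2\bar{O}}$ and $\mathcal{U}_{2O}$ separately (instead of the pooled $\mathcal{U}_{2}$), this gives the stated ${\rm E}[L_{0,2\bar{O}}]$ and ${\rm E}[L_{0,2O}]$ because ABS, unlike the IN scheme, schedules offloaded and unoffloaded pico-users on disjoint resources, so they effectively constitute two separate user populations from the scheduling perspective. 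Combining the three parts via total probability with weights $\mathcal{A}_{1},\mathcal{A}_{2\bar{O}},\mathcal{A}_{2O}$ reproduces (\ref{eq:CPrateMLA_ABS}).

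The only non-routine step is the load substitution itself: strictly speaking, replacing $L_{0,k}$ by ${\rm E}[L_{0,k}]$ inside a nonlinear function such as $2^{\tau L_{0,k}/(W\eta_{k})}$ is an approximation rather than an identity, so the main obstacle is justifying (or at least transparently acknowledging) this approximation. This is handled exactly as in the derivation of \emph{Corollary \ref{cor:CPrate_MLA}}, where the same MLA argument was introduced and already validated numerically in Fig.\ \ref{fig:CPrate_diffB}; the present proof therefore inherits that justification and reduces to the bookkeeping of the threshold substitutions described above.
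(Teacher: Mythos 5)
Your proposal is correct and follows essentially the same route as the paper: the paper's proof is simply ``similar to the proof of \emph{Corollary \ref{cor:CPrate_MLA}}'', i.e., replace the random load $L_{0,k}$ in \emph{Proposition \ref{prop:CPrate_ABS}} by its mean ${\rm E}[L_{0,k}]$, which collapses the infinite sums over $u$ into a single evaluation at $\tilde{\beta}_{k}(\eta)$. Your additional remarks on the per-user-type mean loads and on the approximate nature of the MLA substitution are consistent with how the paper handles these points.
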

\begin{proof}
Similar to the proof of \emph{Corollary \ref{cor:CPrate_MLA}}. 
\end{proof}

Note that the rate coverage probability and its MLA of multi-antenna ABS shown in \emph{Proposition \ref{prop:CPrate_ABS}} and \emph{Proposition \ref{prop:CPrateMLA_ABS}} are derived using higher order derivatives of the Laplace transform of the aggregate interference, and can be treated as extensions of the single-antenna results derived using the Laplace transform of the aggregate interference in \cite{singh13}.

\subsection{Rate Coverage Probability Comparison for Each User Type}\label{subset:compABS_fixB}
In this part, we compare the rate coverage probability of $u_{0}\in\mathcal{U}_{k}$ ($k\in\left(1,2\bar{O},2O\right)$) in the IN scheme (under a given $U\in\{0,1,\ldots, N_{1}-1\}$) with those in the simple offloading scheme (i.e., $U=0$) and ABS (under a given $\eta\in(0,1)$), respectively, for a fixed bias factor $B$. 

\subsubsection{Comparison with simple offloading scheme} 
First, we compare the rate coverage probability of $u_{0}\in\mathcal{U}_{k}$ $\left(k\in\{1,2\bar{O},2O\}\right)$ in the IN scheme with that in the simple offloading scheme. We can easily show the following lemma:
\begin{lemma}\label{lem:comp_INU0_eachuser}
For all $U\in\{0,1,\ldots,N_{1}-1\}$, we have: i) $\mathcal{\bar{R}}_{{\rm IN}, 1}(U,\tau)\le \mathcal{\bar{R}}_{U=0, 1}(\tau)$, ii) $\mathcal{\bar{R}}_{{\rm IN}, 2\bar{O}}(\tau)$\\$= \mathcal{\bar{R}}_{U=0, 2\bar{O}}(\tau)$, iii) $\mathcal{\bar{R}}_{{\rm IN}, 2O}(U,\tau)\ge \mathcal{\bar{R}}_{U=0, 2O}(\tau)$. The equalities in i) and ii) hold i.f.f. $U=0$. 
\end{lemma}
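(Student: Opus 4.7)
The plan is to exploit the fact, stated in Section~\ref{sub:IN_intro}, that the simple offloading scheme without interference management is exactly the IN scheme with design parameter $U=0$. Consequently, $\mathcal{\bar{R}}_{U=0,k}(\tau) = \mathcal{\bar{R}}_{{\rm IN},k}(0,\tau)$ for every $k\in\{1,2\bar{O},2O\}$, and the three claims reduce to monotonicity statements of $\mathcal{\bar{R}}_{{\rm IN},k}(U,\tau)$ in the design parameter $U$, which are precisely what \emph{Lemma~\ref{lem:DeltaS_sign}} provides.

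First I would dispatch part~(ii). An unoffloaded pico-user is not targeted by any IN constraint, and its SIR in \eqref{eq:SINR_pico} involves the interfering macro-BS beamforming gains only through $|\mathbf{h}_{1,\ell0}^{\dagger}\mathbf{f}_{1,\ell}|^{2} \dis {\rm Gamma}(1,1)$, whose distribution is invariant to how many IN constraints $\mathbf{f}_{1,\ell}$ satisfies. Hence $\mathcal{S}_{{\rm IN},2\bar{O}}(\beta)$, and therefore $\mathcal{\bar{R}}_{{\rm IN},2\bar{O}}(\tau)$, is independent of $U$; this is also the content of item~ii) of \emph{Lemma~\ref{lem:DeltaS_sign}}. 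In particular $\mathcal{\bar{R}}_{{\rm IN},2\bar{O}}(\tau) = \mathcal{\bar{R}}_{U=0,2\bar{O}}(\tau)$ holds for all $U$.

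For parts~(i) and~(iii) I would write the telescoping identity
\begin{align}
\mathcal{\bar{R}}_{{\rm IN},k}(U,\tau) - \mathcal{\bar{R}}_{U=0,k}(\tau)
= \sum_{u=1}^{U}\Delta\mathcal{\bar{R}}_{{\rm IN},k}(u,\tau),\qquad k\in\{1,2O\},
\end{align}
with the convention that the empty sum (when $U=0$) is zero. Applying item~i) of \emph{Lemma~\ref{lem:DeltaS_sign}}, each summand on the right is strictly negative for $k=1$, so the left-hand side is strictly negative when $U\ge 1$ and equals zero when $U=0$, which yields~i). Similarly, applying item~iii) of \emph{Lemma~\ref{lem:DeltaS_sign}}, each summand is strictly positive for $k=2O$, so the left-hand side is strictly positive when $U\ge 1$ and equals zero when $U=0$, which yields~iii) (and in particular identifies $U=0$ as the unique case of equality, as stated).

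Since everything reduces to \emph{Lemma~\ref{lem:DeltaS_sign}}, there is no real obstacle in the proof of \emph{Lemma~\ref{lem:comp_INU0_eachuser}} itself; the only point worth flagging is the initial identification of the $U=0$ specialization of the IN scheme with the simple offloading scheme, which ensures that the telescoping argument is applied to the right baseline.
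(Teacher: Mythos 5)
Your proposal is correct and matches what the paper intends: the paper gives no explicit proof (``We can easily show the following lemma''), but its surrounding discussion interprets i)--iii) exactly via the $U=0$ specialization and the sign results of \emph{Lemma~\ref{lem:DeltaS_sign}}, and your telescoping sum $\mathcal{\bar{R}}_{{\rm IN},k}(U,\tau)-\mathcal{\bar{R}}_{{\rm IN},k}(0,\tau)=\sum_{u=1}^{U}\Delta\mathcal{\bar{R}}_{{\rm IN},k}(u,\tau)$ is the natural formalization, with the strict signs in \emph{Lemma~\ref{lem:DeltaS_sign}} correctly delivering the ``equality i.f.f.\ $U=0$'' clause (which, as you implicitly note, should refer to parts i) and iii) rather than ii), since ii) is an identity for all $U$). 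No gaps.
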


Now we compare the IN scheme under $U>0$ with the simple offloading scheme (i.e., $U=0$). \emph{Lemma \ref{lem:comp_INU0_eachuser}} can be interpreted below: i) the IN scheme achieves a smaller rate coverage probability for $u_{0}\in\mathcal{U}_{1}$, since the DoF used to serve $u_{0}$ are reduced by $\min\left(U,u_{2OC,0}\right)$; ii) the IN scheme achieves the same rate coverage probability of $u_{0}\in\mathcal{U}_{2\bar{O}}$ as the simple offloading scheme, since $\mathcal{\bar{R}}_{{\rm IN},2\bar{O}}(\tau)$ is independent of $U$; iii) the IN scheme achieves a larger rate coverage probability for $u_{0}\in\mathcal{U}_{2O}$, since $\min\left(U,u_{2OC,0}\right)$ DoF at the nearest macro-BS of $u_{0}$ are used to avoid dominant macro-interference to its $\min\left(U,u_{2OC,0}\right)$ IN offloaded users.

\subsubsection{Comparison with ABS}
Now, we compare the rate coverage probability of $u_{0}\in\mathcal{U}_{k}$ $(k\in\{1,2\bar{O},2O\})$ in the IN scheme with that in ABS, which is summarized in the following:
\begin{lemma}\label{lem:comp_INABS_eachuser}
i) A sufficient condition for $\mathcal{\bar{R}}_{{\rm IN},1}(U,\tau)>\mathcal{\bar{R}}_{{\rm ABS},1}(\eta,\tau)$ when $N_{1},U\to\infty$ with $\frac{U}{N_{1}}\to\kappa\in(0,1)$ and $\tau\to0$ is $\kappa<\eta$;
ii) the necessary and sufficient condition for $\mathcal{\bar{R}}_{{\rm IN},2\bar{O}}(\tau)>\mathcal{\bar{R}}_{{\rm ABS},2\bar{O}}(\eta,\tau)$ is $\frac{1}{{\rm E}\left[L_{0,2}\right]}>\frac{1-\eta}{{\rm E}\left[L_{0,2\bar{O}}\right]}$;
iii) a necessary condition for $\mathcal{\bar{R}}_{{\rm IN},2O}(U,\tau)>\mathcal{\bar{R}}_{{\rm ABS},2O}(\eta,\tau)$ is $\frac{1}{{\rm E}\left[L_{0,2}\right]}>\frac{\eta}{{\rm E}\left[L_{0,2O}\right]}$.
\end{lemma}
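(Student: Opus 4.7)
The plan is to treat the three parts separately, exploiting the structural commonalities and differences between the IN and ABS schemes for each user type.

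For part (ii), I would observe that $\mathcal{\bar{R}}_{{\rm IN},2\bar{O}}(\tau)$ and $\mathcal{\bar{R}}_{{\rm ABS},2\bar{O}}(\eta,\tau)$ are obtained by evaluating the same underlying function $\mathcal{S}_{{\rm IN},2\bar{O}}(\cdot)$ at different SIR thresholds. This is because an unoffloaded pico-user sees the same interference structure in both schemes---all macro-BSs and all non-serving pico-BSs remain active---and the same MRT precoder with $N_{2}$ DoF. The IN threshold is $f({\rm E}[L_{0,2}]\tau/W)$, while the ABS threshold is $\tilde{\beta}_{2\bar{O}}(\eta)=f({\rm E}[L_{0,2\bar{O}}]\tau/(W(1-\eta)))$. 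Since $\mathcal{S}_{{\rm IN},2\bar{O}}(\cdot)$ is strictly decreasing and $f(\cdot)$ is strictly increasing, $\mathcal{\bar{R}}_{{\rm IN},2\bar{O}}(\tau)>\mathcal{\bar{R}}_{{\rm ABS},2\bar{O}}(\eta,\tau)$ is equivalent to ${\rm E}[L_{0,2}]/W<{\rm E}[L_{0,2\bar{O}}]/(W(1-\eta))$, which rearranges to the claimed iff condition.

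For part (iii), I would first upper-bound $\mathcal{\bar{R}}_{{\rm IN},2O}(U,\tau)$ by $\mathcal{\bar{R}}_{{\rm IN},2OC}(\tau)$. Since a non-IN offloaded user suffers strictly more interference than an IN offloaded user---namely the dominant contribution from its nearest macro-BS---one has $\mathcal{\bar{R}}_{{\rm IN},2O\bar{C}}(\tau)\leq\mathcal{\bar{R}}_{{\rm IN},2OC}(\tau)$, and $\mathcal{\bar{R}}_{{\rm IN},2O}(U,\tau)$ is a convex combination of these two, so the bound is immediate. Next, I would compare $\mathcal{\bar{R}}_{{\rm IN},2OC}(\tau)$ with $\mathcal{\bar{R}}_{{\rm ABS},2O}(\eta,\tau)$. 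An ABS 2O user experiences no macro interference at all, whereas an IN 2OC user still sees interference from all non-serving macro-BSs, so at any common SIR threshold $\beta$ the ABS conditional coverage pointwise dominates the IN 2OC conditional coverage after integrating out $Y_{1}$ against $f_{Y_{1}|Y_{2}}(\cdot|y)$. If in addition the ABS threshold does not exceed the IN threshold, i.e., ${\rm E}[L_{0,2O}]/(W\eta)\leq{\rm E}[L_{0,2}]/W$, the two effects combine to give $\mathcal{\bar{R}}_{{\rm ABS},2O}(\eta,\tau)\geq\mathcal{\bar{R}}_{{\rm IN},2OC}(\tau)\geq\mathcal{\bar{R}}_{{\rm IN},2O}(U,\tau)$. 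The contrapositive shows that $\mathcal{\bar{R}}_{{\rm IN},2O}(U,\tau)>\mathcal{\bar{R}}_{{\rm ABS},2O}(\eta,\tau)$ forces ${\rm E}[L_{0,2O}]/(W\eta)>{\rm E}[L_{0,2}]/W$, i.e., $1/{\rm E}[L_{0,2}]>\eta/{\rm E}[L_{0,2O}]$.

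For part (i), I would invoke the concentration of the Gamma$(m,1)$ effective signal power around its mean $m$ as $m\to\infty$. Under this concentration the SIR coverage probability behaves heuristically like ${\rm Pr}(I<m/\beta)$ for appropriately rescaled interference $I$, with $m$ the signal DoF and $\beta$ the SIR threshold; thus $m/\beta$ governs the comparison. In IN the signal DoF is $N_{1}-u_{2OC,0}\geq N_{1}-U$ and $\beta=\hat{\beta}_{1}=2^{\tau{\rm E}[L_{0,1}]/W}-1$; in ABS they are $N_{1}$ and $\tilde{\beta}_{1}(\eta)=2^{\tau{\rm E}[L_{0,1}]/(W(1-\eta))}-1$, respectively. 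For small $\tau$, $\tilde{\beta}_{1}(\eta)/\hat{\beta}_{1}\to 1/(1-\eta)$, and IN pointwise dominates ABS whenever $(N_{1}-U)/\hat{\beta}_{1}>N_{1}/\tilde{\beta}_{1}(\eta)$, which reduces to $U/N_{1}<\eta$; taking $U/N_{1}\to\kappa$ gives the sufficient condition $\kappa<\eta$. The main obstacle will be making the Gamma concentration argument rigorous in the presence of random interference and random user-association geometry, and in justifying the interchange of the joint limits $\tau\to 0$ and $N_{1},U\to\infty$ with $U/N_{1}\to\kappa$. I expect this step to parallel the dominated-convergence style analysis of the terms $\mathcal{T}_{1,R_{11},R_{21}}(n,\cdot,\hat{\beta}_{1})$ used in the proof of \emph{Proposition \ref{prop:CPloss_macro}}, while tracking how the concentration interacts with the exponential density factors in $f_{Y_{1}}$.
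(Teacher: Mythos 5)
Your proposal is correct and follows essentially the same route as the paper: part (i) via Gamma concentration of the effective signal power plus the small-$\tau$ linearization $2^{x}-1\approx x\ln 2$ (the paper makes this rigorous by lower-bounding $\mathcal{\bar{R}}_{{\rm IN},1}$ with the worst case of $N_{1}-U$ signal DoF and citing the deterministic-equivalent analysis of \cite{wu12}); part (ii) by noting the SIR distributions coincide so only the MLA thresholds need comparing; and part (iii) by the stochastic dominance of the ABS offloaded-user SIR over the IN one, so the resource ratio must compensate. Your write-up of (ii) and (iii) is in fact more explicit than the paper's (which disposes of (ii) with ``similar to i)'' and of (iii) with a one-line dominance remark), and your flagged concern about rigor in (i) is exactly the step the paper delegates to the cited reference.
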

\begin{proof}
See Appendix \ref{proof:compABS_CPrate}. 
\end{proof}

Note that the rate coverage probability of $u_{0}\in\mathcal{U}_{k}$ $\left(k\in\{1,2\bar{O},2O\}\right)$ depends on both the SIR of $u_{0}$ and the average resource used to serve $u_{0}$. Thus, \emph{Lemma \ref{lem:comp_INABS_eachuser}} can be understood below: i) the IN scheme (with DoF fraction $1-\kappa$ and resource fraction $1$ for scheduled $u_{0}$) achieves a larger rate coverage probability for $u_{0}\in\mathcal{U}_{1}$ than ABS (with DoF fraction $1$ and resource fraction $1-\eta$ for scheduled $u_{0}$) if $\kappa<\eta$; ii) The IN scheme achieves a larger rate coverage probability for $u_{0}\in\mathcal{U}_{2\bar{O}}$ i.f.f. the average resource (i.e., $\frac{1}{{\rm E}\left[L_{0,2}\right]}$ under MLA) used to serve $u_{0}$ in the IN scheme is larger than that (i.e., $\frac{1-\eta}{{\rm E}\left[L_{0,2\bar{O}}\right]}$ under MLA) in ABS, as the SIRs of $u_{0}\in\mathcal{U}_{2\bar{O}}$ are the same in both schemes; iii) Note that the SIR of $u_{0}\in\mathcal{U}_{2O}$ in the IN scheme is worse than that in ABS, as $u_{0}\in\mathcal{U}_{2O}$ does not experience any macro-interference in ABS, while $u_{0}\in\mathcal{U}_{2O}$ still experiences macro-interference (except the dominant one) in the IN scheme. Hence, it is possible for the IN scheme to achieve a larger rate coverage probability for $u_{0}\in\mathcal{U}_{2O}$ only when the average resource (i.e., $\frac{1}{{\rm E}\left[L_{0,2}\right]}$ under MLA) used to serve $u_{0}\in\mathcal{U}_{2O}$ in the IN scheme is larger than that (i.e., $\frac{\eta}{{\rm E}\left[L_{0,2O}\right]}$ under MLA) in ABS.

Fig.\ \ref{fig:CPrateVSeta_ABS_IN} plots the rate coverage probability with MLA of the IN scheme at $U=7$, and the rate coverage probability with MLA of ABS vs. $\eta$. Note that under the parameters in Fig.\ \ref{fig:CPrateVSeta_ABS_IN}, we have: i) $\kappa=\frac{U}{N_{1}}=0.7$, ii) $1-\frac{{\rm E}\left[L_{0,2\bar{O}}\right]}{{\rm E}\left[L_{0,2}\right]}\approx 0.09$, and iii) $\frac{{\rm E}\left[L_{0,2O}\right]}{{\rm E}\left[L_{0,2}\right]}\approx 0.12$, with ${\rm E}\left[L_{0,2\bar{O}}\right]\approx28.57$, ${\rm E}\left[L_{0,2O}\right]\approx3.86$ and ${\rm E}\left[L_{0,2}\right]\approx31.43$ calculated according to \emph{Proposition \ref{prop:CPrateMLA_ABS}} and  \emph{Corollary \ref{cor:CPrate_MLA}}. From Fig.\ \ref{fig:CPrateVSeta_ABS_IN}, we observe that i) $\eta>0.7$ is sufficient to achieve $\mathcal{\bar{R}}_{{\rm IN},1}(7,\tau)>\mathcal{\bar{R}}_{{\rm ABS},1}(\eta,\tau)$; ii) $\mathcal{\bar{R}}_{{\rm IN},2\bar{O}}(\tau)>\mathcal{\bar{R}}_{{\rm ABS},2\bar{O}}(\eta,\tau)$ i.f.f. $\eta>0.09\approx1-\frac{{\rm E}\left[L_{0,2\bar{O}}\right]}{{\rm E}\left[L_{0,2}\right]}$; iii) $\mathcal{\bar{R}}_{{\rm IN},2O}(7,\tau)>\mathcal{\bar{R}}_{{\rm ABS},2O}(\eta,\tau)$ when $\eta<0.1<0.12\approx\frac{{\rm E}\left[L_{0,2O}\right]}{{\rm E}\left[L_{0,2}\right]}$. These observations verify \emph{Lemma \ref{lem:comp_INABS_eachuser}}.

\begin{figure}[t] \centering
\includegraphics[width=4.2in]{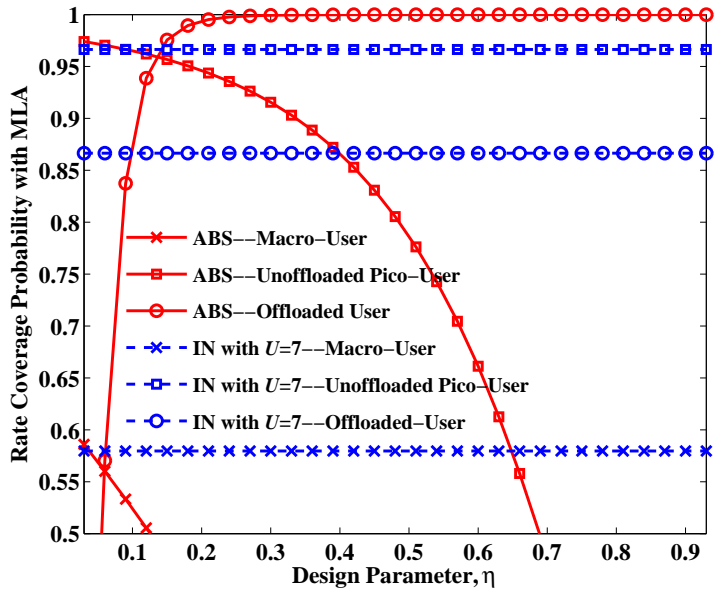}
\caption{\scriptsize Rate coverage probability vs. resource fraction $\eta$ for ABS and IN, at $\frac{P_{1}}{P_{2}}=13$ dB, $W=10$ MHz, $\tau=5\times 10^5$ bps, $N_{1}=10$, $N_{2}=8$, $\lambda_{1}=0.00008$ nodes/m$^{2}$, $\lambda_{2}=0.001$ nodes/m$^{2}$, $\lambda_{u}=0.03$ nodes/m$^{2}$, $\alpha_{1}=4.5$, $\alpha_{2}=4.7$, $B=4$ dB, $\mathcal{A}_{1}\approx0.21$, $\mathcal{A}_{2\bar{O}}\approx 0.72$, and $\mathcal{A}_{2O}\approx0.07$.}\label{fig:CPrateVSeta_ABS_IN}
\vspace{-6mm}
\end{figure}

\begin{figure}[t]
\centering
\subfigure[$N_{1}=8$, $N_{2}=6$, $\eta^{*}(\tau)=0.01$ at $B^{*}_{\rm ABS}$]{
\includegraphics[height=0.45\columnwidth,width=0.45\columnwidth]
{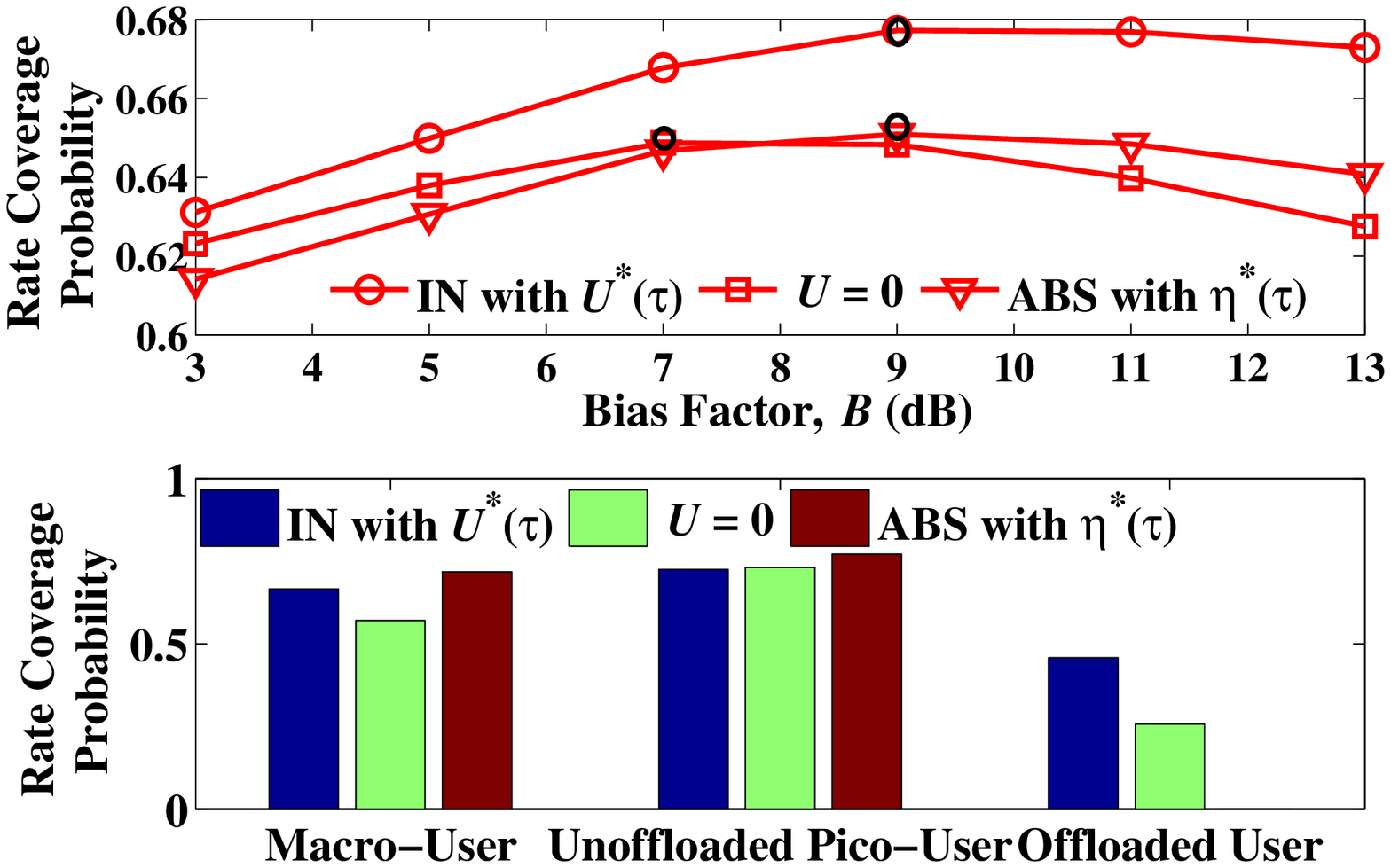}
\label{fig:N18_N26}
}
\subfigure[$N_{1}=18$, $N_{2}=16$, $\eta^{*}(\tau)=0.19$ at $B^{*}_{\rm ABS}$]{
\includegraphics[height=0.45\columnwidth,width=0.45\columnwidth]
{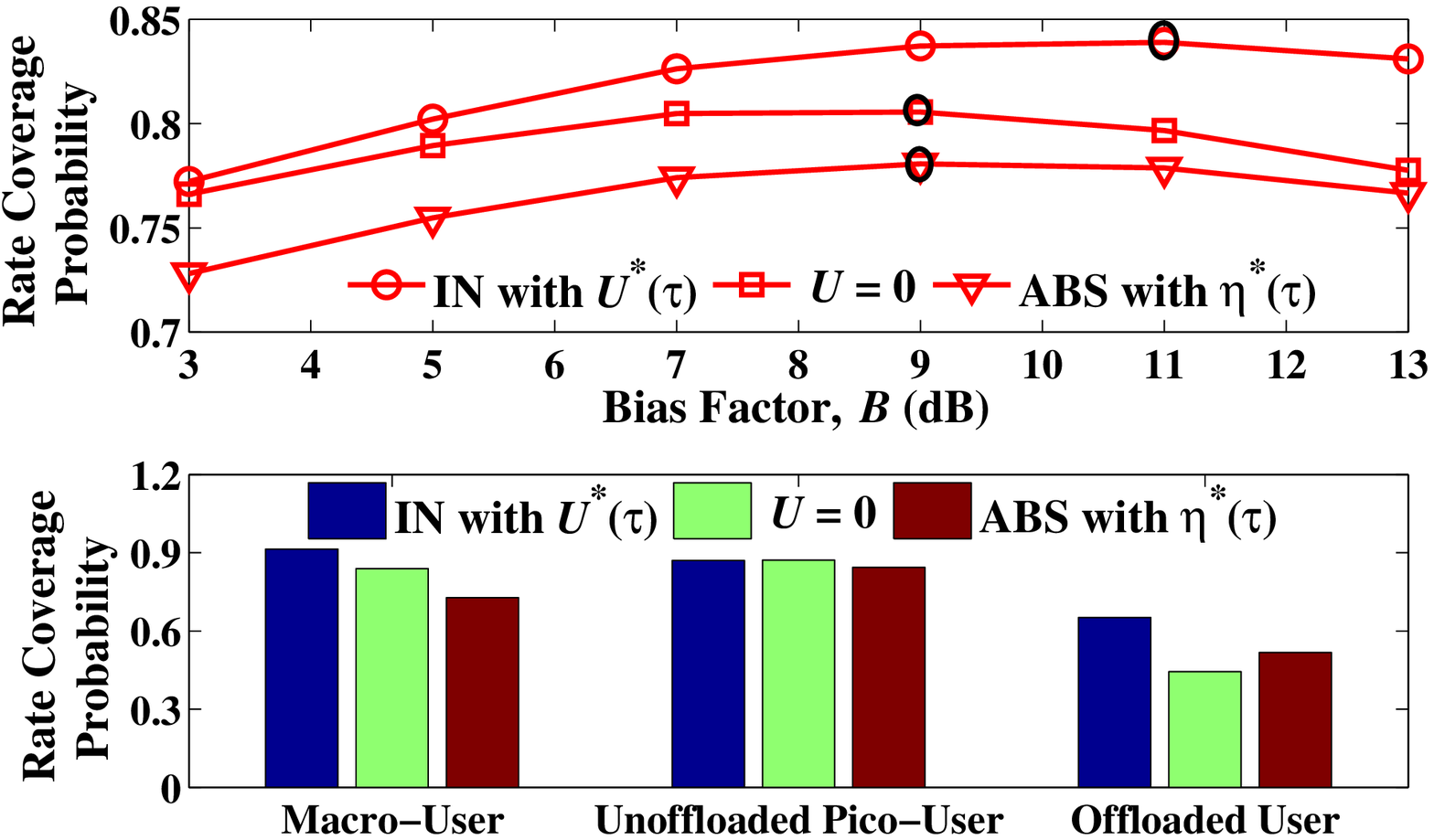}
\label{fig:N118_N216}
}
\caption{\scriptsize Rate coverage probability vs. bias factors $B$, at $\alpha_{1}=4.5$, $\alpha_{2}=4.7$, $\frac{P_{1}}{P_{2}}=13$ dB, $W=10\times 10^{6}$ Hz, $\tau=5\times 10^{5}$ bps, $\lambda_{1}=0.00008$ nodes/m$^{2}$, $\lambda_{2}=0.001$ nodes/m$^{2}$, and $\lambda_{u}=0.05$ nodes/m$^{2}$. In the figures on the top, the points at $B^{*}_{\rm IN}$, $B^{*}_{U=0}$, and $B^{*}_{\rm ABS}$ are highlighted using black ellipse. In the figures at the bottom, the rate coverage probability of each user type in different schemes are plotted at $B^{*}_{\rm IN}$, $B^{*}_{U=0}$, and $B^{*}_{\rm ABS}$, respectively. Note that $\eta^{*}(\tau)$ of ABS is obtained by bisection method with $N_{1}$ iterations, while $U^{*}(\tau)$ of the IN scheme is obtained by exhaustive search over $\{0,1,\ldots,N_{1}-1\}$. }\label{fig:CPratevsBdb_tau0p5MHz}
\vspace{-8mm}
\end{figure}

\subsection{Overall Rate Coverage Probability Comparison}
In this part, we compare the overall rate coverage probability of the \emph{IN scheme under its optimal design parameter $U^{*}(\tau)$} with those of the \emph{simple offloading scheme without interference management (i.e., $U=0$)} and the \emph{multi-antenna version of ABS under its optimal design parameter $\eta^{*}(\tau)\define {\rm arg}\:\max_{\eta\in(0,1)}\mathcal{R}_{\rm ABS}(\eta,\tau)$}. 

First, we compare the rate coverage probability of the IN scheme with that of the simple offloading scheme. Based on the discussions of \emph{Lemma \ref{lem:comp_INU0_eachuser}}, we know that the IN scheme has the benefit of avoiding the dominant macro-interference to the offloaded users. When $B$ is sufficiently large (implying that $\mathcal{A}_{2O}$ is sufficiently large), sufficient offloaded users can benefit from the avoidance of the dominant macro-interference (i.e., the benefit is large). On the other hand, we also know that the loss of the IN scheme compared to the simple offloading scheme is caused by the reduction of the DoF used to serve the macro-users (i.e., at most $\frac{U}{N_{1}}$ reduction of the DoF fraction at each macro-BS in the IN scheme). Thus, when $N_{1}$ is relatively large (implying that the DoF fraction reduction $\frac{U}{N_{1}}$ is minor), the loss due to the DoF reduction is small. Therefore, when $B$ and $N_{1}$ are relatively large (e.g., $B=9$ dB and $N_{1}=8$ in Fig.\ \ref{fig:N18_N26}), the IN scheme can achieve a larger rate coverage probability than the simple offloading scheme.   

Next, we compare the rate coverage probability of the IN scheme with that of ABS. Based on the discussions of \emph{Lemma \ref{lem:comp_INABS_eachuser}}, we know that the benefit of the IN scheme compared to ABS is that it does not have (time or frequency) resource sacrifice. On the other hand, we also know that one loss of the IN scheme compared to ABS is due to the $\frac{U}{N_{1}}$ DoF fraction reduction (as discussed above). Thus, when $N_{1}$ is relatively large (implying that the DoF fraction reduction $\frac{U}{N_{1}}$ is minor), the loss due to the DoF reduction is small. The other loss of the IN scheme compared to ABS is caused by the macro-interference (except the dominant one),  as the IN scheme only avoids the dominant macro-interference to the offloaded users, while ABS avoids all the macro-interference to the offloaded users. When $\alpha_{1}$ is relatively large (implying that the dominant macro-interference is sufficiently strong compared to the remaining macro-interference), the loss due to the remaining macro-interference is small. Therefore, when $N_{1}$ and $\alpha_{1}$ are relatively large (e.g., $N_{1}=8$ and $\alpha_1 = 4.5$ in Fig.\ \ref{fig:N18_N26}), the IN scheme can achieve a larger rate coverage probability than ABS.


The figures on the top of Fig.\ \ref{fig:CPratevsBdb_tau0p5MHz} plot the rate coverage probability vs. the bias factor $B$ for the IN scheme under $U^{*}(\tau)$, the simple offloading scheme, and ABS under $\eta^{*}(\tau)$. We see that the IN scheme achieves a larger rate coverage probability than both the simple offloading scheme and ABS when the bias factor $B$ is relatively large. \footnote{Note that the IN scheme may not provide gains in all scenarios, as suggested in Fig.\ \ref{fig:CPrateMLAvsU_diffB}.} In addition, we consider rate coverage probability maximization over $B$ for these three schemes. We observe that the IN scheme achieves a larger rate coverage probability than both the simple offloading scheme and ABS at their optimal bias factors. Denote the optimal bias factors of the IN scheme, simple offloading scheme and ABS as $B^{*}_{\rm IN}$, $B^{*}_{U=0}$ and $B^{*}_{\rm ABS}$, respectively. We have the following observations for $B^{*}_{\rm IN}$, $B^{*}_{U=0}$ and $B^{*}_{\rm ABS}$. Firstly, $B^{*}_{\rm IN}$, $B^{*}_{U=0}$ and $B^{*}_{\rm ABS}$ are all positive. This implies that the rate coverage probability can be improved by offloading users from the heavily loaded macro-cell tier to the lightly loaded pico-cell tier. Secondly, both $B^{*}_{\rm IN}$ and $B^{*}_{\rm ABS}$ can be larger than $B^{*}_{U=0}$. This implies that the IN scheme and ABS allow more users to be offloaded to the lightly loaded pico-cell tier than the simple offloading scheme, as the IN scheme and ABS can effectively improve the performance of the offloaded users.

We now further investigate the rate coverage probability of the offloaded users, which is one of the main limiting factors for the performance of HetNets with offloading. 
In the IN scheme, the offloaded users do not have (time or frequency) resource sacrifice and dominant macro-interference. However, the offloaded users in ABS suffer from resource limitations, and the offloaded users in the simple offloading scheme suffer from strong interference caused by their dominant macro-interfererence. Hence, the offloaded users in the IN scheme can achieve a larger rate coverage probability than those in both the simple offloading scheme and ABS (e.g., when $\alpha_{1}=4.5$ and $\eta(\tau)=0.01$ in Fig.\ \ref{fig:N18_N26}). The figures at the bottom of Fig.\ \ref{fig:CPratevsBdb_tau0p5MHz} plot the rate coverage probability of three user types at $B^{*}_{\rm IN}$, $B^{*}_{U=0}$, and $B^{*}_{\rm ABS}$, respectively. We can clearly see that the offloaded user in the IN scheme achieves the largest rate coverage probability.

\section{Conclusions}
In this paper, we investigated the IN scheme in downlink two-tier multi-antenna HetNets with offloading. Utilizing tools from stochastic geometry, we first derived a tractable expression for the rate coverage probability of the IN scheme. Then, we considered the rate coverage probability optimization of the IN scheme by solving the optimal design parameter. Finally, we analyzed the performance of the simple offloading scheme without interference management and the multi-antenna version of ABS, and compared the performance of the IN scheme with both of the two schemes in terms of the rate coverage probability of each user type and the overall rate coverage probability. Both the analytical and numerical results showed that the IN scheme can achieve good performance gains over both of the two schemes, especially in the large antenna regime.

\appendix
\subsection{Proof of Lemma \ref{lem:p.m.f._offload_random}}\label{proof:p.m.f._offload_random}
We first note that i) the total number of scheduled pico-users are the same with the total number of pico-BSs, ii) the association area of pico-BSs is $\mathcal{A}_{2}$ fraction of the total area, and iii) the scheduled pico-users are only in the association area of pico-BSs. Hence, the effective density of the scheduled pico-users is $\frac{\lambda_{2}}{\mathcal{A}_{2}}$. 
Next, we approximate the scheduled pico-users as a homogeneous PPP, so that the number of scheduled pico-users in a fixed area is Poisson distributed with density $\frac{\lambda_{2}}{\mathcal{A}_{2}}$. Note that similar approximation approaches are utilized in \cite{bai13,li14IC}. Obviously, the number of active offloaded users in a fixed area is also Poisson distributed with density $\frac{\lambda_{2}}{\mathcal{A}_{2}}$. Further, using the approach in \cite{singh13}, we can calculate the mean of the offloading area (where the offloaded users may reside) of a randomly selected macro-BS, which is $\frac{\mathcal{A}_{2O}}{\lambda_{1}}$. Finally, we obtain (\ref{eq:pmf_UB0}) by following similar steps in calculating the load p.m.f. in \cite{singh13,yu13}. Note that $\mathcal{A}_{2}$ and $\mathcal{A}_{2O}$ are given in \cite{jo12} and \cite{singh13}, respectively.  

\subsection{Proof of Lemma \ref{theo:CP_condi}}\label{proof:theo_CPcondi}
\subsubsection{$k\in\{1,2\bar{O},2OC\}$}
When $k\in\{1,2\bar{O},2OC\}$, based on (\ref{eq:SINR_marco}), (\ref{eq:SINR_pico}), and (\ref{eq:SINR_BC}), we have
{\small\begin{align}\label{eq:CPcondi_twoIterms}
&\mathcal{S}_{{\rm IN},k,R_{1k},R_{2k}}(r_{1k},r_{2k},\beta)={\rm E}_{I_{1},I_{2}}\left[{\rm Pr}\left(\left|\mathbf{h}_{j_{k},00}^{\dagger}\mathbf{f}_{j_{k},0}\right|^{2}>\beta Y_{j_{k}}^{\alpha_{j_{k}}}\left(\frac{P_{1}}{P_{j_{k}}}I_{1}+\frac{P_{2}}{P_{j_{k}}}I_{2}\right)\right)\right]\notag\\
\eqla&\sum_{n=0}^{M_{k}-1}\frac{\left(-\beta Y_{j_k}^{\alpha_{j_k}}\right)^{n}}{n!}\sum_{n_{1}=0}^{n} \binom{n}{n_{1}}\left(\frac{P_{1}}{P_{j_k}}\right)^{n_{1}}\mathcal{L}_{I_{1}}^{(n_{1})}\left(s,r_{1k}\right)\Big|_{s=\beta Y_{j_k}^{\alpha_{j_k}}\frac{P_{1}}{P_{j_k}}} \left(\frac{P_{2}}{P_{j_k}}\right)^{n-n_{1}}\mathcal{L}_{I_{2}}^{(n-n_{1})}\left(s,r_{2k}\right)\Big|_{s=\beta Y_{j_k}^{\alpha_{j_k}}\frac{P_{2}}{P_{j_k}}}
\end{align}}where $(a)$ is obtained by noting that {\small$\left|\mathbf{h}_{j_{k},00}^{\dagger}\mathbf{f}_{j_{k},0}\right|^{2}\dis{\rm Gamma}(M_{k},1)$}, using binomial theorem, and noting that {\small${\rm E}_{I_{j}}\left[I_{j}^{n}\exp\left(-s I_{j}\right)\right]=(-1)^{n}\mathcal{L}_{I_{j}}^{(n)}\left(s,r_{jk}\right)$}.

We now calculate the Laplace transform $\mathcal{L}_{I_{1}}\left(s,r_{1k}\right)$ and its higher order derivative $\mathcal{L}_{I_{1}}^{(m)}\left(s,r_{1k}\right)$. Firstly, let $G_{1,\ell}\define\left|\mathbf{h}_{1,\ell0}^{\dagger}\mathbf{f}_{1,\ell}\right|^{2}$. Then, $\mathcal{L}_{I_{1}}\left(s,r_{1k}\right)$ can be calculated as follows: 
{\small\begin{align}\label{eq:LT}
\mathcal{L}_{I_{1}}\left(s,r_{1k}\right)
=&{\rm E}_{\Phi(\lambda_{1})}\left[\prod_{\ell\in\Phi\left(\lambda_{1}\right)\backslash B\left(0,r_{1k}\right)}{\rm E}_{G_{1,\ell}}\left[\exp\left(-s\frac{1}{\left|D_{1,\ell0}\right|^{\alpha_{j}}}G_{1,\ell}\right)\right]\right]\notag\\
\eqla&\exp\left(-2\pi\lambda_{1}\int_{r_{1k}}^{\infty}\left(1-\frac{1}{1+sr^{-\alpha_{1}}}\right)r{\rm d}r\right)\notag\\
\eqlb&\exp\left(-\frac{2\pi}{\alpha_{1}}\lambda_{1}s^{\frac{2}{\alpha_{1}}}\int_{\frac{1}{1+sr_{1k}^{-\alpha_{1}}}}^{1}\left(1-w\right)^{-\frac{2}{\alpha_{1}}}w^{-1+\frac{2}{\alpha_{1}}}{\rm d}w\right)
\end{align}}where $(a)$ is obtained by utilizing the probability generating functional of PPP \cite{haenggi09}, $(b)$ is obtained by first replacing $s^{-\frac{1}{\alpha_{1}}}r$ with $t$, and then replacing $\frac{1}{1+t^{-\alpha_{1}}}$ with $w$. 

Next, we calculate $\mathcal{L}_{I_{1}}^{(m)}(s,r_{1k})$ based on (\ref{eq:LT}). Utilizing Fa${\rm \grave{a}}$ di Bruno's formula \cite{johnson02}, we have
{\small\begin{align}\label{eq:LT_diff_condi}
\mathcal{L}_{I_{1}}^{(m)}\left(s,r_{1k}\right)
&\hspace{0mm}=\sum_{(p_{a})_{a=1}^{m}\in\mathcal{M}_{m}}\frac{\mathcal{L}_{I_{1}}\left(s,r_{1k}\right)m!}{\prod_{a=1}^{m}\left(p_a!(a!)^{m_{a}}\right)}\prod_{a=1}^{m}\left(-2\pi\lambda_{1}\int_{r_{1k}}^{\infty}\left(-\frac{(-1)^{a}\Gamma\left(1+a\right)}{r_{1k}^{a\alpha_{1}}\left(1+s r_{1k}^{-\alpha_{1}}\right)^{a+1}}\right)r{\rm d}r\right)^{p_{a}}
\end{align}}where the integral can be solved using similar method as calculating (\ref{eq:LT}). Similarly, we can calculate $\mathcal{L}_{I_{2}}\left(s,r_{2k}\right)$ and its higher order derivative $\mathcal{L}_{I_{2}}^{(m)}\left(s,r_{2k}\right)$. Finally, after some algebraic manipulations, we can obtain $\mathcal{S}_{{\rm IN},k,R_{1k},R_{2k}}(r_{1k},r_{2k},\beta)$ where $k\in\{1,2\bar{O},2OC\}$.

\subsubsection{$k=2O\bar{C}$}
When $k=2O\bar{C}$, based on (\ref{eq:SIR_BbarC_v2}), using multinomial theorem, and following similar procedures in calculating (\ref{eq:CPcondi_twoIterms}), 
we can obtain $\mathcal{S}_{{\rm IN}, k,R_{1k},R_{2k}}(r_{1k},r_{2k},\beta)$.

\subsection{Proof of Lemma \ref{lem:DeltaS_sign}}\label{proof:DeltaS_sign}
\subsubsection{Proof of $\Delta\mathcal{\bar{R}}_{{\rm IN},1}(U,\tau)<0$}
When the design parameter is $U$, we have 
{\small\begin{align}\label{eq:CP_macro_U}
\mathcal{\bar{R}}_{{\rm IN},1}(U,\tau)=\int_{0}^{\infty}\left(\sum_{u=0}^{U}\left(\sum_{n=0}^{N_1-u-1}\mathcal{T}_{1,Y_{1}}(n,y,\hat{\beta})\right){\rm Pr}\left(u_{2OC,0}\left(U\right)=u\right)\right)f_{Y_{1}}(y){\rm d}y
\end{align}}where {\small${\rm Pr}\left(u_{2OC,0}\left(U\right)=u\right)=
\begin{cases}
&{\rm Pr}\left(U_{2O_{a},0}=u\right),\hspace{1.05cm} {\rm for}\; 0\le u<U\\
&\sum_{u=U}^{\infty} {\rm Pr}\left(U_{2O_{a},0}=u\right), \quad {\rm for}\; u=U
\end{cases}$}, and {\small$\hat{\beta}=2^{\frac{{\rm E}\left[L_{0,1}\right]\tau}{W}}-1$}.

Similarly, when the design parameter is $U-1$, we have 
{\small\begin{align}\label{eq:CP_macro_U-1}
\mathcal{\bar{R}}_{{\rm IN},1}(U-1,\tau)=\int_{0}^{\infty}\left(\sum_{u=0}^{U-1}\left(\sum_{n=0}^{N_1-u-1}\mathcal{T}_{1,Y_{1}}(n,y,\hat{\beta})\right){\rm Pr}\left(u_{2OC,0}\left(U-1\right)=u\right)\right)f_{Y_{1}}(y){\rm d}y
\end{align}}where {\small${\rm Pr}\left(u_{2OC,0}\left(U-1\right)=u\right)=
\begin{cases}
&{\rm Pr}\left(U_{2O_{a},0}=u\right),\hspace{1.05cm} {\rm for}\; 0\le u<U-1\\
&\sum_{u=U-1}^{\infty} {\rm Pr}\left(U_{2O_{a},0}=u\right), \quad {\rm for}\; u=U-1
\end{cases}$}. 

Based on (\ref{eq:CP_macro_U}) and (\ref{eq:CP_macro_U-1}), and after some algebraic manipulations, we have
{\small\begin{align}\label{eq:deltaS1_general}
\Delta\mathcal{\bar{R}}_{{\rm IN},1}(U,\tau)
=&-\left(1-\sum_{u=0}^{U-1}{\rm Pr}\left(U_{2O_{a},0}=u\right)\right)\int_{0}^{\infty}\mathcal{T}_{1,y}(N_{1}-U,y,\hat{\beta})f_{Y_{1}}(y){\rm d}y<0\;.
\end{align}}\subsubsection{Proof of $\Delta\mathcal{\bar{R}}_{{\rm IN},2\bar{O}}(\tau)=0$} Follows by noting that $\mathcal{\bar{R}}_{{\rm IN},2\bar{O}}(\tau)$ is independent of $U$.
\subsubsection{Proof of $\Delta\mathcal{\bar{R}}_{{\rm IN},2O}(U,\tau)>0$}
We first show that 
${\rm Pr}\left(\mathcal{E}_{2OC,0}(U-1)\right)<{\rm Pr}\left(\mathcal{E}_{2OC,0}(U)\right)$, which is as follows:
{\small\begin{align}
{\rm Pr}\left(\mathcal{E}_{2OC,0}(U-1)\right)&=\sum_{n=1}^{U-1}{\rm Pr}\left(\hat{U}_{2O_{a},0}=n\right)+\sum_{n=U}^{\infty}\frac{U-1}{n}{\rm Pr}\left(\hat{U}_{2O_{a},0}=n\right)\notag\\
&<\sum_{n=1}^{U-1}{\rm Pr}\left(\hat{U}_{2O_{a},0}=n\right)+{\rm Pr}\left(\hat{U}_{2O_{a},0}=U\right)+\sum_{n=U+1}^{\infty}\frac{U}{n}{\rm Pr}\left(\hat{U}_{2O_{a},0}=n\right)\notag\\
&={\rm Pr}\left(\mathcal{E}_{2OC,0}(U)\right)
\end{align}}where the inequality is obtained by noting that $\frac{U-1}{n}<\frac{U}{n}$ ($n\in \mathbb{N}$). Next, we show that $\mathcal{\bar{R}}_{{\rm IN},2OC}(U,\tau)>\mathcal{\bar{R}}_{{\rm IN},2O\bar{C}}(U,\tau)$. From (\ref{eq:SINR_BC}) and (\ref{eq:SIR_BbarC_v2}), we note that for any network and channel realizations, 
since $\frac{\left|\mathbf{h}_{1,10}^{\dagger}\mathbf{f}_{1,1}\right|^{2}}{Y_{1}^{\alpha_{1}}}>0$, we always have ${\rm SIR}_{{\rm IN},2OC,0}>{\rm SIR}_{{\rm IN},2O\bar{C},0}$. Hence, we have ${\rm Pr}\left({\rm SIR}_{{\rm IN},2OC,0}>\beta\right)>{\rm Pr}\left({\rm SIR}_{{\rm IN},2O\bar{C},0}>\beta\right)$, i.e., $\mathcal{\bar{R}}_{{\rm IN},2OC}(U,\tau)>\mathcal{\bar{R}}_{{\rm IN},2O\bar{C}}(U,\tau)$. Finally, since {\small$\Delta\mathcal{\bar{R}}_{{\rm IN},2O}(U,\tau)=\left({\rm Pr}\left(\mathcal{E}_{2OC,0}(U)\right)-{\rm Pr}\left(\mathcal{E}_{2OC,0}(U-1)\right)\right)\left(\mathcal{\bar{R}}_{{\rm IN},2OC}(\tau)-\mathcal{\bar{R}}_{{\rm IN},2O\bar{C}}(\tau)\right)$}, we obtain $\Delta\mathcal{\bar{R}}_{{\rm IN},2O}(U,\tau)>0$.


\subsection{Proof of Lemma \ref{lem:condiCP_lowbeta}}\label{proof:Tn_lowbeta}
Firstly, let {\small$\hat{\beta}=2^{\frac{{\rm E}\left[L_{0,j_{k}}\right]\tau}{W}}-1$}. 
It can be easily seen that $\hat{\beta}\to0$ when $\tau\to0$. Then, we investigate the asymptotic behavior of {\small$\mathcal{T}_{k,R_{1k},R_{2k}}\left(n,r_{1k},r_{2k},\hat{\beta}\right)$} when $\hat{\beta}\to0$. We note that 
\begin{align}
B^{'}(a,b,z)=\frac{(1-z)^{b}}{b}+o\left((1-z)^{b}\right)\;,\quad{\rm as}\; z\to1\;.
\end{align}
Then, we have 
{\small\begin{align}
B^{'}\left(\frac{2}{\alpha},1-\frac{2}{\alpha},\frac{1}{1+c\hat{\beta}}\right)&= \frac{\left(c\hat{\beta}\right)^{1-\frac{2}{\alpha}}}{1-\frac{2}{\alpha}}+o\left(\hat{\beta}^{1-\frac{2}{\alpha}}\right)\;,\\
B^{'}\left(1+\frac{2}{\alpha},a-\frac{2}{\alpha},\frac{1}{1+c\hat{\beta}}\right)&= \frac{(c\hat{\beta})^{a-\frac{2}{\alpha}}}{a-\frac{2}{\alpha}}+o\left(\hat{\beta}^{a-\frac{2}{\alpha}}\right)\;,
\end{align}}where $c\in\mathbb{R}^{+}$. Based on these two asymptotic expressions, and let $s=\tilde{c}\hat{\beta}$ ($\tilde{c}\in\mathbb{R}^{+}$) in $\mathcal{L}_{I_{j}}\left(s,r_{jk}\right)$ and $\mathcal{L}^{(m)}_{I_{j}}\left(s,r_{jk}\right)$, we can obtain 
{\small\begin{align}
\mathcal{L}_{I_{j}}\left(s,r_{jk}\right)&=1-\frac{2\pi\lambda_{j}\tilde{c}r_{jk}^{2-\alpha_{j}}}{\alpha_{j}\left(1-\frac{2}{\alpha_{j}}\right)}\hat{\beta}+o\left(\hat{\beta}\right)\;,\\ 
\mathcal{L}_{I_{j}}^{(m)}\left(s,r_{jk}\right)&=\hat{\beta}^{m}\left(\frac{\tilde{c}}{r_{jk}^{\alpha_{j}}}\right)^{m}\sum_{(p_{a})_{a=1}^{m}\in\mathcal{M}_{m}}\frac{m!}{\prod_{a}^{m}(p_{a}!)}\prod_{a=1}^{m}\left(\frac{2\pi\lambda_{j}r_{jk}^{2}}{\alpha_{j}\left(a-\frac{2}{\alpha_{j}}\right)}\right)^{p_{a}}+o\left(\hat{\beta}^{m}\right)\;. 
\end{align}}Moreover, when $u_{0}\in\mathcal{U}_{2O\bar{C}}$, we have {\small$\left(1+\hat{\beta}\frac{P_{1}Y_{2}^{\alpha_{2}}}{P_{2}Y_{1}^{\alpha_{1}}}\right)^{-\left(q_{3}+1\right)}=1-(q_{3}+1)\frac{P_{1}Y_{2}^{\alpha_{2}}}{P_{2}Y_{1}^{\alpha_{1}}}\hat{\beta}+o(\hat{\beta})$}. Substituting the series expansions of {\small$\mathcal{L}_{I_{j}}\left(s,r_{jk}\right)$}, {\small$\mathcal{L}^{(m)}_{I_{j}}\left(s,r_{jk}\right)$}, and {\small$\left(1+\hat{\beta}\frac{P_{1}Y_{2}^{\alpha_{2}}}{P_{2}Y_{1}^{\alpha_{1}}}\right)^{-\left(q_{3}+1\right)}$} into $\mathcal{T}_{k,R_{1k},R_{2k}}\left(n,r_{1k},r_{2k},\hat{\beta}\right)$, and after some algebraic manipulations, we have the final result.

\subsection{Proof of Proposition \ref{prop:CPgain_offloadIN}}\label{proof:gain_offload_lowbeta}
When $u_{0}\in\mathcal{U}_{2OC}$, since {\small$\left|\mathbf{h}_{2,00}^{\dagger}\mathbf{f}_{2,0}\right|^{2}\dis{\rm Gamma}(N_{2},1)$}, we have
{\small\begin{align}
1-\mathcal{\bar{R}}_{{\rm IN},2OC}(\tau)
&=\tau^{N_{2}}\frac{\left({\rm E}\left[L_{0,2}\right]\ln(2)\right)^{N_{2}}}{W^{N_{2}}N_{2}!}{\rm E}\left[Y_{2}^{\alpha_{2}}\left(\frac{P_{1}}{P_{2}}I_{1}+I_{2}\right)^{N_{2}}\right]+o\left(\tau^{N_{2}}\right)\notag\\
&=\tau^{N_{2}}\frac{\left({\rm E}\left[L_{0,2}\right]\ln(2)\right)^{N_{2}}}{W^{N_{2}}N_{2}!}\sum_{n=0}^{N_{2}}\binom{N_{2}}{n}\left(\frac{P_{1}}{P_{2}}\right)^{n}{\rm E}\left[\left(Y_{2}^{\alpha_{2}}I_{1}\right)^{n}\right]{\rm E}\left[\left(Y_{2}^{\alpha_{2}}I_{2}\right)^{N_{2}-n}\right]+o\left(\tau^{N_{2}}\right)\;.
\end{align}}In order to show that $1-\mathcal{\bar{R}}_{{\rm IN},2OC}(\tau)=\Theta\left(\tau^{N_{2}}\right)$, we need to show that {\small${\rm E}\left[\left(Y_{2}^{\alpha_{2}}I_{1}\right)^{n}\right]<\infty$} and {\small${\rm E}\left[\left(Y_{2}^{\alpha_{2}}I_{2}\right)^{N_{2}-n}\right]<\infty$}. This can be proved by noting that {\small${\rm E}\left[\left(Y_{2}^{\alpha_{2}}I_{2}\right)^{N_{2}-n}\right]<\infty$} \cite{haenggi14}, and {\small${\rm E}\left[\left(Y_{2}^{\alpha_{2}}I_{1}\right)^{n}\right]\stackrel{(a)}{<}\left(\frac{BP_{2}}{P_{1}}\right)^{n}{\rm E}\left[\left(Y_{1}^{\alpha_{1}}I_{1}\right)^{n}\right]<\infty$} \cite{haenggi14} where (a) is obtained by following {\small$Y_{2}^{\alpha_{2}}<\frac{BP_{2}}{P_{1}}Y_{1}^{\alpha_{1}}$} when $u_{0}\in\mathcal{U}_{2OC}$. Similarly, when $u_{0}\in\mathcal{U}_{2O\bar{C}}$, we have
{\small\begin{align}\label{eq:OPBbarC_lowbeta}
1-\mathcal{\bar{R}}_{{\rm IN},2O\bar{C}}(\tau)&=\tau^{N_{2}}\frac{\left({\rm E}\left[L_{0,2}\right]\ln(2)\right)^{N_{2}}}{W^{N_{2}}N_{2}!}{\rm E}\left[Y_{2}^{\alpha_{2}}\left(\frac{P_{1}}{P_{2}}I_{1}+I_{2}+\frac{1}{Y_{1}^{\alpha_{1}}}\frac{P_{1}}{P_{2}}g_{1,1}\right)^{N_{2}}\right]+o\left(\tau^{N_{2}}\right)=\Theta\left(\tau^{N_{2}}\right)\;.
\end{align}}Finally, by noting that {\small$\frac{1}{N_{2}!}{\rm E}\left[Y_{2}^{\alpha_{2}}\left(\frac{P_{1}}{P_{2}}I_{1}+I_{2}+\frac{1}{Y_{1}^{\alpha_{1}}}\frac{P_{1}}{P_{2}}g_{1,1}\right)^{N_{2}}\right]>\frac{1}{N_{2}!}{\rm E}\left[Y_{2}^{\alpha_{2}}\left(\frac{P_{1}}{P_{2}}I_{1}+I_{2}\right)^{N_{2}}\right]$}, we have $\mathcal{\bar{R}}_{{\rm IN},2OC}(\tau)-\mathcal{\bar{R}}_{{\rm IN},2O\bar{C}}(\tau)=\Theta\left(\tau^{N_{2}}\right)$. Moreover, since {\small${\rm Pr}\left(\mathcal{E}_{2OC,0}(U)\right)-{\rm Pr}\left(\mathcal{E}_{2OC,0}(U-1)\right)$} is independent of $\tau$, we obtain the final result.  


%
%

\subsection{Proof of Lemma \ref{lem:comp_INABS_eachuser}}\label{proof:compABS_CPrate}
\subsubsection{Proof of i)}
First, assuming that $N_{1}-U$ DoF are used for IN, we obtain a lower bound of $\mathcal{\bar{R}}_{{\rm IN},1}(U,\tau)$, denoted as $\mathcal{\bar{R}}^{\rm lb}_{{\rm IN},1}(U,\tau)$. Following similar procedures in \cite[Appendix B]{wu12}, we have the following result for $\mathcal{\bar{R}}^{\rm lb}_{{\rm IN},1}(U,\tau)$ when $N_{1},U\to\infty$ with $\frac{U}{N_{1}}\to\kappa\in(0,1)$ and $\tau\to0$:  
{\small\begin{align}\label{eq:CPrate1_IN}
\mathcal{\bar{R}}_{{\rm IN},1}^{\rm lb}(U,\tau)&\approx{\rm Pr}\left(\frac{\frac{P_{1}}{Y_{1}^{\alpha_{1}}}\frac{N_{1}\left(1-\kappa\right)}{2^{\frac{\tau}{W} {\rm E}\left[L_{0,1}\right]}-1}}{P_{1}I_{1}+P_{2}I_{2}}>1\right)\appa {\rm Pr}\left(\frac{P_{1}}{Y_{1}^{\alpha_{1}}\left(P_{1}I_{1}+P_{2}I_{2}\right)}>\frac{{\rm ln}(2)\tau{\rm E}\left[L_{0,1}\right]}{WN_{1}\left(1-\kappa\right)}\right)
\end{align}}where $(a)$ is obtained by noting that $2^{\frac{\tau}{W}{\rm E}\left[L_{0,1}\right]}\approx1+{\rm ln}(2){\rm E}\left[L_{0,1}\right]\frac{\tau}{W}$ as $\tau\to0$.


Similarly, for ABS, when $\tau\to0$, we have the following:
{\small\begin{align}\label{eq:CPrate1_ABS}
\mathcal{\bar{R}}_{{\rm ABS},1}(\tau)\approx {\rm Pr}\left(\frac{P_{1}}{Y_{1}^{\alpha_{1}}\left(P_{1}I_{1}+P_{2}I_{2}\right)}>\frac{{\rm ln}(2)\tau{\rm E}\left[L_{0,1}\right]}{WN_{1}\left(1-\eta\right)}\right)\;.
\end{align}}From (\ref{eq:CPrate1_IN}) and (\ref{eq:CPrate1_ABS}), we see that $\mathcal{\bar{R}}_{{\rm IN},1}^{\rm lb}(U,\tau)>\mathcal{\bar{R}}_{{\rm ABS},1}(\tau)$, which is a sufficient condition of $\mathcal{\bar{R}}_{{\rm IN},1}(U,\tau)>\mathcal{\bar{R}}_{{\rm ABS},1}(\eta,\tau)$, i.f.f. $\frac{1}{1-\kappa}<\frac{1}{1-\eta}$. After some manipulations, we have the final result. 

\subsubsection{Proof of ii)}
The proof is similar to that of i). 

\subsubsection{Proof of iii)}
We note that the SIR of $u_{0}\in\mathcal{U}_{2O}$ in the IN scheme is worse than that in ABS. Hence, in order to achieve $\mathcal{\bar{R}}_{{\rm IN},2O}(U,\tau)>\mathcal{\bar{R}}_{{\rm ABS},2O}(\eta,\tau)$, it is necessary that the average resource used to serve $u_{0}\in\mathcal{U}_{2O}$ in the IN scheme (i.e., $\frac{1}{{\rm E}\left[L_{0,2}\right]}$ under MLA) is lager than that in ABS (i.e., $\frac{\eta}{{\rm E}\left[L_{0,2O}\right]}$ under MLA). 

\end{document}